\newcommand{\setS}{\mathcal{S}}
\newcommand{\setT}{\mathcal{T}}
\newcommand{\setF}{\mathcal{F}}
\newcommand{\setE}{\mathcal{E}}
\newcommand{\setX}{\mathcal{X}}
\newcommand{\setY}{\mathcal{Y}}
\newcommand{\bigL}{\mathcal{L}}
\newcommand{\setP}{\mathcal{P}}
\DeclareSIUnit \belm {Bm}
\newcommand{\numdBm}[1]{\qty{#1}{\deci\belm}} %[per-mode = symbol]
\newcommand{\N}{\ensuremath{\mathbb{N}}}
\newcommand{\R}{\ensuremath{\mathbb{R}}}  % Real numbers
\newcommand{\defeq}{\ensuremath{\triangleq}} 
\Crefname{equation}{Eq.}{Eqs.}
\Crefname{figure}{Fig.}{Figs.}
	\def\ltx@label#1{\cref@label{#1}}%add braces
	\def\label@in@display@noarg#1{\cref@old@label@in@display{#1}}%remove braces
	\def\label@in@mmeasure@noarg#1{%
		\begingroup%
		\measuring@false%
		\cref@old@label@in@display{#1}%remove braces for multline, see https://tex.stackexchange.com/q/737204/2388
		\endgroup}%  
\newtheorem{lemma}{Lemma}
\newtheorem{definition}{Definition}
\newtheorem{proposition}{Proposition}
\newtheorem{assumption}{Assumption}
\newtheorem{theorem}{Theorem}
\newtheorem{remark}{Remark}
\newcommand{\Ccolor}{\color{black}}
\newcommand{\Cblack}{\color{black}}
\begin{document}
	\bstctlcite{MyBSTcontrol}
	\title{
		Joint Communication Scheduling and Resource Allocation for  Distributed Edge Learning:
		Seamless Integration in Next-Generation Wireless Networks
	}
	
	\author{
		Paul Zheng, Navid Keshtiarast, Pradyumna Kumar Bishoyi, Yao Zhu,\\ Yulin Hu, Marina~Petrova, and Anke~Schmeink
		\thanks{The work of Y. Hu was supported in part by the National Key R\&D Program of China under Grant 2023YFE0206600, National Natural Science Foundation of China under Grant 62471341, Hubei Provincial Science and Technology Cooperation Project under Grant 2025EHA040, and by the Fundamental Research Funds for the Central Universities under Grant 2042025KF0039.
        The work of P. Zheng, N. Keshtiarast, P. K. Bishoyi, Y. Zhu, M. Petrova, and A.
Schmeink is by the BMFTR Germany in the project ``Open6GHub"
under grant 16KISK012.
Y. Zhu and Y. Hu are the corresponding authors. Part of this work has been presented in IEEE ICC 2025~\cite{Zheng_conf_ICC_2025}.}
        
		\thanks{
			P. Zheng and A. Schmeink are
			  with Chair of Information Theory and Data Analytics, 
			  RWTH Aachen University, D-52074 Aachen, Germany,   
			(email: $zheng|schmeink$@inda.rwth-aachen.de). 
			}
		\thanks{N. Keshtiarast and M. Petrova are with Mobile Communications and Computing Group, RWTH Aachen University,
			D-52062 Aachen, Germany,   
			(email: $navid.keshtiarast|petrova$@mcc.rwth-aachen.de).}
            \thanks{P. K. Bishoyi is with Department of Electrical Engineering, Indian Institute of Technology Jodhpur, Rajasthan, India, 342030, (email: $pradyumna$@iitj.ac.in).}
		\thanks{
			Y. Zhu and Y. Hu are with School of Electronic Information, Wuhan
			University, 430072, China (email: $yao.zhu|yulin.hu$@whu.edu.cn).
			}
            \vspace{-1cm}
		}

	\IEEEoverridecommandlockouts
	\IEEEpubid{\begin{minipage}{\textwidth}\ \\[50pt]
			{\copyright 2025 IEEE. Personal use of this material is permitted.  Permission from IEEE must be obtained for all other uses, in any current or future media, including reprinting/republishing this material for advertising or promotional purposes, creating new collective works, for resale or redistribution to servers or lists, or reuse of any copyrighted component of this work in other works. Citation information: DOI 10.1109/TWC.2025.3635227}
	\end{minipage}} 
	
	\maketitle

\acrodef{DL}{distributed edge learning}
\acrodef{HB}{high bandwidth}
\acrodef{FL}{federated learning}
\acrodef{IoT}{Internet-of-Things}
\acrodef{CR}{communication round}
\acrodef{UE}{user equipment}
\acrodef{WFL}{wireless FL}
\acrodef{CS}{communication scheduling}
\acrodef{JCSRA}{joint communication scheduling and resource allocation}
\acrodef{RB}{resource block}
\acrodef{QoS}{quality of service}
\acrodef{BS}{base station}
\acrodef{MM}{Majorization-Minimization}
\acrodef{KKT}{Karush–Kuhn–Tucker}
\acrodef{LHS}{left hand side}
\acrodef{LP}{linear programming}
\acrodef{DVFS}{dynamic voltage and frequency scaling}

	\begin{abstract}
		Distributed edge learning (DL) is considered a cornerstone of intelligence enablers, since it allows for collaborative training without the necessity for local clients to share raw data with other parties, thereby preserving privacy and security. 
		Integrating DL into the 6G networks requires a coexistence design with existing services such as high-bandwidth (HB) traffic like eMBB. Current designs in the literature mainly focus on communication round-wise designs that assume a rigid resource allocation throughout each communication round (CR). 
        However, rigid resource allocation within a CR is a highly inefficient and inaccurate representation of the system's realistic behavior, especially when CR duration far exceeds the channel coherence time due to large model size or limited resources. 
        This is due to the heterogeneous nature of the system, as clients inherently may need to access the network at different time instants.  
        This work zooms into one arbitrary CR, and demonstrates the importance of considering a time-dependent design for sharing the resource pool with HB traffic. 
        We first formulate a time-slot-wise optimization problem to minimize the consumed time by DL within the CR while constrained by a DL energy budget. 
        Due to its intractability, a session-based optimization problem is formulated assuming a CR lasts less than a large-scale coherence time. 
        Some scheduling properties of such multi-server joint communication scheduling and resource allocation framework have been established.
		An iterative algorithm has been designed to solve such non-convex and non-block-separable-constrained problems. 
        Simulation results confirm the importance of the efficient and accurate integration design proposed in this work.
	\end{abstract}
	\begin{IEEEkeywords}
		Distributed learning, federated learning, edge learning, communication scheduling, resource allocation, 6G networks, eMBB, coexistence design.
	\end{IEEEkeywords}

	\IEEEpeerreviewmaketitle

	\vspace{-.25cm}
	\section{Introduction}
	\vspace{-.1cm}
	Data-driven machine learning techniques have shown substantial potential in tackling complex problems that are challenging for traditional optimizations, largely due to the abundance of data and increased computational capabilities.
	However, the vast quantity of data 
    collected or generated by the ever-increasing and diverse range of \ac{IoT} devices may contain sensitive and private information about individuals or industries. Sharing this raw information with other parties poses inevitable concerns.
	
	To address privacy and security concerns, a line of \ac{DL} frameworks, such as \ac{FL}~\cite{mcmahan_FL_2017} and split learning~\cite{Gupta_2018_split_learning}, has been proposed. In such frameworks, clients are not required to transmit their data to any third party; instead, only model parameter or intermediate activation output has to be communicated.
	Such \ac{DL} has drawn immense attention in the past few years and is being considered for integration (or already is integrated) into real-world mobile/edge applications~\cite{Li2020_reviewFL_applications}, e.g., Google Gboard~\cite{hard_2019_FLmobilekeyboard}. 
	Compared to within-cluster \ac{DL}, several challenges arise in this \ac{DL} scenario~\cite{LiTian_FL_hetero_survey_2020}: i)~the communication latency is far from negligible~\cite{Zhu_delayedGradientAveraging_2021}; ii)~heterogeneity in terms of data distribution and quantity and system itself (communication latency, computational capacity, etc.). 
	One of the key goals of the research community has been to \emph{enable faster DL training while maintaining good accuracy}.
	Significant research has been devoted to gradient or model update compression techniques to reduce the communication/computation load~\cite{Banerjee_FLCompression_2021, Sattler_RobustCommEfficient_FL_Compression_2020, Chiappa_FedPAQ_Quantization_2020, Shlezinger_VectorQuantizationFL_2021}. 
	Even with the significantly reduced communication payload, the training time remains highly impacted by the communication delays. A substantial portion of the training latency persists due to the synchronous property of \ac{DL} and the inherent system heterogeneity. For instance, in synchronous FL, the server needs to wait until it receives all selected clients' updates before aggregating them to a new global model for the next \ac{CR}. As a result, the \ac{DL} training latency is determined by the slowest \acp{UE}, referred to as stragglers~\cite{LiTian_FL_hetero_survey_2020}. 
	A line of research has focused on asynchronous DL, which relaxes the need to wait for all client updates~\cite{Xu_AsyncFL_Survey_2023} by proceeding the model aggregation after a certain time regardless of missing some of clients updates. 
    However, even in asynchronous DL, the fastest users still experience idle times while communicating and receiving model updates. An even more efficient asynchronous framework called overlapped communication and computing FL has been proposed~\cite{Zhu_delayedGradientAveraging_2021, Chen_FLoverlapComputationCommunication_2023}. This framework allows \acp{UE} to continue their local updates while communicating their model updates. Correction terms are added to adjust the training performance.
However, asynchronous distributed edge learning frameworks often overlook the practical constraints of scarce wireless bandwidth and limited energy budget of wireless edge devices. In wireless settings, specific design criteria must therefore be considered~\cite{Lim_FL_MEC_Survey_2020}. 

	In \ac{WFL}, numerous works have concentrated on reducing the overall training latency while allocating both training and wireless resources~\cite{ChenMingzhe_ConvTimeOptiFL_2021,  Zhu_LatencyMinWFL_2024, Shi_DeviceSchedulingResourceAllocationWFL_2021, Liu_TrainingTimeMinFEELBandwidth_2022,Luo_ClientSamplingFL_2022, Xu_MultiFLServiceBandwidthAllocation_2022, Vu_MinExcecutionTimeFLCellFreeMMIMO_2022, Chen_RelayWFLOptimization}. Probabilistic user selection and~\ac{RB} allocation design has been proposed in~\cite{ChenMingzhe_ConvTimeOptiFL_2021, Zhu_LatencyMinWFL_2024, Shi_DeviceSchedulingResourceAllocationWFL_2021} for accelerating the FL training latency. Liu et al.~\cite{Liu_TrainingTimeMinFEELBandwidth_2022} jointly design the quantization level and bandwidth allocation.  Similar designs have been extended to more complex scenarios, e.g., in multiple coexisting FL services~\cite{Xu_MultiFLServiceBandwidthAllocation_2022}, cell-free massive MIMO~\cite{Vu_MinExcecutionTimeFLCellFreeMMIMO_2022} and relay-assisted networks~\cite{Chen_RelayWFLOptimization}.

The above works however do not take into consideration the practical energy limitations of local devices. To address this issue,  various resource allocation designs for \ac{WFL} schemes have been proposed. 
	Tran et al.~\cite{Tran_WFL_Optimization_2019} established the foundation for an optimization model that considers both time and energy constraints in \ac{WFL} systems. Their research presents a joint computational capacity and power allocation for optimizing the energy and time spent. Other studies have extended the design to a variety of scenarios, including hierarchical WFL~\cite{Feng_MinMaxOptiHWFL_2022}, multi-cell~\cite{Nguyen_2022_LatencyOpti_blockchainFL, Wu_2024_CostEfficientFLMultiCell}, wireless powered networks-enabled FL~\cite{Poposka_2024_DelayMinFLWPT},  neural network partition split at server and client sides~\cite{Deng_LowLatencyWFL_DNNPartition_2023}, and hybrid local and centralized learning~\cite{Huang_WFLHybridLocalCentral_2023}. 
	Yang et al.~\cite{yang_FLenergy_2021}, building upon~\cite{Tran_WFL_Optimization_2019}, incorporated local training accuracy as an additional design variable.  
In~\cite{Xu_2021_LongTermFL}, a joint user scheduling and bandwidth allocation strategy for FL is proposed, taking into account the increased significance of updates throughout the training. 
	The aforementioned studies demonstrate the trade-off between learning accuracy, latency and energy, emphasizing the necessity for an accurate characterization of their interdependencies.

	However, all the above works assume rigid resource allocation that remains constant throughout each \ac{CR} of DL. If we zoom into an arbitrary \ac{CR}, it is evident
	that there may exist communication periods with lower communication requirements (downlink communication or when only a few clients are ready to perform uplink transmission), as well as periods that demand significantly higher bandwidth, e.g., when multiple UEs simultaneously need to update their local model updates.
	Depending on the heterogeneity of local computational task load and speeds, clients transmit their models back to the server at different times. It is illogical and inefficient to pre-allocate resources to a UE for the whole duration of \ac{CR} if such resources are not needed in a certain time slot{, where each time slot represents a schedulable time interval. This statement holds unless the neural network model size is so small that the duration of a CR including local computation and communication fits within one schedulable interval}. 
	This work will focus therefore not only on the resource allocation, but also on the \emph{\ac{CS}}  problem~\cite{Li_FLCommScheduling_2021} for FL tasks when a CR spans multiple schedulable intervals.  It is important to clarify that the scheduling here does not mean selecting the client for participating in a CR, but rather the time when the \ac{UE} can be ready to do its uplink transmission updates. 
    This time can be controlled by adjusting the computational capacity.
While aforementioned works also tune the computational capacity, they overlook the impact of \acp{CS} due to rigid wireless resource allocation schemes.

    There are several works that have studied the importance of such CS problems. For instance, \cite{OzfaturaGundunz_OverlapFEEL_2021} evaluated the scheduling strategy based on 
	the minimum remaining time of updates; Luo et al.~\cite{Luo_CostEffectiveWFL_2021} derived an optimal CS ordering based on local computation time, only applicable when the downlink communication is ignored. Furthermore, authors in~\cite{Li_FLCommScheduling_2021} establish an optimal CS structure and design jointly optimal batch and user selection. 
    For enhancing performance, the resource allocation can be jointly designed with CS. 
    Xu~\cite{Xu_2024_TDMA_WFL} considers \ac{JCSRA}
    in the TDMA scheme given a session ordering, formulated with an optimization problem deciding computational capacity to minimize overall latency.   
    However, all the existing CS literature exclusively addresses \emph{single-server systems}, which we define as systems where only one client can perform uplink transmission at a time, borrowing the terminology from queuing theory where each client acts as a ``server'' processing its transmission task. This is shown to be only optimal when the rate is assumed to be linear with the number of allocated resources, which does not hold under device transmit power constraint.  
    We explore the generalization of such CS to OFDMA, when multiple \acp{RB} are available for multiple clients to perform the uplink transmission simultaneously. 
    We classify our work as a problem of \emph{multi-server \ac{JCSRA}}, which we define as a system where multiple \acp{UE} can transmit simultaneously, with the same borrowed terminology.
    To the best of our knowledge, it has only been explored in the case of massive MIMO by~\cite{Vu_2022_sessionBasedMIMOFL_designs}, not in other multi-access schemes. 
    In contrast to~\cite{Vu_2022_sessionBasedMIMOFL_designs}, our work provides a time-slot-wise formulation and coexistence design with existing network services that directly motivate \ac{JCSRA}. While~\cite{Vu_2022_sessionBasedMIMOFL_designs} focuses on MIMO, we target JCSRA under OFDMA, leading to fundamentally different design considerations. Furthermore, we establish deeper theoretical results on the problem's scheduling properties and feasibility, and we present more comprehensive simulation results that validate JCSRA's performance and offer richer interpretations of JCSRA.

Given the more efficient and also more accurate representation of DL provided by~\ac{JCSRA}, it is natural to extend the discussion to its coexistence with other wireless services such as \ac{HB} traffic (e.g., eMBB) and URLLC.
	Beyond theoretical considerations, it is also a practical concern since DL is expected to be integrated into the future-generation networks as already considered by 3GPP~\cite{3GPP_AIML_service}.
It is essential to emphasize that, in contrast to other services where the \ac{QoS} requirements for traffic are generally characterized by real-time demands and are therefore not ``controllable," DL service traffic can be, as mentioned in previous paragraph, tunable with CS.
	There exists little literature that studies DL coexistence with other services~\cite{Ganjalizadeh_DeviceSelectionURLLCDistributedLearning_2022,FL-nonFL_Farooq_2024,Li_FL_BandwidthSlicing_opticalNetworks_2020, Lin_CFLITCoexistingFL_2023}. In~\cite{Ganjalizadeh_DeviceSelectionURLLCDistributedLearning_2022}, the authors investigate the integration of DL and URLLC services in industrial networks, proposing a risk-sensitive device selection, aimed at minimizing DL training delay while ensuring URLLC QoS.
    Further, in~\cite{FL-nonFL_Farooq_2024}, the coexistence of FL and HB traffic is examined under half- and full-duplex massive MIMO schemes.
	The work~\cite{Li_FL_BandwidthSlicing_opticalNetworks_2020} considers bandwidth slicing in optical networks for FL and non-FL users. 
	Lin et al.~\cite{Lin_CFLITCoexistingFL_2023} investigate co-existing over-the-air FL with other information transfers. 
	However, time-dependent resource allocation has also been largely overlooked in the coexistence literature, leading to a mismatch with realistic system behavior and an inefficient use of resources.
	
	In this work, we investigate \ac{JCSRA} for the coexistence of HB traffic services and DL within {an arbitrary} CR for any given client selection. The proposed JCSRA framework operates within a CR and does not alter the FL update rule or convergence behavior.
	For the sake of modeling clarity, this work will use vanilla FL as a representative example of DL.
	Any synchronous DL framework can be adapted using the same design principles. {In contrast, 
	asynchronous or overlapping DL frameworks could also benefit conceptually, but incorporating them would require additional scheduling granularity and unmanageable optimization complexity.
	} The aim of the work is to demonstrate the necessity and the hidden complexity of the \emph{multiple-server JCSRA} problem,
	along with the potentially large gap and completion time estimation error that can occur compared to the rigid allocation and single-server JCSRA. 
    The key contributions of this work are as follows:
	\begin{itemize}[leftmargin=11pt]
		\item To the best of our knowledge, this work is the first formulation of a time-slot-wise resource allocation and computational speed optimization problem aimed at minimizing the end latency of a CR of vanilla FL under an energy budget constraint when coexisting with HB traffic. The formulation captures the full complexity of the system and motivates the necessity of JCSRA schemes.
		\item To address the intractability of the {time-slot-wise problem,} we assume a large-scale coherence time, which enables us to optimize w.r.t. the average channel information due to the potentially large model size to be communicated. From the assumption, we propose an equivalent tractable session-based optimization framework
		that jointly controls the downlink and uplink duration of each session, allocation of RBs of the coexisting DL and HB traffic within each session, and the computational capacity under energy budget constraints.
		Necessary and sufficient feasibility conditions have been established.
		\item 
			We identify and theoretically prove that the optimal JCSRA system is in general not in a single-server form (sequential transmission). Furthermore, we establish that the non-preemptive and non-idle properties remain in the multi-server case (where multiple clients can transmit simultaneously).
		
		\item To tackle the non-convex and non-block separable constrained problem, we propose an iterative algorithm that solves a convex sub-problem in each step, thereby ensuring convergence to a stationary point. Additionally, we introduce a reasonable ranking heuristic, which is validated by simulations as being effective.
		\item The simulation results first confirm the convergence and good performance level of the heuristic ordering. They also demonstrate the efficiency of the JCSRA design. In resource-constrained systems, JCSRA methods provide significant latency improvement compared to rigid allocation. Additionally, factors that increase the performance gap between single-server and multi-server JCSRA are identified. 
		{An example integration of the JCSRA results into a real-time algorithm shows that the predicted performance remains achievable under fast fading with integer RB allocation.
		}
	\end{itemize}
\begin{table*}[htbp]
	\centering
	\caption{{Parameters and Variables}}
	\begin{tabular}{|p{2.3cm}|p{5.7cm}|p{2cm}|p{5.7cm}|}
		\hline
		\textbf{Parameter} & \textbf{Description} & \textbf{Parameter} & \textbf{Description} \\
		\hline
		\multicolumn{4}{|l|}{\textbf{System settings}} \\
		\hline
		$\Delta$ & TTI slot length (in second) & $\setE$ & Set of eMBB users \\
		$\setF$ & Set of FL users & $\setS$ & Selected set of FL users in the \ac{CR} of size $S$ \\
		$K$ & Total number of RB & $P^{(\mathrm{dl})}$ & BS downlink power for each subcarrier \\
		$P_{\max}$ & UE uplink maximum power & $N_0$ & AWGN noise level \\
		$B$ & Bandwidth of a RB ($\SI{}{\hertz}$) & $\alpha$ & Dirichlet parameter of data quantity heterogeneity of UEs (in Sec~\ref{sec: simulation gap})  \\
		\hline
		\multicolumn{4}{|l|}{\textbf{Communication}} \\
		\hline
		$h_e^{(t)}$ & Channel gain of HB user~$e$ at {time slot}~$t$ & $h_s^{(t)}$ & Channel gain of FL user~$s$ at {time slot}~$t$ \\
		$\gamma_e^{(t)}$ & power-normalized SNR of HB UE~$e$ at time~$t$ & $\gamma_{s}^{(t)}$ & power-normalized SNR of FL UE~$s$ at time~$t$ \\
		$p_{s,\mathrm{ul}}^{(t)}$ & Uplink transmit power of FL UE~$s$ at {time slot}~$t$ & $K_{e}^{(t)}$ & Number of RBs assigned to HB UE~$e$ \\
		$K_{\mathrm{dl}}^{(t)}$ & Number of RBs assigned to FL downlink broadcasting & $K_{s,\mathrm{ul}}^{(t)}$ & Number of RBs assigned to FL UE~$s$ for uplink \\
		$r_e^{(t)}$ & Rate of HB UEs~$e$ at {time slot}~$t$ & $r_{s,\mathrm{dl}}^{(t)}, r_{s,\mathrm{ul}}^{(t)}$ & Rate of FL downlink/uplink UEs at {time slot}~$t$ \\
		$\theta$ & Target HB minimum rate & & \\
		\hline
		\multicolumn{4}{|l|}{\textbf{Federated Learning}} \\
		\hline
		$D$ & Model parameter size (bits) & $\tau_{s,\mathrm{dl}}, \tau_{s,\mathrm{ul}}$ & Duration for FL UE~$s$ to finish downlink/uplink transmission \\
		$I_s$ & Number of local epochs of training & $C_s$ & CPU cycles required for training one sample data at UE~$s$ \\
		$\Theta_s$ & Local dataset sample size of UE~$s$ & $f_s\in(0,f_{s,\max}]$ & Computational capacity of UE~$s$ \\
		$\tau_s^{\!(\mathrm{cp})}\!\!\!\in\![\tau_{s,\!\min}^{\!(\mathrm{cp})},\!+\!\infty)$ & Duration for UE~$s$ to finish computation task ($\tau_{s,\min}^{\!(\mathrm{cp})}$ is calculated by $f_{s,\max}$ via~\eqref{eq: tau_cp}) & $\kappa$ & Effective switched capacitance \\
		$E_s^{(\mathrm{cp})}, E_s^{(\mathrm{cm})}$ & Energy consumed on local computation (resp. FL communication) & $T$ & Duration of the CR (seconds) \\
		$E_{s,\mathrm{budget}}$ & Energy budget of FL UE~$s$ for the CR & $E_{s}^{(\mathrm{tot})}$ &  Total consumed energy by FL UE~$s$ in the CR \\
		{$\zeta = I_sC_s$} & constant of computation. &&\\
		\hline
		\multicolumn{4}{|l|}{\textbf{Session-based Reformulation}} \\
		\hline
		$\sigma\in\mathfrak{S}$ & Starting time ordering of FL uplink & $\gamma_s, \gamma_e$  &  Statistical average of $\gamma_s^{(t)}$, $\gamma_e^{(t)}$ respectively \\
		$t_{\ell'}^{(\mathrm{dl})}$ & Duration of $\ell'$-th downlink session  &$p_{s,\ell}^{(\mathrm{ul})}$ & Uplink transmit power of FL UE~$s$ at session~$\ell$ \\
		$t_{\ell}^{(\mathrm{ul})}$ & Duration of $\ell$-th FL uplink session &  $K_{e,\ell'}^{(\mathrm{dl})}, K_{e,\ell}^{(\mathrm{ul})}$ & Average number of RBs assigned to HB UE~$e$ \\
		$T_{\mathrm{idle}}$ & Communication-idle time between uplink and downlink 
		& $K_{\ell'}^{(\mathrm{dl})}$ & Average number of RBs assigned to FL downlink broadcasting \\
		$K_{s,\ell}^{(\mathrm{ul})}$ & Number of RBs assigned to FL UE~$s$ for uplink session~$\ell$ &  $K_{\mathrm{HB}, \ell'}^{(\mathrm{dl})}, K_{\mathrm{HB}, \ell}^{(\mathrm{ul})}$ & Overall RB needed by all HB UEs in~$\setE$ during each session~$\ell',\ell$\\
		$r_{s}^{(\mathrm{dl})}(K_{\ell'}^{(\mathrm{dl})})$ & Rate of FL UE~$s$ given $K_{\ell'}^{(\mathrm{dl})}$ RB & $r_{e,\ell'}, r_{e,\ell}, r_{e,\mathrm{idle}}$ & UE-$e$'s rate at session~$\ell'$, $\ell$ or idle session respectively  \\
		$ r_{s}^{(\mathrm{ul})}(K_{s,\ell}^{(\mathrm{ul})}, p_{s,\ell}^{(\mathrm{ul})})$ & Uplink rate of UE~$s$ given~$K_{s,\ell}^{(\mathrm{ul})}$ RBs and $p_{s,\ell}^{(\mathrm{ul})}$ of transmit power  & $a$ & Constant for HB UEs defined in~\Cref{theorem: feasibility (rigid)} \\
		\hline
		
		\multicolumn{4}{|l|}{\textbf{Optimization Problem and Algorithm Development}} \\
		\hline
		$(\setP_{\mathrm{rig}})$ & Rigid resource allocation problem~\eqref{pb: rigid problem} & $(\setP_{\sigma})$ & Session-based problem defined in~\eqref{pb: session-based problem}  \\
		$y_{\mathrm{HB},\ell'}^{(\mathrm{dl})}, y_{\mathrm{HB},\ell}^{(\mathrm{ul})}$ & Auxiliary variable for decoupling product of variables for HB constraint at session $\ell'$, $\ell$, resp. & $y_{\ell'}^{(\mathrm{dl})}, y_{s,\ell}^{(\mathrm{ul})}$ & Auxiliary variable for decoupling product of variables for downlink/uplink completion constraints \\
		$y_{s,\ell}^{(E)}$ & Auxiliary variable in MM approximation & $\hat{\phi}_{s,\ell}$, $\hat{E}_s^{(\mathrm{tot})}$ & MM surrogate function defined in~\eqref{eq: MM surrogate def} and resulting spent energy expression of UE~$s$ \\
		$(\setT\setP_{\sigma})$ & Transformed session-based optimization problem defined in~\eqref{pb: session-based problem: convex iterative subproblem} &$(\setP 1_{\sigma})$ & HB constraint simplified session-based problem defined in~\Cref{theorem: HB reformulation} \\
		$\setX$, $\setX_{\mathrm{rig}}$ & Optimization variable set of session-based (resp. rigid) problem & $\setY$ &  Auxiliary variable set defined in~\Cref{sec: algorithm}\\
		$X_n, Y_n$ & Optimization variable given at iteration~$n$ of Alg.~\ref{algo: iterative giving ordering} & $T_n$ & Objective value attained at iteration~$n$ of Alg.~\ref{algo: iterative giving ordering}  \\
		\hline
	\end{tabular}	    
\end{table*}

\vspace{-.32cm}
\section{System Model and Initial Problem Formulation}
\vspace{-.08cm}

Consider a 5G NR system with TTI slots of lengths~$\Delta$\,(s){, where each time slot represents a schedulable time interval for resource allocation decisions}.
It consists of a single cell with HB traffic UEs~$\setE$ using OFDMA. The service of \ac{DL} is expected to be integrated, we consider~$\setF$ the \acp{UE} that participate in FL. In total~$K$ \acp{RB}  are available for both services. 
The bandwidth of one RB is denoted as~$B$\,(Hz).

 \vspace{-.35cm}
\subsection{HB Traffic UEs (e.g., eMBB)}
 \vspace{-.08cm}
Each HB traffic UE $e\in\setE$ performs downlink transmission with~$K_e^{(t)}$ RBs at time slot~$t$. For the ease of notation regarding the actual contribution of this work, we assume from here the channels are frequency-flat, which will be justified later in~\Cref{remark: frequency-flat}. The rate expression can be written as:
\begin{equation}
	r^{(t)}_{e} = K_e^{(t)}B\log_2\Big(1+\gamma_{e}^{(t)} \Big) ,
\end{equation}
where~$\gamma_{e}^{(t)}= \frac{P^{(\mathrm{dl})}h_{e}^{(t)2}}{BN_0}$ is the SNR of HB traffic UE~$e$; $h_e^{(t)}$ the channel coefficient of UE~$e$ and $N_0$ the AWGN noise spectral density. The \ac{BS} is assumed to have a constant downlink power~$P^{(\mathrm{dl})}$ at each RB.

A fair rate allocation for HB traffic UEs needs to be ensured. The requirement is for all HB traffic UEs to have the time average rate above a threshold~$\theta$\,(bit/s):
\begin{equation}
	\min_{e\in\setE}\frac{1}{T}\sum_{t=1}^{T} r_e^{(t)} \geq \theta,
	\label{eq: time_raw_eMBB}
\end{equation}
where~$T>0$ is the ending time of the considered CR. 

\vspace{-.15cm}
\begin{remark}
\label{remark: frequency-flat}
The long-term average channel states are typically assumed to be frequency-flat. Since from section~\ref{sec: rigid} throughout this work, only long-term channel states are relevant, the {time-slot-wise} channel was defined likewise to avoid unnecessary redefinitions.
\vspace{-.15cm}
\end{remark}

\vspace{-.3cm}
\subsection{Vanilla Federated Learning}
\vspace{-.1cm}
To highlight the benefit of time-dependent resource allocation, without loss of generality, this work focuses on vanilla FL, while the proposed approach can be applied to any other advanced FL framework.
The following stages that consist of a CR are iteratively performed in vanilla FL, i.e., FedAvg~\cite{mcmahan_FL_2017}: 
\begin{enumerate}[leftmargin=14pt]
	\item The BS randomly (or with any client selection) selects~$S$ UEs~$\setS\subset\setF$ to participate in the CR of FL training, and broadcasts the current global model to selected UEs. 
	\item After receiving the global model, each UE~$s\in\setS$ trains for $I_s$ epochs with its own local dataset.
	\item Each UE sends back the locally trained model once the local training is done.
	\item After receiving all model updates from UEs, the average of the model updates is computed at the BS and is considered as the global model for the next CR.
\end{enumerate}
In this work, we focus on each CR. We assume therefore an arbitrary client selection~$\setS\subset\setF$ and the step~$4)$ is ignored since it is not impacted by wireless resource allocations. We denote $\setS=\{1,\ldots,S\}$.

\noindent\textbf{Downlink Phase:} FL downlink communication uses fountain-coded multicasting~\cite{Castura_RatelessCode_2006} as assumed in~\cite{OzfaturaGundunz_OverlapFEEL_2021}. 
The downlink rate of FL UE~$s\in\setS$ at {time slot}~$t$ is:
\begin{equation}
	r^{(t)}_{s,\mathrm{dl}} = K_{\mathrm{dl}}^{(t)}B\log_2\Big(1+P^{(\mathrm{dl})}\gamma_{s}^{(t)} \Big),
\end{equation}
where~$K_{\mathrm{dl}}^{(t)}$ the total number of RBs given to the downlink broadcasting at time~$t$; 
$\gamma_{s}^{(t)} = \frac{h_s^{(t)}}{BN_0}$ the downlink power-normalized SNR of UE~$s$, i.e., SNR per transmit power,
 with $h_s^{(t)}$ being the channel gain of UE~$s$ at {time slot}~$t$.

For a model of size~$D$ bits, the downlink communication for UE~$s$ will last until the complete model has been uploaded, defining the downlink duration $\tau_{s,\mathrm{dl}}$:
\begin{equation}
(\forall s\in\setS)\quad	\Delta\sum_{t=1}^{\tau_{s,\mathrm{dl}}} r_{s,\mathrm{dl}}^{(t)} \geq D.
	\label{eq: time_raw_dlcompletion}
\end{equation}

\noindent\textbf{Local Training Update:} Each UE asynchronously starts the local training independently after correctly decoding the whole model. The training duration~$\tau_{s}^{(\mathrm{cp})}${ is determined and assigned to each UE. UEs attempt to meet as close as possible this delay requirement by tuning} the computational capacity~$f_{s}\in(0, f_{s,\max}]${ by adjusting chip voltage with the technique of \ac{DVFS}~\cite{Chandrakasan_EnergyCMOS_1992, Zhai_DVFS_2004, Tang_DVFS_GPU_2019}. The latency and energy can be expressed as follows~\cite{yang_FLenergy_2021, Mao_MECEnergyModel_2016, PK_TGCN, Zeng_energyCPUGPUFL_2021}}:
\begin{equation}
(\forall s\in\setS) \quad	\tau^{(\mathrm{cp})}_{s} = \frac{I_sC_s\Theta_s}{f_{s}} = \frac{\zeta \Theta_s}{f_s},
	\label{eq: tau_cp}
\end{equation}
where~$C_s$ (cycles/sample) the number of CPU cycles required for training one sample data at UE~$s$; $I_s$ the number of local epochs and $\Theta_s$ is the local data sample size. {We define~$\zeta\defeq I_sC_s$ assumed identical across $s\in\setS$.}
The energy consumed on the local computation by UE~$s$ writes as:
\begin{equation}
(\forall s\in\setS) \quad	E^{(\mathrm{cp})}_{s} = \kappa I_s C_s \Theta_sf_{s}^2,
	\label{eq: E_cp}
\end{equation}
with~$\kappa>0$ is the effective switched capacitance~\cite{Chandrakasan_EnergyCMOS_1992}. 
{Since the computation time scales inversely with the computational capacity~$f_s$, while the computation energy scales with its square, computing more slowly  at a lower capacity can substantially reduce the total energy consumption for a given computational load. This reflects the so-called \emph{energy-delay tradeoff}.}

\begin{remark}
{The relationship between computational frequency, latency, and energy consumption has been shown to be more complex in realistic environment, due to concurrent device services~\cite{Li_SmartPC_2019} and GPU-CPU-memory coordination in deep learning tasks~\cite{Guo_BoFL_2022}. Nevertheless, the simplified model provides valuable insights as long as such an energy-delay tradeoff exists. Future work will explore black-box optimization techniques to adapt to the unknown frequency-performance relationship in realistic settings.}
\end{remark}

\noindent\textbf{Model Uplink Update Phase:} After the local training, each UE, on its own, independently requests to transmit the updated model to BS for averaging. 
Unlike in downlink transmission where the BS has the power to serve each RB with sufficient transmit power capacity,
UEs have limited transmit power. As more RBs are allocated to a UE, less power can be allocated per RB.
Assuming that each RB can be shared and split by UEs over a long-time scale, the rate can be derived from~\cite{Shen_OFDM_allocation_2005}:
\begin{equation}
(\forall s\in\setS) \quad	r_{s,\mathrm{ul}}^{(t)} = K_{s,\mathrm{ul}}^{(t)} B \log_2\Bigg(1+\frac{p_{s,\mathrm{ul}}^{(t)}\gamma_{s}^{(t)}}{K_{s,\mathrm{ul}}^{(t)}}\Bigg),
	\label{eq: uplink rate (time raw)}
\end{equation}
where~$K_{s,\mathrm{ul}}^{(t)}$ the number of RBs used by UE~$s$ for the uplink transmission at TTI~$t$;  $p_{s,\mathrm{ul}}^{(t)}$ the transmission power used for uplink transmission; $\gamma_{s}^{(t)}$ as defined previously. Each UE~$s$ is subject to a maximum transmit power $P_{\max}$, i.e., $p_{s,\mathrm{ul}}^{(t)}\in[0, P_{\max}]$.
Similar to downlink, the uplink communication duration~$\tau_{s,\mathrm{ul}}$ is characterized by UE~$s$ completing its transmission of~$D$ bits of model update:
\vspace{-.05cm}
\begin{equation}
(\forall s\in\setS) \quad	\Delta\sum_{t=\tau_{s,\mathrm{dl}}+\tau_s^{(\mathrm{cp})}+1}^{\tau_{s,\mathrm{dl}}+\tau_s^{(\mathrm{cp})}+\tau_{s,\mathrm{ul}}} r_{s,\mathrm{ul}}^{(t)} \geq D,
	\label{eq: time_raw_uplink_completion}
\end{equation}
where the uplink transmission starts after the completion of both downlink communication and local computation.
The resulting consumed energy in the uplink communication is:
\vspace{-.35cm}
\begin{equation}
(\forall s\in\setS) \quad	E_s^{(\mathrm{cm})} =\Delta \sum_{t=\tau_{s,\mathrm{dl}}+\tau_s^{(\mathrm{cp})}+1}^{\tau_{s,\mathrm{dl}}+\tau_s^{(\mathrm{cp})}+\tau_{s,\mathrm{ul}}} p_{s,\mathrm{ul}}^{(t)}.
\end{equation}
The overall \ac{CR} latency  is characterized by the slowest UEs:
\begin{equation}
	T = \max_{s\in\setS}\{ \tau_{s,\mathrm{dl}} + \tau^{(\mathrm{cp})}_{s} + \tau_{s,\mathrm{ul}}\}.
\end{equation}

The overall consumed energy of the \ac{CR} for UE~$s$ is denoted as $
E_s^{(\mathrm{tot})} = E_s^{(\mathrm{cp})}+E_s^{(\mathrm{cm})}.$ Each UE~$s$ has an FL training energy budget of~$E_{s, \mathrm{budget}}$ or the whole FL system can be subject to a network-wide energy budget~$E_{\mathrm{budget}}$.

\subsection{General {Time-Slot-Wise} Problem Formulation}
\label{Section: time-step-wise problem formulation}
We aim at efficiently allocating limited RBs to both HB traffic and FL traffic 
 along with
the uplink transmission power and the device computational capacity, to minimize the total latency of one FL \ac{CR} while satisfying energy budget constraints and coexisting HB traffic requirements. 
The problem can be formulated as follows:
\begin{subequations}
	\label{pb: orig_pb}
	\begin{align}
		\!\!\!\!	\min_{\{K_{e}^{(t)}, K_{s,\mathrm{dl}}^{(t)}, K_{s,\mathrm{ul}}^{(t)}, p_{s,\mathrm{ul}}^{(t)},f_s\}_{\forall e,s,t}, } \!\!\!\!\!\!\!\!\!\!\! & T , \\
		\mathrm{s.t.} \quad\quad\quad& \eqref{eq: time_raw_eMBB}, \eqref{eq: time_raw_dlcompletion},\eqref{eq: time_raw_uplink_completion} \notag\\
		&\hspace{-2.5cm} (\forall t),\ \sum_{e\in\setE}K_e^{(t)} + K_{s,\mathrm{dl}}^{(t)} + \sum_{s\in\setS}K_{s,\mathrm{ul}}^{(t)} \leq K,
		\label{cons: Res_sharing (t step)}\\[-.22cm] 
		& \hspace{-2.5cm}(\forall s\in\setS),\  E_s^{(\mathrm{tot})} \leq E_{s, \mathrm{budget}}, \ p_{s,\mathrm{ul}}^{(t)}\in[0, P_{\max}]\\[-.15cm] 
		&\hspace{-3.4cm} (\forall s\in\setS,e\in\setE,t), \ f_s\in (0, f_{s,\max}], \ K_{e}^{(t)}, K_{s,\mathrm{dl}}^{(t)}, K_{s,\mathrm{ul}}^{(t)}\in\N.
	\end{align}
	\vspace{-.1cm}
\end{subequations}
Constraint~\eqref{cons: Res_sharing (t step)} denotes{ that a total of~$K$ RBs can be allocated.}

\vspace{-.15cm}
\begin{remark}
	The problem is intractable due to several factors: the time-dependency of the starting time of each \ac{UE}'s uplink and each \ac{UE}'s finishing time (either for uplink and downlink), which depends on the past and future {optimization decisions (i.e., resource allocation and computational capacity assignments)}; the non-convexity; completion time (as detailed in the next remark) and therefore high time-dependence of variable dimension (number of time steps needed).
	\vspace{-.15cm}
\end{remark}
\vspace{-.15cm}
\begin{remark}
In the given formulation, the constraint of \textbf{completion time} with time-varying rate (otherwise $T=D/r$) is known to be challenging to address and has been considered in either UAV wireless networks~\cite{Zeng_CompletionTimeUAV_2018, Zhan_CompletionTimeUAV_2019, Yuan_CompletionTimeUAV_2023}, or in scheduling literature as the flow time (or makespan)~\cite{Im_flowtimeLknorm_2015, Xu_JobSchedulingResourcePacking_2021}. There is no general optimization strategy. Typical techniques involve finding a certain structure of the solution space, but the approach becomes infeasible with a larger number of UEs (compared to 1 UAV in UAV literature).
In scheduling literature, various strategies are developed and proved to be constant-competitive for optimal online scheduling in a simplified situation, but they do not apply to more complex optimization problems with more practical constraints. 
\vspace{-.15cm}
\end{remark}
\begin{remark}
The complexity of the optimization problem is also linked to the specificity of DL traffic. 
Despite its burstiness~\cite{luangsomboon2023burstiness} and high communication demand, the starting and finishing times of each traffic are controllable and only the end latency of CR matters instead of individual packet, in contrast to other common service traffic such as eMBB or URLLC, where the starting time, or packet arrival time, is not controllable and each packet to be transmitted within a certain latency constraint.
This tunability allows for greater efficiency potentials in resource optimization; for instance, if future congestion is anticipated, it may be beneficial to extend the local training (therefore transmission), to save more energy{ due to the energy-delay tradeoff in~\eqref{eq: tau_cp} and \eqref{eq: E_cp}} or expedite current transmissions to prevent future congestion.
\end{remark}
The above remarks highlight the complexity of such problems. To alleviate the strong time dependency of the current {optimization decisions} on the future, we make the following assumption. 
\vspace{-.15cm}
\begin{assumption}
	The CR happens within a large-scale channel coherence time, i.e., the channel statistics stay stationary during the CR.
	\label{assumption: large-scale coherence time}
	\vspace{-.15cm}
\end{assumption}
Based on this assumption, several designs can be proposed to mitigate the highly time-dependent nature of the problem.

\vspace{-.35cm}
\subsection{Rigid Resource Allocation (as Baseline)}
\label{sec: rigid}
\vspace{-.15cm}

Given the stationary channel assumption, most current WFL designs focus on 
a rigid RB allocation within each CR. 
Throughout the rest of the work, $\gamma_s${, $\gamma_e$} denotes{ respectively} the constant statistical average of power-normalized SNR~$\gamma_s^{(t)}${,~$\gamma_e^{(t)}$}.
The rigid formulation~$(\setP_{\mathrm{rig}})$ for solving the problem~\eqref{pb: orig_pb} is as follows with the variables set $\setX_{\mathrm{rig}}=\{K_{\mathrm{dl}}\in[0,K'], (K_{s,\mathrm{ul}}\in\R_+, p_{s,\mathrm{ul}}\in [0, P_{\max}], \tau_{s,cp}\in[{\tau_{s,\min}^{(\mathrm{cp})}},+\infty])_{s\in\setS}\}$: 
\begin{subequations}
	\begin{align}
		\min_{\setX_{\mathrm{rig}}} \quad & \max_{s\in\setS} \Big\{\frac{D}{r_{s,\mathrm{dl}}}\!+\! \tau_{s,cp} \!+\! \frac{D}{r_{s,\mathrm{ul}}}\Big\} ,
		\\[-.1cm]
		\mathrm{s.t.}\quad&  \sum_{s\in\setS} K_{s,\mathrm{ul}}\leq K', \label{cons: uplink RB sharing (rigid)} \\[-.3cm]
		&\min_{s'\in\setS}\Big\{\frac{D}{r_{s',\mathrm{dl}}}+ \tau_{s',cp}\Big\} \geq \max_{s'\in\setS}\Big\{\frac{D}{r_{s',\mathrm{dl}}}\Big\},
		\label{cons : dl/ul separation (rigid)}
		\\[-.1cm]
		& (\forall s\in\setS) \ \Big[\kappa \frac{\zeta^3\Theta_{s}^3}{\tau_{s,cp}^2} + \frac{p_{s,\mathrm{ul}}D}{r_{s,\mathrm{ul}}} \Big] \leq E_{s,\mathrm{budget}},
		\label{cons: energy (rigid)}
	\end{align}
	\label{pb: rigid problem}
%	\vspace{-.5cm}
\end{subequations}
where~$K'>0$ is the remaining amount of RBs when a constant minimum amount of RBs are allocated to HB UE service to ensure the constraint~\eqref{eq: time_raw_eMBB}.
Note that the variables of computational capacity~$\{f_s\}_{s\in\setS}$ are replaced equivalently by~$\{\tau_{s,cp}\}_{s\in\setS}$ by the one-to-one relationship given in~\eqref{eq: tau_cp};  the upper bound~$f_{s,\max}$ is transformed to~$\tau_{s,\min}^{(\mathrm{cp})}$. 
The constraint~\eqref{cons : dl/ul separation (rigid)} represents the separation (in time) of downlink and uplink phases (that will be motivated in the next section). 
Given tight energy constraint, it is possible that there is not enough energy to complete the local training and the model update transmission. The following feasibility condition is established.
\vspace{-.1cm}
\begin{theorem}[Feasibility Condition]
	The rigid problem~$(\setP_{\mathrm{rig}})$ is feasible if and only if $D/(B\gamma_s)<E_{s,\mathrm{budget}}$ and $K>a\theta${, with $a\defeq\sum\limits_e\frac{1}{B\log(1+\gamma_e)}$}.
	\label{theorem: feasibility (rigid)}
	\vspace{-.1cm}
\end{theorem}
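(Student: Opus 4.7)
The plan is to prove the biconditional by treating the two conditions independently, since $E_{s,\text{budget}} > D/(B\gamma_s)$ concerns per-FL-UE energy while $K > a\theta$ concerns aggregate HB spectrum. I interpret the constant $a$ as the aggregate minimum per-HB-UE spectral footprint $a = \sum_{e\in\setE} 1/(B\log_2(1+\gamma_e))$, which by \Cref{assumption: large-scale coherence time} is the smallest total number of RBs sustaining every HB UE at the time-averaged threshold $\theta$ (constraint \eqref{eq: time_raw_eMBB} reducing to the per-UE stationary inequality $K_e B\log_2(1+\gamma_e)\geq\theta$).

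For necessity, starting from a feasible point I would first sum the per-UE minimum $K_e \geq \theta/(B\log_2(1+\gamma_e))$ over $e\in\setE$, giving at least $a\theta$ RBs consumed by HB; since constraint \eqref{cons: uplink RB sharing (rigid)} and the downlink allocation require a strictly positive residue $K'>0$ for the FL phases, this yields $K>a\theta$. For the energy side, I would analyse the uplink-energy term $p_{s,ul}D/r_{s,ul}$ on the feasible set $(K_{s,ul},p_{s,ul})\in\R_+\times[0,P_{\max}]$: substituting $u=p_{s,ul}\gamma_s/K_{s,ul}$ collapses it to $(D/(B\gamma_s))\cdot u/\log_2(1+u)$, which is strictly monotone in $u$ with infimum attained only as $u\to 0^+$. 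Since the computational-energy term $\kappa\alpha^3\Theta_s^3/\tau_{s,cp}^2$ is strictly positive for every finite $\tau_{s,cp}$, the total energy strictly exceeds $D/(B\gamma_s)$ on the feasible set, forcing $E_{s,\text{budget}}>D/(B\gamma_s)$.

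For sufficiency, I would construct an explicit feasible point. Given $K>a\theta$, allocate $K_e=\theta/(B\log_2(1+\gamma_e))$ per HB UE, leaving $K'=K-a\theta>0$ for FL, arbitrarily split between downlink (which yields finite $\tau_{s,dl}$) and uplink. Setting $\varepsilon := E_{s,\text{budget}}-D/(B\gamma_s)>0$, I would then pick $\tau_{s,cp}\geq\tau_{s,\min}$ large enough that $\kappa\alpha^3\Theta_s^3/\tau_{s,cp}^2<\varepsilon/2$ and that \eqref{cons : dl/ul separation (rigid)} is satisfied (enlarging $\tau_{s,cp}$ further if needed), and choose $(K_{s,ul},p_{s,ul})$ such that $u$ is small enough to bring the uplink energy below $D/(B\gamma_s)+\varepsilon/2$, so that the total energy stays strictly below $E_{s,\text{budget}}$.

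The main obstacle is that $D/(B\gamma_s)$ is an infimum of the uplink energy, not a minimum; it is approached only as $p_{s,ul}\gamma_s/K_{s,ul}\to 0^+$, which is precisely why the theorem must be stated with strict inequality. The sufficiency construction must therefore verify that this infimum can be approached while simultaneously respecting $p_{s,ul}\in[0,P_{\max}]$, $K_{s,ul}\leq K'$, and the downlink/uplink separation constraint, i.e., that enough slack remains in the other variables to absorb the gap $\varepsilon$. A secondary book-keeping point is the log-base convention: a factor of $\ln 2$ arising from $u/\log_2(1+u)\to\ln 2$ as $u\to 0^+$ must be consistently absorbed into the definition of $\gamma_s$ (or $a$) so that the stated constant $D/(B\gamma_s)$ matches the computed infimum.
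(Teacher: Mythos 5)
Your proposal is correct and follows essentially the same route as the paper's proof: the same monotone-in-$p$ (equivalently, in $u=p_{s,ul}\gamma_s/K_{s,ul}$) analysis of the uplink energy term with infimum $D/(B\gamma_s)$ approached only in the vanishing-SNR limit, the same $\varepsilon/2$-splitting between computation and transmission energy via large $\tau_{s,cp}$ for sufficiency, and the same residual-spectrum argument $K'=K-a\theta>0$ for the HB condition; your necessity direction is in fact spelled out more carefully than the paper's one-line dismissal. Your closing remark about the $\ln 2$ factor from $u/\log_2(1+u)\to\ln 2$ is a legitimate observation about the paper's mixing of $\log_2$ in the rate expressions with natural-log bounds in the appendix, but it is a normalization issue that does not affect the structure of either argument.
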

\begin{proof}
     In Appendix~\ref{appendix: proof feasibility rigid}.
    \vspace{-.2cm}
\end{proof}
As mentioned in the introduction, the rigid-based allocation is a highly inefficient and inaccurate representation of system behavior under reasonable allocation.
This allocation serves as a baseline to evaluate the proposed session-based approach.

\vspace{-.3cm}
\section{Session-based RB allocation}
\vspace{-.15cm}
\subsection{Motivation}
\vspace{-.15cm}

\begin{figure}[t]
	\centering
	\subfloat[Heterogeneous\label{fig: illu_hetero}]{
		\includegraphics[width=0.49\linewidth, trim=5 0 0 5]{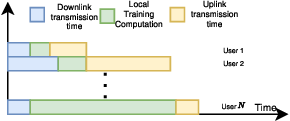}}
	%    \hfill 
	\subfloat[Homogeneous\label{fig: illu_homo}]{
		\includegraphics[width=0.49\linewidth, trim=5 0 0 5]{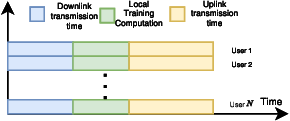}}
		\vspace{-.15cm}
	\caption{Example Illustration of the System Time-wise RB allocation for homogeneous and heterogeneous systems}
	\label{fig: illustration_heterogeneous}
	\vspace{-.65cm}
\end{figure}

The rigid resource allocation across the whole CR seems efficient only when considering solely DL services in a homogeneous network without energy constraint, where all UEs have identical channel strength, computational capacity, and computational load. In this case, a rigid allocation across the CR can be an accurate representation and efficient allocation strategy since the wireless communication resources are indeed shared simultaneously by all UEs, e.g., see an illustration in~\Cref{fig: illu_homo}.
When the system is heterogeneous, i.e., when local dataset sizes or the computational capacities vary significantly, the timing of uplink UEs ready to transmit their local updates can be staggered. 
As an example in~\Cref{fig: illu_hetero}, when UE~1 is ready to initiate uplink transmission, no other UEs are prepared to transmit; consequently, it can utilize significantly more RB than what is allocated rigidly. 
The design is more complex when the energy consumption is taken into account. If the network is not requested by other UEs immediately after UE~1 completes its transmission, it may be advantageous for UE~1 to proceed at a slower pace to conserve energy{ due to the energy-delay tradeoff in~\eqref{eq: tau_cp} and \eqref{eq: E_cp}}. If many UEs are expected to request network access shortly thereafter, UE~1 should aim to complete its tasks quickly before others begin their uplink transmissions to avoid further congestion.
To account for the time dependency of the{ shared RB pool} scheme, still under~\Cref{assumption: large-scale coherence time}, we propose to reformulate the problems with ``sessions".

\vspace{-.45cm}
\subsection{Session Definition and System Remodeling}
\vspace{-.1cm}
In general, resource allocation in wireless systems must be dynamically adjusted in response to the arrival or end of traffic demands. In the context of WFL, downlink broadcasting for all UEs is initiated simultaneously. Different downlink sessions are therefore characterized by each UE successfully fully decoded the broadcasted model. The uplink session, on the other hand, may involve a UE ready for uplink transmission, i.e. finalizing its local training,  or completing its uplink transmission.
Each session configuration needs to be handled separately since which UEs are contending and requesting the shared wireless resources at what time can significantly alter the optimal allocation strategy. Considering both the starting and ending times of the uplink transmission as part of session characterization results in potentially $(S!)^2$ possible combinations.
Here, we only assume the starting time as the boundary of a session in the uplink phase to reduce the number of possible combinations to $S!$.
\begin{definition}[Session]
	\label{def: session}
	The boundary of a \textbf{session} is determined by the time instance where a UE completes receiving the downlink broadcasted model or 
    is ready to
    initiate its uplink transmission, i.e., local training is finished.
     \vspace{-.15cm}
\end{definition}
Similar definition also exists in~\cite{Vu_2022_sessionBasedMIMOFL_designs} in the context of MIMO.
A time-average RB allocation strategy is given during each session.
Due to the potentially large data size to compute and high BS transmit power, the downlink broadcasting is generally much shorter than the local computations. We therefore make the following realistic assumption also existing in~\cite{Xu_2024_TDMA_WFL, Vu_2022_sessionBasedMIMOFL_designs}:
\vspace{-.15cm}
\begin{assumption}
	\label{assumption: separate dl/ul}
	Uplink communication phase starts only after the end of all the downlink broadcasting communication.
	\vspace{-.15cm}
\end{assumption}
All system parameters will be redefined in the session-based formulation in the following.

\subsubsection{Downlink Sessions}
The broadcasting starts simultaneously for all UEs, with the successful receiving time depending on the UEs' channel states. Consequently, there is only one downlink session ordering: the channel strength ordering. We order UE indices in the descending order of channel strength:
\vspace{-.3cm}
\begin{equation}
	\gamma_1 \geq \gamma_2\geq \cdots \geq \gamma_S.
\end{equation}
The downlink session~$\ell' = 1,\ldots, S$ ends when the UE~$\ell'$ finishes the downlink transmission and start when $\ell'-1$-th UE finishes the downlink transmission, except the session~$1$ starts at time step zero. The duration of each downlink session is denoted~$t_{\ell'}^{(\mathrm{dl})}\geq0$. The $\ell'$-th UE completes receiving its downlink communication at: $
{\tau_{s,\mathrm{dl}}} = \sum_{s\leq \ell'} t_{s}^{(\mathrm{dl})}.$

\subsubsection{Uplink Sessions}
As for uplink sessions, each uplink session starts when the local computation/training of a UE is complete; it ends when either one UE completes its local training or all UEs complete the uplink communications, i.e., the CR ends.  
Denoting~$\sigma$ a permutation of~$\{1,\ldots, S\}$ representing the ordering of UE of completing their local computations. The uplink session~$\ell =1,\ldots, S$ starts when the UE~$\sigma(\ell)$ finishes its local training task, i.e., is ready for uplink transmission and ends when UE~$\sigma(\ell+1)$ (for~$\ell=1,\ldots,S-1$) is ready for uplink transmission or when the CR ends. The duration of session~$\ell$ is denoted $t_{\ell}^{(\mathrm{ul})}\geq 0$.

\subsubsection{Idle Time} 
Uplink sessions are assumed to start only after downlink sessions according to~\Cref{assumption: separate dl/ul}. In general, there can also exist an idle communication time between uplink and downlink phases denoted~$T_{\mathrm{idle}}\geq 0$. This idle time is missing in closely related work~\cite{Xu_2024_TDMA_WFL, Vu_2022_sessionBasedMIMOFL_designs} and is necessary to guarantee a good optimality of the scheduling design. 
\vspace{-.15cm}
\begin{remark}
	It is clear that the formulation with~$T_{\mathrm{idle}}$ achieves better results than without it, since the case without it is the special case of the current formulation with $T_{\mathrm{idle}}=0$. When the computation tasks of all UEs take significantly longer than transmission times, it becomes clear that optimal joint resource allocation and scheduling require $T_{\mathrm{idle}}>0$. Without this idle time, the slowest downlink UE would need to wait to complete its transmission simultaneously with the fastest UE initiating its uplink transmission.
\vspace{-.15cm}
\end{remark}

When coexisting with HB traffic, all~$K$ RBs are assigned to HB UEs during the FL idle time.

\subsubsection{Equivalent System Variables}

The \emph{delay} of the considered \ac{CR} becomes: $
T = \sum_{\ell'} t_{\ell'}^{(\mathrm{dl})}+ T_{\mathrm{idle}} + \sum_{\ell} t_{
	\ell}^{(\mathrm{ul})}.$
The\emph{ computational delay}~$\tau^{(\mathrm{cp})}_{s}$ that has a one-to-one relation with the computational frequency can be fully defined within the session definition framework. 
By definition, $T_{\mathrm{idle}}\geq 0$ represents the time between completion of the downlink broadcasting phase and the moment when the first UE~$\sigma(1)$ finishes its local training:
\begin{equation}
\tau_{\sigma(1)}^{(\mathrm{dl})} + \tau_{\sigma(1)}^{(\mathrm{cp})} = 	\tau_{S}^{(\mathrm{dl})}+ T_{\mathrm{idle}}= \sum_{\ell'} t_{\ell'}^{(\mathrm{dl})}+ T_{\mathrm{idle}}  .
	\label{eq: sigma 1 time constraint, T_idle}
	\vspace{-.07cm}
\end{equation}
By definition of duration of uplink sessions, the following satisfies for~$\ell=1,\ldots,S-1$: $
t_{\ell}^{(\mathrm{ul})}=  \tau_{\sigma(\ell+1)}^{(\mathrm{dl})}+\tau_{\sigma(\ell+1)}^{(\mathrm{cp})} - \tau_{\sigma(\ell)}^{(\mathrm{dl})}-\tau_{\sigma(\ell)}^{(\mathrm{cp})}.$
With telescopic sum, for all $s=1,\ldots,S-1$, the sum of the uplink session until the $s$ sessions is:
\begin{equation}
	\begin{aligned}
		\hspace{-.19cm}	\sum_{\ell\leq s} t_{\ell}^{(\mathrm{ul})}&= \tau_{\sigma(s+1)}^{(\mathrm{dl})}+\tau_{\sigma(s+1)}^{(\mathrm{cp})} - \tau_{\sigma(1)}^{(\mathrm{dl})}-\tau_{\sigma(1)}^{(\mathrm{cp})}\\[-.35cm]
		&=  \sum_{\ell'\leq \sigma(s+1)} t_{\ell'}^{(\mathrm{dl})}+\tau_{\sigma(s+1)}^{(\mathrm{cp})} -\sum_{\ell'} t_{\ell'}^{(\mathrm{dl})}- T_{\mathrm{idle}}.
	\end{aligned}
\end{equation}
We notice that~$s=0$ coincides with~\eqref{eq: sigma 1 time constraint, T_idle}.
The computation delay constraint can therefore be fully captured by a linear relation of session durations, for~$s=0,\ldots, S-1$:
\begin{equation}
	\tau_{\sigma(s+1)}^{(\mathrm{cp})} = \sum_{\ell\leq s} t_{\ell}^{(\mathrm{ul})}+\hspace{-.3cm}\sum_{\ell'>\sigma(s+1)}\hspace{-.3cm} t_{\ell'}^{(\mathrm{dl})} + T_{\mathrm{idle}}\geq \tau_{\sigma(s+1),\min}^{(\mathrm{cp})},
	\label{cons: computation time (session_pb)}
	\vspace{-.1cm}
\end{equation}
with $\tau_{s,\min}^{(\mathrm{cp})}$ the fastest computation time calculated with the maximal computational capacity~$f_{s,\max}$.

Denote other session-based variables: $K_{\ell'}^{(\mathrm{dl})}$ the number of RB allocated to the~$\ell'$ the downlink session, $K_{e, \ell'}^{(\mathrm{dl})}$ for HB traffic UE~$e$, the downlink communication of UE~$\ell'$ has to finish at the end of downlink session~$\ell'$:
\begin{equation}
(\forall \ell'\in\setS) \quad	\sum_{i\leq \ell'} r_{\ell'}^{(\mathrm{dl})}(K_{i}^{(\mathrm{dl})})t_{i}^{(\mathrm{dl})} \geq D.
	\label{cons: dl completion (session_pb)}
\end{equation}
The RB{ allocation constraint} during the downlink phase can be written as:
\vspace{-.2cm}
\begin{equation}
(\forall \ell'\in\setS) \quad\ K_{\ell'}^{(\mathrm{dl})} + \sum_eK_{e,\ell'}^{(\mathrm{dl})}\leq K.
	\label{cons: dl RB (session_pb)}
	\vspace{-.15cm}
\end{equation}

All UEs have to finish their transmission at the end of this \ac{CR}. UE~$\sigma(s)$ can only start its transmission at the $s$-th uplink session, therefore the following expression holds:
\begin{equation}
(\forall s\in\setS) \quad	\sum_{\ell\geq s}r^{(\mathrm{ul})}_{\sigma(s)}(K_{\sigma(s),\ell}^{(\mathrm{ul})}, p_{\sigma(s),\ell}^{(\mathrm{ul})})t_{\ell}^{(\mathrm{ul})} \geq D,
	\label{cons: ul completion (session_pb)}
\end{equation}
where $r^{(\mathrm{ul})}_{\sigma(s)}$ is the uplink rate function of UE~$\sigma(s)$ (uplink UE starting time ranking at~$s$th place)  at the uplink session~$\ell$, with~$\ell\geq s$; $K_{\sigma(s),\ell}^{(\mathrm{ul})}$ and $p_{\sigma(s),\ell}^{(\mathrm{ul})}$ the RB and the power allocated to UE~$\sigma(s)$ at the uplink session~$\ell$.
The RBs again are shared with HB traffic UEs, at each session~$\ell$,
\begin{equation}
(\forall \ell\in\setS) \quad	\sum_{s\leq \ell}K_{\sigma(s),\ell}^{(\mathrm{ul})} + \sum_{e\in\setE}K_{e,\ell}^{(\mathrm{ul})}\leq K,
	\label{cons: ul RB (session_pb)}
	\vspace{-.15cm}
\end{equation}
with~$K_{e,\ell}^{(\mathrm{ul})}$ the number of RB allocated to~$e$-th HB traffic UE.
With the introduced session-based variables that replace~$\tau_s^{(\mathrm{cp})}$, the total consumed energy for FL by UE~$\sigma(s)$ with~$s=1,\ldots,S$ can be written as:
\begin{equation}
	\begin{aligned}
		E_{\sigma\!(\!s\!)}^{\!(\!\mathrm{tot}\!)\!} \!\!=\! \sum_{\ell\geq s }p_{\sigma(s),\ell}^{(\mathrm{ul})}t_{\ell} \!+\!  \frac{\kappa\zeta^3\Theta_{\sigma(s)}^3}{(\hspace{-.12cm}\sum\limits_{\ell\leq s-1} \hspace{-.22cm}t_{\ell}^{(\mathrm{ul})} \!\!+ \hspace{-.32cm}\sum\limits_{\ell'>\sigma(s)}\hspace{-.3cm} t_{\ell'}^{(\mathrm{dl})} \!\!+\! T_{\mathrm{idle}} )^2}\!\leq\! E_{\sigma\!(\!s\!)\!,\mathrm{budget}}.
	\label{cons: energy_cstr (session_pb)}\\[-.4cm]
	\end{aligned}
\end{equation}

For any~$e\in\setE$, the average HB traffic rate over the whole process is equal to:
\begin{equation}
	\hspace{-.2cm}
		\frac{1}{T} \!\sum_{t=1}^Tr_e^{(t)}\!  =\! \frac{\sum_{\ell'} r_{e,\ell'}^{(\mathrm{dl})}t_{\ell'}^{(\mathrm{dl})} \!+\! T_{\mathrm{idle}} {r_{e,\mathrm{idle}}} \!+\!\sum_{\ell} r_{e,\ell}^{(\mathrm{ul})}t_{\ell}^{(\mathrm{ul})} }{ \sum_{\ell'} t_{\ell'}^{(\mathrm{dl})} + T_{\mathrm{idle}}+ \sum_{\ell} t_{\ell}^{(\mathrm{ul})} },
		\vspace{.05cm}
\end{equation}
{where~$r_{e,\mathrm{idle}}$ is the rate of UE~$e$ during the FL communication idle time.
}
The HB traffic requirement~\eqref{eq: time_raw_eMBB} can be therefore written as:
\begin{equation}
\begin{aligned}
	(\forall e\in\setE)\quad \sum_{\ell'} r_{e,\ell'}^{(\mathrm{dl})}t_{\ell'}^{(\mathrm{dl})} + T_{\mathrm{idle}} {r_{e,\mathrm{idle}}} +\sum_{\ell} r_{e,\ell}^{(\mathrm{ul})}t_{\ell}^{(\mathrm{ul})} \\[-.18cm]
\geq \theta \Big( \sum_{\ell'} t_{\ell'}^{(\mathrm{dl})}+ T_{\mathrm{idle}} + \sum_{\ell} t_{\ell}^{(\mathrm{ul})} \Big).
\label{cons: eMBB each e (session_pb)}
\end{aligned}
\end{equation}

		\vspace{-.37cm}
	\subsection{Session-based Problem Formulation}
	\vspace{-.1cm}
	Combining all the constraints and the reformulation considerations, given uplink order~$\sigma$, the problem can be written as follows:
	\begin{subequations}
		\label{pb: session-based problem}
		\begin{align}
		(\setP_{\sigma}):	\!\min\  & \sum_{\ell'} t_{\ell'}^{(\mathrm{dl})}+ T_{\mathrm{idle}}+ \sum_{\ell} t_{\ell}^{(\mathrm{ul})} ,
			\label{obj: session-based problem}
			\\[-.1cm]
			\mathrm{s.t.}\ \  & \eqref{cons: computation time (session_pb)}, \eqref{cons: dl completion (session_pb)}, \eqref{cons: dl RB (session_pb)}, \eqref{cons: ul completion (session_pb)}, \eqref{cons: ul RB (session_pb)},\eqref{cons: energy_cstr (session_pb)}, \eqref{cons: eMBB each e (session_pb)},  \notag
			\\[-.1cm]
		& \hspace{-1.8cm}(\forall s\in\setS)(\forall\ell\in\setS)\    p_{\sigma(s), \ell}^{(\mathrm{ul})}\in[0,P_{\max}] 
			\label{cons: power (session_pb)}
			\\[-.1cm]
	& 
	 \hspace{-1.8cm}(\forall s,\! \ell,\! \ell'\!\in\!\setS)(\forall e\!\in\!\setE)\   t_{\ell'}^{(\!dl\!)}\!\!,\! t_{\ell}^{(\!ul\!)}\!,\! T_{\mathrm{idle}},\! K_{s,\ell}^{(\!ul\!)},\! K_{\ell'}^{(\!dl\!)},\! K_{e,\ell'}^{(\!dl\!)},\!K_{e,\ell}^{(\!ul\!)}   \!\!\geq \!0.
			\label{cons: positivity (session_pb)}\\[-.8cm]
			\notag
		\end{align}
	\end{subequations}
	Denoting the feasible set of problem variables as:  $$\setX \!=\! \{(t_{\ell'}^{(\mathrm{dl})}\!, t_{\ell}^{(\mathrm{ul})}\!, T_{\mathrm{idle}}, K_{s,\ell}^{(\mathrm{ul})}\!, K_{\ell'}^{(\mathrm{dl})}\!, K_{e,\ell'}^{(\mathrm{dl})}\!,K_{e,\ell}^{(\mathrm{ul})}\!, p_{s, \ell}^{(\mathrm{ul})})_{e,s,\ell,\ell'}\!\}.$$
	
	The same feasibility condition as $(\setP_{\mathrm{rig}})$ holds. 
	\vspace{-.18cm}
	\begin{theorem}[Feasibility Condition]
		The problem~$(\setP_{\sigma})$ is feasible if and only if $D<E_{s,\mathrm{budget}}B\gamma_s$ and $K>a\theta$. 
		\label{theorem: feasibility (session_pb)}
		\vspace{-.55cm}
	\end{theorem}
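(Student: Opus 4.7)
The plan is to establish both directions by tying $(\setP_{\sigma})$ to the rigid feasibility result in Theorem~\ref{theorem: feasibility (rigid)}.

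\textbf{Necessity.} I argue that both conditions arise from the same physical bottlenecks as in the rigid case, and are insensitive to the time granularity of the allocation. For the HB-rate condition, summing constraint~\eqref{cons: eMBB each e (session_pb)} across $e\in\setE$ and combining with the per-session RB caps~\eqref{cons: dl RB (session_pb)} and~\eqref{cons: ul RB (session_pb)}---together with the fact that at most $K$ RBs serve HB UEs during $T_{idle}$---yields the same inequality $K>a\theta$, with the same constant $a=\sum_{e\in\setE}1/\bigl(B\log_2(1+\gamma_e)\bigr)$ as in Theorem~\ref{theorem: feasibility (rigid)}. For the per-UE energy condition, taking the limit $K_{\sigma(s),\ell}^{(ul)}\to\infty$ with $p_{\sigma(s),\ell}^{(ul)}/K_{\sigma(s),\ell}^{(ul)}\to 0$ in~\eqref{eq: uplink rate (time raw)} recovers the information-theoretic minimum energy-per-bit; combined with~\eqref{cons: ul completion (session_pb)} inside~\eqref{cons: energy_cstr (session_pb)}, this forces $D<E_{\sigma(s),\text{budget}}B\gamma_{\sigma(s)}$ for every~$s\in\setS$.

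\textbf{Sufficiency.} I would construct a feasible point of $(\setP_{\sigma})$ from any rigid feasible tuple, whose existence is guaranteed under the stated hypotheses by Theorem~\ref{theorem: feasibility (rigid)}. Given rigid values $(K_{dl},\{K_{s,ul},p_{s,ul},\tau_{s,cp}\},\{K_e\})$, I would order FL UEs by descending channel gain, set the downlink session durations $t_{\ell'}^{(dl)}$ equal to the successive gaps between the rigid decoding instants $D/r_{\ell',dl}$, and keep the same RB split $K_{\ell'}^{(dl)}=K_{dl}$, $K_{e,\ell'}^{(dl)}=K_e$ in every session. Within each uplink session~$\ell$, I would retain the rigid per-UE allocations $K_{\sigma(s),\ell}^{(ul)}=K_{\sigma(s),ul}$ and $p_{\sigma(s),\ell}^{(ul)}=p_{\sigma(s),ul}$ for every active $s\leq\ell$, set $\tau_{\sigma(s)}^{(cp)}=\tau_{s,cp}$, and use~\eqref{eq: sigma 1 time constraint, T_idle} to define $T_{idle}\geq 0$ as the residual gap (made nonnegative by inflating $\tau_{s,cp}$ when needed, which is permitted since its upper bound is $+\infty$). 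Constraints~\eqref{cons: computation time (session_pb)}--\eqref{cons: energy_cstr (session_pb)} then reduce either to identities by construction or to their satisfied rigid counterparts.

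\textbf{Main obstacle.} The most delicate step is verifying~\eqref{cons: eMBB each e (session_pb)} on the constructed lift, because the HB time-average now mixes per-session rates with the boost $T_{idle}\,r_{e,\max}$ during idle. Fortunately, HB UEs receive all $K$ RBs during $T_{idle}$ and the same per-session RB split as in the rigid case otherwise, so the time-averaged rate can only exceed the rigid average, and the margin from the rigid feasibility is preserved; a short convex-combination argument makes this precise. Nonnegativity of $t_{\ell'}^{(dl)}, t_\ell^{(ul)}, T_{idle}$ then follows from the channel-gain ordering of the downlink completions and the freedom in choosing the idle time.
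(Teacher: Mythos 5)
Your overall strategy matches the paper's: sufficiency by lifting a rigid feasible point (whose existence \Cref{theorem: feasibility (rigid)} guarantees under the stated hypotheses) into $(\setP_{\sigma})$, and necessity by isolating the two bottlenecks. However, there is a genuine gap in your sufficiency construction. The problem $(\setP_{\sigma})$ is posed for a \emph{given} uplink ordering $\sigma$, and a rigid feasible point generally does not respect it: the readiness instants $D/r_{s,dl}+\tau_{s,cp}$ need not be increasing along $\sigma$. Your construction keeps the rigid computation times essentially unchanged and only invokes inflation of $\tau_{s,cp}$ to make $T_{idle}\geq 0$; but $T_{idle}$ shifts all uplink sessions uniformly and cannot repair an ordering violation, so some of your $t_{\ell}^{(ul)}$ would come out negative. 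Your closing claim that nonnegativity of $t_{\ell}^{(ul)}$ follows from ``the channel-gain ordering of the downlink completions'' is incorrect: that ordering governs only the downlink sessions, while uplink readiness depends on the sum of downlink completion and computation time for an arbitrary $\sigma$. The paper inserts a dedicated lemma precisely here: it iteratively inflates $\tau_{\sigma(s),cp}$ for $s=2,\ldots,S$ so that each readiness instant is at least the maximum over its predecessors, verifies that this only relaxes the energy and downlink/uplink separation constraints, and only then defines the session durations as successive gaps. Your construction needs this step (or an equivalent one) to be complete.

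On necessity, your argument for $K>a\theta$ is essentially the paper's. For the energy condition, your limit argument identifies the correct threshold but, as written, only shows that $D/(B\gamma_s)$ is the \emph{infimum} of the energy needed to ship $D$ bits in a single block; necessity requires a lower bound valid for every feasible point, across the coupled sessions in which a UE transmits with possibly different $(K,p)$ per session while sharing the durations $t_{\ell}^{(ul)}$ with other UEs. The paper explicitly flags this coupling as the reason the rigid-case argument does not transfer and resolves it with Farkas' lemma applied to the LP in the session durations; the same conclusion also follows more directly from the pointwise bound $p_{\sigma(s),\ell}\geq r_{\sigma(s),\ell}/(B\gamma_{\sigma(s)})$ (strict whenever the rate is positive) summed against $t_{\ell}^{(ul)}$ over $\ell\geq s$, but one of these two steps must appear explicitly for the proof to close.
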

    \begin{proof}
    In Appendix~\ref{appendix: proof feasibility session}.
    \vspace{-.3cm}
    \end{proof}
    Although the conditions for feasibility are the same, the achieved latency is inherently at least as good as what rigid allocation achieves. The significant gap will be shown in the simulations. We start by analyzing certain properties of the formulated session-based problem.

		\vspace{-.35cm}
	\subsection{Discussion on Scheduling Properties}
	\vspace{-.1cm}
	\label{sec: scheduling}
	
	The resulting optimization problem is a \emph{JCSRA problem} within one FL CR, as described in the introduction. 
	Besides the numerous \emph{resource allocation} FL frameworks, \emph{CS} has rarely been studied~\cite{Li_FLCommScheduling_2021, Xu_2024_TDMA_WFL}. 
	The major difference of the proposed multi-server JCSRA with them comes from the fact that the uplink transmission rate~\eqref{eq: uplink rate (time raw)} is non-linear w.r.t. the resources given.
    In the linear rate case, the optimal solution of JCSRA problem has a single-server system structure, i.e., only one UE can perform the uplink communication at each time~\cite{Li_FLCommScheduling_2021, Xu_2024_TDMA_WFL}. 
     In realistic scenario with the average rate expression~\eqref{eq: uplink rate (time raw)} derived from~\cite{Shen_OFDM_allocation_2005}, the JCSRA problem in general does not consist of a single-server system.
     \vspace{-.15cm}
	\begin{proposition}
		\label{th: multi-queue_uplink}
		The optimal communication scheduling for the uplink communication, in general, does not consist of a single-server system.
		\vspace{-.15cm}
	\end{proposition}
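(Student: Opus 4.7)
My plan is to establish the proposition by exhibiting an explicit instance of the session-based problem~$(\setP_{\sigma})$ in which every single-server schedule is strictly dominated by one with overlapping uplink transmissions. Since the claim is qualified by ``in general,'' a single well-chosen counterexample suffices. I would work with a symmetric minimal setting that isolates the uplink resource-sharing behaviour: two FL UEs ($S=2$) with identical channels $\gamma_1=\gamma_2=\gamma$, identical model size $D$, identical computation loads, and sufficiently large energy budgets so that~\eqref{cons: energy_cstr (session_pb)} is inactive. The HB traffic requirement can be set to $\theta=0$, or given a constant small RB slice, so that effectively all $K$ RBs are available for FL uplink, and the (symmetric) downlink phase becomes a constant that can be absorbed into the baseline.

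Within this instance, I would compare two candidate schedules. A single-server schedule splits the uplink into two non-overlapping sub-intervals: UE~$1$ alone on all $K$ RBs at power $P_{\max}$, then UE~$2$ identically, giving total uplink latency
\begin{equation*}
T_{\text{ss}}=\frac{2D}{KB\log_2(1+P_{\max}\gamma/K)}.
\end{equation*}
The proposed multi-server schedule has both UEs transmit concurrently, each on $K/2$ RBs at power $P_{\max}$, yielding a common finish time
\begin{equation*}
T_{\text{ms}}=\frac{2D}{KB\log_2(1+2P_{\max}\gamma/K)}.
\end{equation*}
Because $\log_2(1+2x)>\log_2(1+x)$ for every $x>0$, one immediately obtains $T_{\text{ms}}<T_{\text{ss}}$, and hence the optimum of $(\setP_{\sigma})$ on this instance cannot be attained by any single-server schedule.

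The conceptual driver I would make explicit after the computation is that the per-UE transmit-power cap $P_{\max}$ in~\eqref{eq: uplink rate (time raw)} lets the aggregate power injected into the spectrum scale with the number of simultaneously active UEs. The concavity of $K\mapsto K\log_2(1+p\gamma/K)$, which would force the optimum into a single-server shape under a common sum-power constraint as in the linear-rate references~\cite{Li_FLCommScheduling_2021, Xu_2024_TDMA_WFL}, is here overcompensated by this power-scaling effect. The main obstacle I anticipate is not the rate inequality itself but confirming admissibility inside the full session-based formulation: one has to check that the multi-server allocation satisfies~\eqref{cons: ul RB (session_pb)}, \eqref{cons: ul completion (session_pb)}, \eqref{cons: computation time (session_pb)}, and~\eqref{cons: energy_cstr (session_pb)} for some valid choice of session durations and computational capacities. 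In the symmetric instance these reduce to elementary inequalities, and because the two UEs are interchangeable the argument does not depend on the particular ordering $\sigma$.
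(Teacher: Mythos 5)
Your proposal is correct and follows the same essential route as the paper: both rest on the observation that the multi-server formulation subsumes the single-server one and that the nonlinearity of the uplink rate~\eqref{eq: uplink rate (time raw)} under the per-UE power cap makes a strict improvement possible. The paper, however, stops at asserting that ``counterexamples are easy to find,'' whereas you actually construct one --- the symmetric two-UE instance with the comparison of $\log_2(1+2x)$ against $\log_2(1+x)$ --- and the admissibility checks you flag (a zero-length first uplink session when both UEs finish computation simultaneously, ample energy budgets so that~\eqref{cons: energy_cstr (session_pb)} is slack even though concurrent transmission at $P_{\max}$ costs more total energy) do reduce to trivial verifications in that instance, so your version is strictly more complete than the paper's.
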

	\begin{proof}
        First, the multi-server based formulation is at least as good than single-server based formulation, since the formulation~$(\setP_{\sigma})$ includes single-server system solution, by adding constraints $p_{\sigma(s), \ell}=0$ for $\ell>s$ for all~$s$.
        The optimal solution of a multi-server system is not always a single-server solution.        
		This comes from the non-linearity of the rate expression. One UE transmitting over multiple RBs simultaneously results in lower SNR at each RB. When the `unit' SNR is low, it is more advantageous to assign some subchannels to other UEs. Counterexamples are easy to find.
		\vspace{-.25cm}
	\end{proof}
	
	Li et al.~\cite{Li_FLCommScheduling_2021} established the non-preemptive and non-idle properties of the optimal solution in the single-server case. We establish similar properties under multi-server JCSRA scenarios, where the proof is less trivial.
\vspace{-.15cm}
	\begin{theorem}
		\label{theorem: non-preemptive_non-idle}
		There exists an optimal communication session scheduling that is \textbf{non-preemptive} and \textbf{non-idle} within each downlink and uplink phase.  
		\vspace{-.15cm}
	\end{theorem}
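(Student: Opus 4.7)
My strategy is a pair of exchange arguments applied to any optimal solution of $(\setP_\sigma)$, producing an equivalent solution (same objective value, feasible) that is non-preemptive and non-idle. First I would formalize the two properties in the multi-server setting: non-preemptive means that for each FL UE $\sigma(s)$ the set of uplink sessions with nonzero allocation $\{\ell \geq s : K_{\sigma(s),\ell}^{(ul)}>0\}$ is a contiguous block, and analogously the downlink RB allocation to a UE is contiguous; non-idle means every session with $t_\ell^{(ul)}>0$ (resp.\ $t_{\ell'}^{(dl)}>0$) has at least one FL UE actively transmitting (resp.\ receiving) whenever FL work in that phase is not yet complete. The proof proceeds by showing that any solution violating these properties can be strictly improved, or at worst matched, by a nearby solution with one fewer violation, so an optimum with both properties exists.

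For non-preemptive in the uplink phase, the key tool is concavity of the per-session rate $K \mapsto KB\log_2(1+p\gamma/K)$ in $K$ for fixed $p$, which follows from a second-derivative sign analysis (the function is the perspective of a concave function). Suppose UE $\sigma(s)$ is preempted, i.e.\ $K_{\sigma(s),\ell_1}^{(ul)}>0$, $K_{\sigma(s),\ell_2}^{(ul)}=0$, $K_{\sigma(s),\ell_3}^{(ul)}>0$ for some $s \leq \ell_1 < \ell_2 < \ell_3$. Transfer a small fraction of $\sigma(s)$'s RB allocation from $\ell_3$ into $\ell_2$, jointly retuning $p_{\sigma(s),\ell_2}^{(ul)}$ so that the energy budget~\eqref{cons: energy_cstr (session_pb)} and the per-session RB caps~\eqref{cons: ul RB (session_pb)} remain satisfied. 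By concavity, the cumulative bits transmitted by $\sigma(s)$ do not decrease; the slack created in session $\ell_3$ can be absorbed either by HB traffic to tighten~\eqref{cons: eMBB each e (session_pb)} or used to shrink $t_{\ell_3}^{(ul)}$. In the latter case we contradict optimality; in the former case the number of preempted (UE, session) pairs strictly decreases, so induction terminates.

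For non-idle, I would distinguish two cases. If a session $\ell$ has $t_\ell^{(ul)}>0$ with no FL UE active but some $\sigma(s)$ with $s \leq \ell$ still having residual work in later sessions, the concavity-based swap above transfers part of that work into $\ell$, freeing duration elsewhere and strictly reducing $T$. If instead no FL UE has remaining work in session $\ell$, it only carries HB traffic; since the HB requirement~\eqref{cons: eMBB each e (session_pb)} is time-averaged over the whole CR, merging $\ell$ into an adjacent session preserves feasibility while strictly reducing $T$. The downlink phase is handled by the same two arguments; the fact that downlink is multicast and ordered by channel strength, together with fixed BS power per RB, makes the swap cleaner since there is no uplink-power-budget coupling.

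The main obstacle is precisely the coupling induced by the energy constraint~\eqref{cons: energy_cstr (session_pb)}, which entangles RB and power allocations across all uplink sessions a UE participates in: a naive swap can easily push total energy above $E_{\sigma(s),\text{budget}}$. I would resolve this by a joint adjustment of power during the swap --- since the rate is strictly monotonic and concave in $p$ while energy is linear in $p \cdot t_\ell^{(ul)}$, one can always find a feasible power rescaling that simultaneously preserves the transmitted bits and the energy consumption, provided the swap magnitude is sufficiently small to respect the $[0,P_{\max}]$ box constraint~\eqref{cons: power (session_pb)}. A secondary subtlety is that the decrease in $T$ must be genuinely strict rather than infinitesimal, which follows from the fact that the active constraint in the reduced session can be tightened by a positive amount whenever $\delta$ is positive.
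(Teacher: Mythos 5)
Your overall strategy (a local exchange argument on an optimal point) is genuinely different from the paper's, which instead freezes all time variables, isolates the convex subproblem of maximizing $\min_s \sum_{\ell\ge s} r^{(ul)}_{\sigma(s)}t_\ell^{(ul)}$, and reads the structure off the KKT conditions: a threshold $\overline{s}$ emerges such that every UE with $s\ge\overline{s}$ must have $K_{\sigma(s),\ell}^{(ul)},p_{\sigma(s),\ell}^{(ul)}>0$ in \emph{every} session $\ell\ge s$ with $t_\ell>0$ (hence trivially non-preemptive), while UEs with $s<\overline{s}$ have slack completion constraints and can be rearranged freely. Unfortunately, your exchange argument has a gap exactly where the multi-server coupling bites. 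RBs are a per-session resource, so ``transferring a small fraction of $\sigma(s)$'s RB allocation from $\ell_3$ into $\ell_2$'' is not a resource-conserving move: it requires spare capacity in \eqref{cons: ul RB (session_pb)} at session $\ell_2$, and at an optimum that constraint is generically tight with other FL UEs' and HB traffic's allocations. You would then have to displace another UE $\sigma(s')$ in $\ell_2$, which threatens its completion constraint \eqref{cons: ul completion (session_pb)} and forces a compensating reallocation elsewhere; you give no argument that this cascade terminates or even preserves feasibility. The paper's dual variables $\mu_\ell$ (shadow price of RBs in session $\ell$) and $\lambda_s$ (weight of UE $s$'s completion constraint) are precisely the bookkeeping device that resolves this global coupling, and the key lemma --- $K_{s_0,\ell_0}=0$ with $t_{\ell_0}>0$ forces $\lambda_{s_0}=0$ because $\partial r/\partial K\to\infty$ as $K\to 0$ --- is what lets preemption be confined to non-bottleneck UEs.

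Two secondary points. First, your claim that ``by concavity, the cumulative bits transmitted by $\sigma(s)$ do not decrease'' is too quick: at a preempted session both $K_{\sigma(s),\ell_2}^{(ul)}$ and $p_{\sigma(s),\ell_2}^{(ul)}$ are typically zero, and jointly scaling $(K,p)$ up from zero yields only a finite (linear) gain, not the infinite marginal rate you get in $K$ alone at fixed $p>0$; whether the gain at $\ell_2$ beats the loss at $\ell_3$ then depends on an energy-efficiency comparison weighted by the (unequal) durations $t_{\ell_2},t_{\ell_3}$, which you would need to carry out explicitly under \eqref{cons: energy_cstr (session_pb)}. Second, in your non-idle case where a session carries only HB traffic, merging it into an adjacent session does not by itself reduce $T$; the paper instead shifts such duration into $T_{idle}$, which keeps $T$ fixed but slackens the computation-energy terms, and only then argues that the created slack can be converted into a strict improvement. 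Your conclusion of a ``strict reduction'' from the merge alone is not justified as written.
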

    \begin{proof}
        In Appendix~\ref{appendix: proof non preemptive non idle}.
        \vspace{-.35cm}
    \end{proof}

		\vspace{-.15cm}
	\subsection{Algorithm Development}
	\vspace{-.08cm}
	Xu~\cite{Xu_2024_TDMA_WFL} considers a single-server JCSRA problem, establishes closed-form solutions for certain variables and transforms the problem to a convex optimization problem. However, similar approaches are not applicable in multi-server scenarios, where variable coupling is significantly more complex. Therefore, we develop a specific algorithm tailored for this problem.
	We now present the algorithm for solving the problem~$(\setP_{\sigma})$, which considers in addition the coexistence with HB traffic UEs. First, we reformulate the problem to reduce its dimensionality, after which we develop an algorithm that addresses the non-convexity and non-separable constraints.
	
	\subsubsection{HB Traffic Constraint Reformulation}

	The only interaction that HB \ac{UE}s have with FL \ac{UE}s is through the number of RBs shared during each phase. We introduce a slack variable for an arbitrary downlink phase~$\ell'$ (reps. uplink phase~$\ell$), defined as the sum of RBs occupied~$K_{\mathrm{HB}, \ell'}^{(\mathrm{dl})}$ (resp. $K_{\mathrm{HB}, \ell}^{(\mathrm{ul})}$, such that:
	\begin{equation}
	(\forall\ell'\in\setS)\quad	\sum_{e} K_{e,\ell'}^{(\mathrm{dl})} \leq K_{\mathrm{HB}, \ell'}^{(\mathrm{dl})},
	\end{equation}
	and 
	\begin{equation}
	(\forall\ell\in\setS)\quad	\sum_{e} K_{e,\ell}^{(\mathrm{ul})} \leq K_{\mathrm{HB}, \ell}^{(\mathrm{ul})}.
	\vspace{-.15cm}
	\end{equation}
	\vspace{-.15cm}
	\begin{proposition}
    \label{theorem: HB reformulation}
		The problem~$(\setP_{\sigma})$ is equivalent to the problem~$(\setP 1_{\sigma})$ by replacing the variables $K_{e,\ell'}^{(\mathrm{dl})}$ and $K_{e,\ell}^{(\mathrm{ul})}$ for~$e\in\setE$ by $K_{\mathrm{HB},\ell'}^{(\mathrm{dl})}$ and $K_{\mathrm{HB},\ell}^{(\mathrm{ul})}$ respectively and the constraint~$\eqref{cons: eMBB each e (session_pb)}$ by 
		\begin{multline}
			 \sum_{\ell'} K_{\mathrm{HB},\ell'}^{(\mathrm{dl})}t_{\ell'}^{(\mathrm{dl})} +\sum_{\ell} K_{\mathrm{HB},\ell}^{(\mathrm{ul})}t_{\ell}^{(\mathrm{ul})} + T_{\mathrm{idle}}K
			\\ 
			\geq a \theta \Big( \sum_{\ell'} t_{\ell'}^{(\mathrm{dl})}+T_{\mathrm{idle}}+ \sum_{\ell} t_{\ell}^{(\mathrm{ul})} \Big).
			\label{cons: embb transformed (session_pb)}
		\end{multline}
		\vspace{-.15cm}
	\end{proposition}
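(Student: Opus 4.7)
The plan is to establish the equivalence by showing the feasible sets of $(\setP_\sigma)$ and $(\setP 1_\sigma)$ project onto each other while preserving the objective. Since the two formulations share the identical objective \eqref{obj: session-based problem} and differ only through the HB-related variables and constraints, the task reduces to demonstrating that \eqref{cons: eMBB each e (session_pb)} together with the per-HB-UE sharing constraints is jointly equivalent to \eqref{cons: embb transformed (session_pb)} after introducing the aggregate slacks $K_{HB,\ell'}^{(dl)}, K_{HB,\ell}^{(ul)}$.

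For the forward direction, I would start from an arbitrary feasible point of $(\setP_\sigma)$ and set $K_{HB,\ell'}^{(dl)} := \sum_e K_{e,\ell'}^{(dl)}$ and $K_{HB,\ell}^{(ul)} := \sum_e K_{e,\ell}^{(ul)}$. Recalling that each per-UE rate is linear in the allocated RBs, i.e.\ $r_{e,\cdot}^{(\cdot)} = K_{e,\cdot}^{(\cdot)} B\log_2(1+\gamma_e)$, I would divide \eqref{cons: eMBB each e (session_pb)} by $B\log_2(1+\gamma_e)$ and sum over $e \in \setE$. Using that the full $K$ RBs are devoted to HB UEs during the idle interval, so that $\sum_e r_{e,\max}/(B\log_2(1+\gamma_e)) = K$, the left-hand side collapses to the aggregate form appearing in \eqref{cons: embb transformed (session_pb)}, while the right-hand side yields $\theta T \sum_e 1/(B\log_2(1+\gamma_e)) = a\theta T$ by the definition of $a$.

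For the converse direction, given any feasible solution of $(\setP 1_\sigma)$, I would disaggregate the slacks via the explicit weighting $K_{e,\ell'}^{(dl)} := \frac{1}{a\, B\log_2(1+\gamma_e)} K_{HB,\ell'}^{(dl)}$, and analogously for the uplink and idle phases. This choice is designed so that every HB UE experiences the same normalized contribution $K_{HB,\ell'}^{(dl)}/a$ in each session. Two checks then remain. (i) The RB sharing constraint is satisfied with equality, since $\sum_e K_{e,\ell'}^{(dl)} = \frac{K_{HB,\ell'}^{(dl)}}{a} \sum_e \frac{1}{B\log_2(1+\gamma_e)} = K_{HB,\ell'}^{(dl)}$. (ii) Substituting the proposed allocation into \eqref{cons: eMBB each e (session_pb)}, multiplying through by $a$, and grouping the downlink, idle, and uplink contributions recovers exactly \eqref{cons: embb transformed (session_pb)}, which holds by hypothesis.

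The main subtlety I expect is in carefully handling the idle-phase term $T_{idle} r_{e,\max}$ and ensuring the convention stated in the paper (``all $K$ RBs are assigned to HB UEs during the FL idle time'') is used consistently, so that the aggregated idle contribution collapses to $T_{idle} K$ on the LHS and $T_{idle} K$ also appears naturally in \eqref{cons: embb transformed (session_pb)}. Once this bookkeeping is settled, and since the transformation in both directions leaves the objective \eqref{obj: session-based problem} invariant (the HB variables do not appear there), the two problems admit identical feasible regions and optimal values, which is the claimed equivalence.
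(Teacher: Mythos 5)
Your proof is correct, and it reaches the result by a noticeably different route than the paper. The paper isolates, for a single session with an HB budget $K_{HB}$, the max--min-rate subproblem $\max_{\{K_e\}}\min_e K_eB\log(1+\gamma_e)$ subject to $\sum_e K_e\le K_{HB}$, passes to its epigraph LP, and reads off from the dual (optimal multiplier $\mu^*=1/a$, strong duality) that the best common rate is $K_{HB}/a$; the assembly of this per-session fact into the time-averaged constraint~\eqref{cons: eMBB each e (session_pb)} is left largely implicit. You instead argue both directions explicitly on the full constraint: necessity by summing the per-UE constraints with weights $1/(B\log_2(1+\gamma_e))$ (so the rates collapse to RB counts and the right-hand side produces $a\theta T$), and sufficiency by exhibiting the proportional disaggregation $K_{e,\ell}= K_{HB,\ell}/(a\,B\log_2(1+\gamma_e))$, which sums to $K_{HB,\ell}$ and equalizes all per-UE rates at $K_{HB,\ell}/a$. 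The two arguments hinge on the same object --- the constant $a$ and the rate-equalizing allocation, which is exactly the primal optimizer of the paper's dual LP --- but yours is more elementary (no duality needed) and, importantly, handles the necessity direction for the \emph{time-averaged} constraint cleanly, which the per-session max--min argument does not immediately give since a UE could in principle trade rate across sessions. The one bookkeeping point you flag, $\sum_e r_{e,\max}/(B\log_2(1+\gamma_e))=K$, is indeed the right reading of the paper's convention that all $K$ RBs go to HB UEs during $T_{idle}$, and your construction respects the nonnegativity and RB-sharing constraints, so the equivalence of feasible sets and optimal values follows as you state.
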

    \begin{proof}
        In Appendix~\ref{appendix: proof HB reformulation}.
        \vspace{-.15cm}
    \end{proof}
    By the proposition, only the total RBs demanded by HB traffic need to be considered in~$(\setP 1_{\sigma})$. The resulting optimization problem becomes, therefore, scalable to the number of HB traffic UEs in the network.

	\subsubsection{Non-Convexity Handling}
	The problem is non-convex due to the product term in the communication energy constraint~\eqref{cons: energy_cstr (session_pb)}, in the product between the time and rate in the transmission completion constraints of downlink~\eqref{cons: dl completion (session_pb)}, uplink~\eqref{cons: ul completion (session_pb)}, HB~\eqref{cons: embb transformed (session_pb)}. The computing energy is to be constrained from above, while the completion constraints must ensure a minimum, thus the optimization will proceed in the directions of minimization and maximization, respectively, which require distinct handling.

	\paragraph{Maximizing Product}

	The product consists of the product of the durations with either RB allocation variables as in~\eqref{cons: dl completion (session_pb)}, \eqref{cons: embb transformed (session_pb)}, or with the concave (see Appendix~\ref{appendix: proof convex problem}) uplink rate expression w.r.t. power and RB allocation.
	Each product can be seen as the quotient of a concave function and a convex function~$1/t$ with $t$ any duration variable. 
	With the development in fractional programming, we employ the well-known \emph{quadratic transform}~\cite{Shen_FP_quadraticTransform_2018} to handle the product terms in order to obtain a stationary point with an iterative algorithm.

	Using quadratic transform for handling product terms, given a concave function~$X:x\mapsto X(x)$ to multiply with a certain duration variable~$t$, the product term can be transformed as: $(\forall (x,t)\in\setX\times \R^+)$
	\vspace{-.1cm}
	\begin{equation} 
		\hspace{-.2cm}g(x,t) \defeq X(x)t \\ 
= \max_y\! \Big(\!2y\sqrt{X(x)} - \!\frac{y^2}{t}\!\Big)\! \defeq\! {\max_y}\,  \hat{g}(x,\!t,\!y).
	\label{eq: quadratic transform}
	\end{equation}
	The variable~$y$ is introduced as an auxiliary variable. The transform has the advantages of:
	\begin{itemize}
		\item Equivalent solutions: $(x^*, t^*)$ maximizes of~$g$ if and only if $(x^*, t^*, y^*)$ maximizes~$ \hat{g}$ for chosen~$y^*$,
		\item Equivalent objective: as already stated in~\eqref{eq: quadratic transform}, for any~$(x,t)$, the equality holds with~$g(x,t)=\hat{g}(x,t, y^*)$ with~$y^*=\arg\min \hat{g}(x,t,y) = \sqrt{X(x)}t$.
	\end{itemize}
	
	Using the transform, we introduce a slack variable for each product term in each non-convex constraint as follows, while denoting the resulting constraint~$(x)$ with the notation~$\widehat{(x)}$: 
	\begin{itemize} 
		\item $(\widehat{\ref{cons: embb transformed (session_pb)}})$:  $y^{(\mathrm{dl})}_{\mathrm{HB}, \ell'}$ and $y^{(\mathrm{ul})}_{\mathrm{HB},\ell}$ in~\eqref{cons: embb transformed (session_pb)}:
		\begin{multline}
\hspace{-.9cm}	
\sum_{\ell'} \Bigg(\!2y^{(\mathrm{dl})}_{\mathrm{HB}, \ell'}\sqrt{K_{\mathrm{HB},\ell'}^{(\mathrm{dl})}} -\frac{y^{(dl)2}_{\mathrm{HB}, \ell'}}{ t_{\ell'}^{(\mathrm{dl})}}\!\Bigg) +\sum_{\ell} \Bigg(\!2y^{(\mathrm{ul})}_{\mathrm{HB}, \ell}\sqrt{K_{\mathrm{HB},\ell}^{(\mathrm{ul})}}
 \\
\hspace{-.7cm}
-\!\frac{y^{(ul)2}_{\mathrm{HB}, \ell}}{ t_{\ell}^{(\mathrm{ul})}}\!\Bigg) 
			\!\geq \! a\theta \Big(\! \sum_{\ell'} t_{\ell'}^{(\mathrm{dl})}\!\! +\! \sum_{\ell} t_{\ell}^{(\mathrm{ul})}\!\!+ T_{\mathrm{idle}}\Big) - KT_{\mathrm{idle}}, \tag{$\widehat{\ref{cons: embb transformed (session_pb)}}$}
		\end{multline}
		\item $(\widehat{\ref{cons: dl completion (session_pb)}})$:  $y^{(\mathrm{dl})}_{\ell'}$ in~\eqref{cons: dl completion (session_pb)}:
		\begin{equation}
		\hspace{-0.8cm}
        (\forall \ell'\in\setS) \sum_{i\leq \ell'} \Bigg( 2y^{(\mathrm{dl})}_{\ell'}\sqrt{r_{\ell'}^{(\mathrm{dl})}(K_{i}^{(\mathrm{dl})})} - \frac{y^{(dl)2}_{\ell'}}{t_{i}^{(\mathrm{dl})}} \Bigg)\geq D,
		\tag{$\widehat{\ref{cons: dl completion (session_pb)}}$}
		\end{equation}
		\item $(\widehat{\ref{cons: ul completion (session_pb)}})$: $y^{(\mathrm{ul})}_{s,\ell}$ in~\eqref{cons: ul completion (session_pb)}: $	(\forall s\in\setS) $
		\begin{equation}
%		\hspace{-1cm}
		\sum_{\ell\geq s}\!\!\Bigg(\! 2y^{(\mathrm{ul})}_{s,\ell} \! \sqrt{\! r^{(\mathrm{ul})}_{\sigma(s)}( K_{\sigma(s),\ell}^{(\mathrm{ul})}, p_{\sigma(s),\ell}^{(\mathrm{ul})})} - \frac{y^{(ul)2}_{s,\ell}}{t_{\ell}^{(\mathrm{ul})}}\! \Bigg) \!\! \geq\! D.
		\tag{$\widehat{\ref{cons: ul completion (session_pb)}}$}
		\end{equation}
	\end{itemize}

	The slack variables updates will be detailed together in the section~\Cref{sec: algorithm}.
	\paragraph{Minimizing Product Term}
	The product term of the communication power and duration of communication in the transmit energy is to be upper bounded. We aim to find a tight convex approximation. Using the principle of \ac{MM}, for any point $(\hat{p}_{\sigma(s),\ell}^{(\mathrm{ul})},\hat{t}_{\ell}^{(\mathrm{ul})})$, a tight convex upper bound can be found, for all~$s,\ell\in\setS$:
	\begin{equation}
		\hspace{-.17cm}p_{\sigma(s),\ell}^{(\mathrm{ul})}t_{\ell}^{(\mathrm{ul})}\leq \frac{p_{\sigma(s),\ell}^{(ul)2}}{2y^{(E)}_{s,\ell}} + \frac{y^{(E)}_{s,\ell}t_{\ell}^{(ul)2}}{2} \defeq {\hat{\phi}_{s,\ell}}(p_{\sigma(s),\ell}^{(\mathrm{ul})}, t_{\ell}^{(\mathrm{ul})}),
		\label{eq: MM surrogate def}
	\end{equation}
	with~$y_{s,\ell}^{(E)} =\hat{p}_{\sigma(s),\ell}^{(\mathrm{ul})} / \hat{t}_{\ell}^{(\mathrm{ul})} $. The function ${\hat{\phi}_{s,\ell}}$ is convex and the inequality is tight at the point $(\hat{p}_{\sigma(s),\ell}^{(\mathrm{ul})},\hat{t}_{\ell}^{(\mathrm{ul})})$. Denote the approximated total energy $\hat{E}^{(\mathrm{tot})}_{\sigma(s)} = E^{(\mathrm{cp})}_{\sigma(s)} + \sum_{\ell}{\hat{\phi}_{s,\ell}}$. The constraint~\eqref{cons: energy_cstr (session_pb)} is transformed as such to~$(\widehat{\ref{cons: energy_cstr (session_pb)}})$.

	\subsubsection{Algorithm}
	\label{sec: algorithm}
	Given any feasible point of the problem $X\in\setX$, the updates conducted on the auxiliary variables in set~$\setY=\{(y_{\mathrm{HB},\ell'}^{(\mathrm{dl})}, y_{\mathrm{HB},\ell}^{(\mathrm{ul})}, y^{(\mathrm{dl})}_{\ell'}, y_{s, \ell}^{(\mathrm{ul})}, y_{s,\ell}^{(E)})\}\subset \R_{++}^{3S+S(S+1)}$ are:
	\begin{equation}
\begin{cases}
	\!\!\!(\forall \ell'\in\setS) &\hspace{-.3cm} y_{\mathrm{HB},\ell'}^{(\mathrm{dl})} \!\!=\!\! \sqrt{\!K_{\mathrm{HB},\ell'}^{(\mathrm{dl})}}t_{\ell'}^{(\mathrm{dl})}, y^{(\mathrm{dl})}_{\ell'}\!\! =\!\! \sqrt{\!K_{\ell'}^{(\mathrm{dl})}}\!t_{\ell'}^{(\mathrm{dl})}\\
\!\!\!(\forall \ell\in\setS)\ &\hspace{-.3cm}  y_{\mathrm{HB},\ell}^{(\mathrm{ul})} = \sqrt{K_{\mathrm{HB},\ell}^{(\mathrm{ul})}}t_{\ell}^{(\mathrm{ul})},\\
\!\!\!(\forall s\in\setS)(\forall \ell\in\setS) &\hspace{-.3cm}  y_{s, \ell}^{(\mathrm{ul})} = \sqrt{r_{\sigma(s), \ell}^{(\mathrm{ul})}}t_{\ell}^{(\mathrm{ul})}, y_{s,\ell}^{(E)}=\frac{p_{\sigma(s),\ell}^{(\mathrm{ul})}}{t_{\ell}^{(\mathrm{ul})}}.\\[-.35cm]
\end{cases}
		\label{eq: update on y}
	\end{equation}

	The resulting transformed problem~$(\setT\setP_{\sigma})$ from~$(\setP 1_{\sigma})$ given $Y\in\setY$ is:
	\begin{subequations}
		\label{pb: session-based problem: convex iterative subproblem}
		\begin{align}
			\hspace{-0cm}(\setT\setP_{\sigma})\!\!:	\!\min_{X\in\setX}  & \sum_{\ell'} t_{\ell'}^{(\mathrm{dl})}\!\!+\! T_{\mathrm{idle}}\!+\! \sum_{\ell} t_{\ell}^{(\mathrm{ul})}
			\\[-.1cm]
		\hspace{-.9cm}	\mathrm{s.t.}\ \  &  \eqref{cons: computation time (session_pb)}, (\widehat{\ref{cons: dl completion (session_pb)}}), \eqref{cons: dl RB (session_pb)},  (\widehat{\ref{cons: ul completion (session_pb)}}), 
			 \eqref{cons: ul RB (session_pb)},(\widehat{\ref{cons: energy_cstr (session_pb)}}),(\widehat{\ref{cons: embb transformed (session_pb)}}),  
             \eqref{cons: power (session_pb)},\notag \\[-.1cm]
			& \hspace{-1.69cm}(\forall s,\! \ell,\! \ell'\!\in\!\setS)\ t_{\ell'}^{(\mathrm{dl})}\!\!, t_{\ell}^{(\mathrm{ul})}\!\!, T_{\mathrm{idle}}, K_{s,\ell}^{(\mathrm{ul})}\!, K_{\ell'}^{(\mathrm{dl})}\!\!, K_{\mathrm{HB},\ell'}^{(\mathrm{dl})},K_{\mathrm{HB},\ell}^{(\mathrm{ul})}   \geq 0.
			\label{cons: positivity (session_pb)}
		\end{align}
	\end{subequations}
	\begin{theorem}
		Given fixed auxiliary variables $Y\in\setY$, the subproblem~$(\setT\setP_{\sigma})$ is a convex optimization problem.
		\label{theorem: convex TP}
	\end{theorem}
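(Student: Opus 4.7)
The plan is to verify convexity of $(\setT\setP_\sigma)$ by inspecting its objective and each constraint in turn, given that all auxiliary variables $Y\in\setY$ are now fixed positive constants. The objective $\sum_{\ell'} t_{\ell'}^{(dl)} + T_{idle} + \sum_\ell t_\ell^{(ul)}$ is linear, hence convex, so the entire task reduces to showing that each constraint carves out a convex region in $\setX$.

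First, I would handle the easy constraints: the RB-sharing inequalities~\eqref{cons: dl RB (session_pb)},~\eqref{cons: ul RB (session_pb)}, the box constraint~\eqref{cons: power (session_pb)}, the non-negativity constraints, and the computation-time constraint~\eqref{cons: computation time (session_pb)} are all affine. Next, for the completion-type constraints $(\widehat{\ref{cons: dl completion (session_pb)}})$, $(\widehat{\ref{cons: ul completion (session_pb)}})$, and $(\widehat{\ref{cons: embb transformed (session_pb)}})$, the key structural observation is that each summand from the quadratic transform has the form $2y\sqrt{R(x)} - y^2/t$ with $y>0$ fixed. The function $t\mapsto -y^2/t$ is concave on $\R_{++}$ (since $1/t$ is convex). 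For the $\sqrt{R}$ part: in the downlink case $R=r_{\ell'}^{(dl)}(K)$ is linear (hence concave) in $K$, and for HB traffic $R = K_{HB}$ is linear; in the uplink case $R = r_{\sigma(s)}^{(ul)}(K,p)$ is concave in $(K,p)$ as invoked from Appendix~\ref{appendix: proof convex problem} (this is the usual perspective-function argument applied to $B\log_2(1+p\gamma)$). Since $\sqrt{\cdot}$ is concave and nondecreasing on $\R_+$, composition with a concave nonnegative function preserves concavity, so $\sqrt{R}$ is concave, and scaling by $2y\geq 0$ preserves this. Summing concave functions keeps concavity, so each LHS is concave; the RHS is a constant ($D$) in $(\widehat{\ref{cons: dl completion (session_pb)}})$ and $(\widehat{\ref{cons: ul completion (session_pb)}})$ and an affine expression in the time variables for $(\widehat{\ref{cons: embb transformed (session_pb)}})$. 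A ``concave $\geq$ affine'' inequality is equivalent to ``convex $\leq 0$'', hence defines a convex set.

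The remaining case is the energy constraint $(\widehat{\ref{cons: energy_cstr (session_pb)}})$. Its communication part $\hat{h}_{s,\ell}(p,t) = p^2/(2y_{s,\ell}^{(E)}) + y_{s,\ell}^{(E)} t^2/2$ is a sum of two univariate quadratics with positive weights and is therefore jointly convex in $(p,t)$. For the computation part, $E^{(cp)}_{\sigma(s)}=\kappa\alpha^3\Theta^3_{\sigma(s)}/\tau_{\sigma(s)}^{(cp)2}$: by~\eqref{cons: computation time (session_pb)}, $\tau_{\sigma(s)}^{(cp)}$ is an affine nonnegative combination of the session durations and $T_{idle}$, and $\tau\mapsto \tau^{-2}$ is convex on $\R_{++}$; composition of a convex nonincreasing function with an affine map is convex. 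Summing convex functions then gives a convex LHS, and bounding it above by the constant $E_{\sigma(s),\text{budget}}$ yields a convex constraint.

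I do not expect a real obstacle here; the proof is essentially a checklist. The only subtle point is making sure the domain on which we assert concavity/convexity is the right open set (in particular $t_{\ell}^{(dl)}, t_{\ell}^{(ul)}>0$ for the $-y^2/t$ and $\tau^{-2}$ terms), which can be handled by appealing to the interior of the feasible set or by noting that the closure extends continuously. With every constraint shown to be convex and the objective linear, $(\setT\setP_\sigma)$ is a convex optimization problem, completing the proof.
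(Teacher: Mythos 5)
Your proof is correct and follows essentially the same route as the paper's Appendix~\ref{appendix: proof convex problem}: a linear objective plus a constraint-by-constraint check resting on the concavity of the uplink rate as a perspective function, the concavity of $\sqrt{\cdot}$ composed with nonnegative concave functions, the concavity of $-y^2/t$ on $t>0$, and convexity of the quadratic and $\tau^{-2}$-type terms under affine substitution. The paper states these facts as a terse checklist while you expand each step (and flag the $t>0$ domain issue), but the underlying argument is identical.
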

    \begin{proof}
        In Appendix~\ref{appendix: proof convex problem}.
    \end{proof}
	The proposed algorithm to solve the optimization problem~\eqref{pb: session-based problem} given an arbitrary ordering~$\sigma$ is detailed in~\Cref{algo: iterative giving ordering}. It has a guarantee to converge to a stationary point.
	\begin{algorithm}\small
		\SetAlgoLined
		\textbf{Initialize:} $X_0\in\setX$, $T_0 = \infty$, $\varepsilon=10^{-4}$, $n_{\max} = 100$.  \\
		
		\For{$n = 1,\ldots$}{
			\underline{\emph{Update of the auxiliary variables~$Y$}}\\
			Compute $Y_{n}$ according to~\eqref{eq: update on y} based on~$X_{n-1}$\;
			\underline{\emph{Update of the original variables~$X$}}\\
			Solve the convex optimization problem~$(\setT\setP_{\sigma})$ given~$Y_{n}$:  compute~$X_n$ with achieved optimum~$T_n$\;
			\underline{\emph{Stopping criterion}}\\
			\If{$\|T_n-T_{n-1}||/\|T_n\| \leq \varepsilon$ or $n \geq n_{\max}$}{Stop loop}
		}
		\KwResult{$X_n$ and~$T_n$.}
		\caption{Iterative algorithm solving $(\setP_{\sigma})$~\eqref{pb: session-based problem}}
		\label{algo: iterative giving ordering}
	\end{algorithm}
	\vspace{-.45cm}
	
	\begin{theorem}
		The sequence~$(T_n)_{n\in\N}$ of~\Cref{algo: iterative giving ordering} is a monotonically decreasing sequence and~$(X_n, Y_n)_{n\in\N}$ converges to a stationary point of~$(\setP_{\sigma})$.
		\label{theorem: algo convergence}
			\vspace{-.15cm}
	\end{theorem}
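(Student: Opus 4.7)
The plan is to establish monotonic decrease of $(T_n)$ first, then boundedness and subsequential convergence, and finally identify the limit as a stationary point of $(\setP_\sigma)$. For the monotonic decrease, I would exploit the two tightness properties built into the updates~\eqref{eq: update on y}. First, the quadratic transform is tight at its minimizer in $y$: with $Y_n$ computed from $X_{n-1}$ according to~\eqref{eq: update on y}, each transformed constraint $(\widehat{\ref{cons: dl completion (session_pb)}})$, $(\widehat{\ref{cons: ul completion (session_pb)}})$, $(\widehat{\ref{cons: embb transformed (session_pb)}})$ evaluates at $X_{n-1}$ to the same value as the corresponding original constraint~\eqref{cons: dl completion (session_pb)}, \eqref{cons: ul completion (session_pb)}, \eqref{cons: embb transformed (session_pb)}. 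Second, the MM upper bound is tight: with $y^{(E)}_{s,\ell}=p^{(ul)}_{\sigma(s),\ell}/t^{(ul)}_\ell$ evaluated at $X_{n-1}$, $\hat h_{s,\ell}$ coincides with $p^{(ul)}_{\sigma(s),\ell}t^{(ul)}_\ell$, so $(\widehat{\ref{cons: energy_cstr (session_pb)}})$ reduces to~\eqref{cons: energy_cstr (session_pb)} at $X_{n-1}$. Hence $X_{n-1}$ is feasible for $(\setT\setP_\sigma)$ parameterized by $Y_n$ with objective $T_{n-1}$, and since $X_n$ is the global minimizer of this convex subproblem (\Cref{theorem: convex TP}), I conclude $T_n\leq T_{n-1}$.

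Since $T_n\geq 0$, the monotone sequence converges to some $T^*\geq 0$. To upgrade this to convergence of iterates, I would observe that all durations are upper bounded by $T_0$, the RB variables lie in $[0,K]$, the powers in $[0,P_{\max}]$, and hence by~\eqref{eq: update on y} the auxiliary variables $Y_n$ are also bounded. Bolzano--Weierstrass then yields a convergent subsequence $(X_{n_k},Y_{n_k})\to(X^*,Y^*)$; continuity of the closed-form updates~\eqref{eq: update on y} implies $Y^*$ is the auxiliary update computed from $X^*$, so both the quadratic transform and the MM bound remain tight at $(X^*,Y^*)$.

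The main obstacle is showing $X^*$ satisfies the KKT conditions of $(\setP_\sigma)$. My plan is as follows. Since $X_{n_k}$ minimizes the convex problem $(\setT\setP_\sigma)$ parameterized by $Y_{n_k}$ and the optimal value depends continuously on $Y$, $X^*$ minimizes $(\setT\setP_\sigma)$ at $Y^*$; assuming Slater's condition (which holds at any strictly feasible interior point), the KKT conditions of the transformed problem hold at $(X^*,Y^*)$. It then suffices to show these coincide with the KKT system of $(\setP_\sigma)$ at $X^*$. For the quadratic-transform constraints, the envelope theorem combined with the stationarity $\partial\hat g/\partial y=0$ at $y^*$ gives $\nabla_{(x,t)}\hat g(X^*,Y^*)=\nabla_{(x,t)} g(X^*)$, so original and transformed gradients coincide. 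For the MM constraint, direct differentiation yields $\nabla_{(p,t)}\hat h_{s,\ell}\big|_{y^{(E)}=p/t}=(t,p)=\nabla_{(p,t)}(pt)$. Together with matching constraint values, this produces KKT multipliers admissible for $(\setP_\sigma)$. The main subtlety lies in verifying constraint qualification and carefully handling the non-smoothness introduced by the downlink/uplink completion sums, which I would address by restricting attention to the binding indices and invoking standard subdifferential reasoning.
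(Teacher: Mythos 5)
Your proposal is correct and is essentially the expanded version of the paper's own one-line proof, which simply invokes the quadratic-transform convergence result of Shen and Yu together with standard MM convergence: the tightness-at-the-update argument for monotone decrease and the gradient-matching argument identifying the limit as a KKT point of $(\setP_{\sigma})$ are exactly what those cited results encapsulate. The only step deserving extra care is the claimed boundedness of $y^{(E)}_{s,\ell}=p^{(ul)}_{\sigma(s),\ell}/t^{(ul)}_{\ell}$, which can fail if some session duration $t^{(ul)}_{\ell}$ tends to zero along the iterates, so your Bolzano--Weierstrass step implicitly assumes the durations stay bounded away from zero (or requires a separate argument for sessions that vanish in the limit).
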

    \begin{proof}
    	{In Appendix~\ref{appendix: proof algo convergence}.}
        \vspace{-.35cm}
    \end{proof}
	
	\vspace{-.25cm}
	\subsection{Heuristic Ordering: Rigid-based Ordering}
	\vspace{-.1cm}
	\label{sec: heuristic Ordering}
	The developed multi-server JCSRA algorithm is for a given ordering~$\sigma$ for the uplink session starting time.
	Evaluating every possible combination of orderings is clearly NP-hard.
    Therefore, we utilize the results from the rigid resource allocation discussed in~\Cref{sec: rigid} as a heuristic for determining this ordering.
	The intuition is that the optimal solution derived from the rigid resource allocation within different constraints provides a reliable indication of the appropriate ordering strategy.
	The heuristic has been confirmed via simulations in~\Cref{sec: simul_verify}.

	\vspace{-.45cm}
	\section{Simulations}
	\vspace{-.12cm}
	\subsection{Simulation Settings}
	\vspace{-.1cm}
	\subsubsection{Settings}
	It is considered that at one arbitrary CR, $S=10$ FL UEs are participating in the training in a resource-constrained wireless system with~$K=10$ RBs coexisting with~$20$ HB traffic UEs. All UEs, FL and HB traffic UEs are uniformly distributed in the cell of radius of $\SI{50}{\meter}$. 
	The system has an SCS of \SI{60}{\kilo\hertz} where each RB contains 12 subcarriers.
	The ``long-term" average of the stationary channel is only subject to free-space path loss. 
	The HB traffic minimum rate requirement among all HB traffic UEs is of $\SI[per-mode = symbol]{10}{\mega\bit\per\second}$. We assume that FL UEs train a neural network of model parameter size of~$D=\SI{800}{\mega\bit}$ on a dataset of the same size and dimension of Cifar-10~\cite{cifar}, i.e., in total of 60000 RGB images of size $32\times 32$ with floating points in $32$-bits, distributed among UEs. The 60000 images are assumed to be distributed among the~$S$ UEs according to random ratios in order to simulate imbalanced quantity of local data, hence system heterogeneity. The computation cycle required for one sample $C_s$ is calculated as $15$ cycles per bit~\cite{Do_2022_DRLUAVFL_FLcomputEnergyRef} multiplied by the number of bits contained in one data sample. FL UEs perform local SGD updates of 20 epochs (to reduce the overall CR~\cite{mcmahan_FL_2017}). 
	The complete system parameters are detailed in the~\Cref{tab: param}. For ease of comparison, the UE-wise energy budget constraint is transformed to a network-wide sum energy constraint of~$E_{\mathrm{budget}}$.
	
	\begin{table}[t]
%	\scriptsize
%	\vspace{-.4cm}
	\centering
	\caption{Parameter values used in simulations}
	\label{tab: param}
	\vspace{-.1cm}
	\begin{tabular}{|c|c||c|c|}
		\hline
		\textbf{Parameter} & \textbf{Value} &\textbf{Parameter} & \textbf{Value}\\
		\hline
		$S$ & 10 & $K$ & {18} \\ \hline
		$N_0$ & \SI[per-mode = symbol]{-174}{\deci\belm\per\hertz}& $|\setE|$ & 20\\\hline
		$P^{(d)}$& \numdBm{30} & $\theta$ & \SI[per-mode = symbol]{10}{\mega\bit\per\second}\\\hline
		$P_{\max}$ & \numdBm{23} &  $B$ & \SI{720}{\kilo\hertz}\\\hline
		freq & \SI{3.5}{\giga\hertz} &  $D$ & \SI{800}{\mega\bit}\\\hline
		{$\kappa$} &  {$10^{-28}$}&  {$f_{\max}$} &  {2 GHz}\\ \hline
		{$C_s$} &  15x32x32x3x32 &  {$I_s$} &  {20} \\ \hline 
	\end{tabular}
\end{table}
	\subsubsection{Baselines}

	The baseline methods to compare with are listed as follows:
	\begin{itemize}
		\item Time-uniform \emph{rigid} RB allocation: detailed in~\Cref{sec: rigid}, which most existing RB allocation work on FL is based on. 
		\item Consider FL as an HB traffic: max-sum-rate (\emph{MSR}) and max-min-rate (\emph{MMR}), to show the importance of having a dedicated service class than HB traffic. Note the exact energy planning in this case is not possible, all UEs perform local training with their maximum speeds.
        \item {Single-server JCSRA (sequential transmission)}~\cite{Xu_2024_TDMA_WFL}: only one UE can be assigned for communication at each uplink session~$\ell$ as assumed in all existing CS literature~\cite{Xu_2024_TDMA_WFL, Li_FLCommScheduling_2021}. 
		\end{itemize}
		
		\vspace{-.4cm}
		\subsection{Algorithm Convergence + Effect of Ranking}
		\vspace{-.1cm}
		\label{sec: simul_verify}
	
		The proposed solution consists of an iterative algorithm. Its convergence and performance are verified in~\Cref{fig: convergence}. We initialize the proposed algorithm with the rigid-based formulation results. The algorithm produces feasible solutions at each iteration (energy level, remains stable at about \SI{200}{\joule}), and improves the latency by about \SI{38}{\percent} in 29 iterations. This confirms the monotonicity of the proposed algorithm and the significant gap that rigid-based formulation introduces to the JCSRA approach.
        Compared to common HB power allocation strategy (MSR and MMR), for which no energy planning is possible, MMR achieves higher latency with even much higher energy, while MSR does provide \SI{2.7}{\percent} less latency, while using 4.8$\times$ more energy than available.

		\begin{figure*}
			\begin{minipage}[t]{0.325\textwidth}
				\centering
				\includegraphics[width=\linewidth]{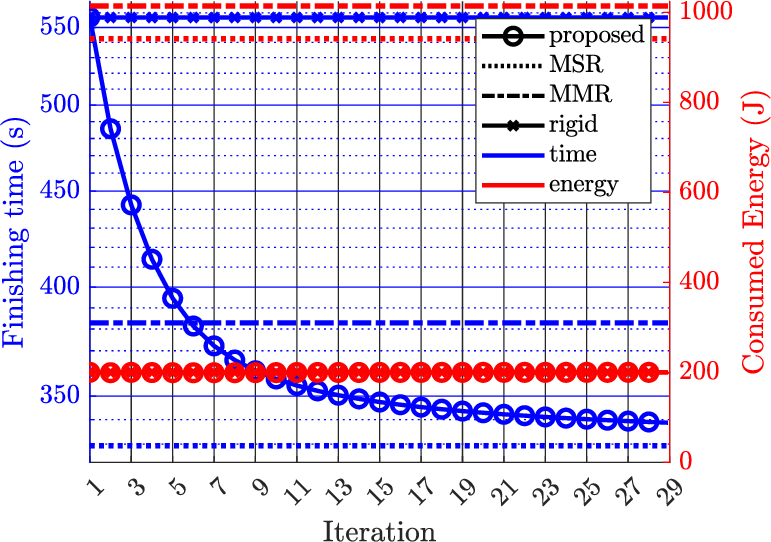}
				\caption{Algorithm convergence with $E_{\mathrm{budget}}$ of value \SI{200}{\joule}. {Legend: Line style indicates method (black lines); color indicates time/energy.}}
				\label{fig: convergence}
			\end{minipage}
			\begin{minipage}[t]{0.30\textwidth}
				\centering
				\includegraphics[width=\linewidth]{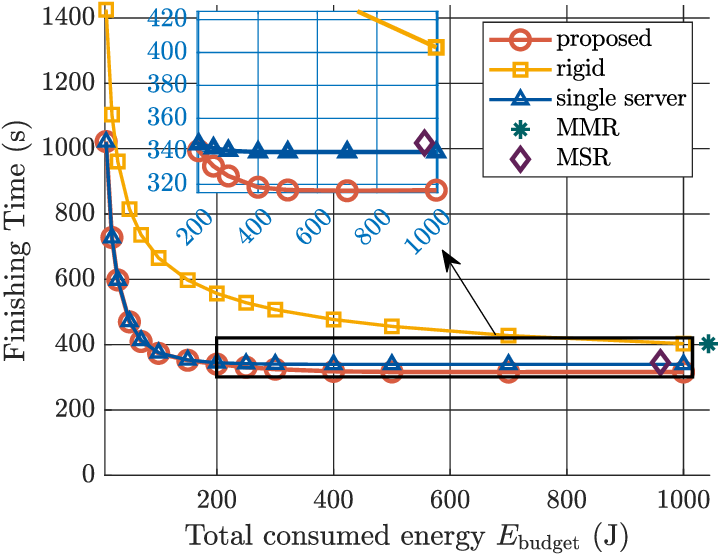}
				\caption{Energy-Time Pareto front}
				\label{fig: pareto}
			\end{minipage}
			\begin{minipage}[t]{0.325\textwidth}
				\centering
				\vspace{-4.15cm}
				\includegraphics[width=0.5\linewidth, trim=0 -5 0 0 ]{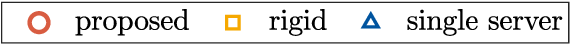}
				\includegraphics[width=0.9\linewidth]{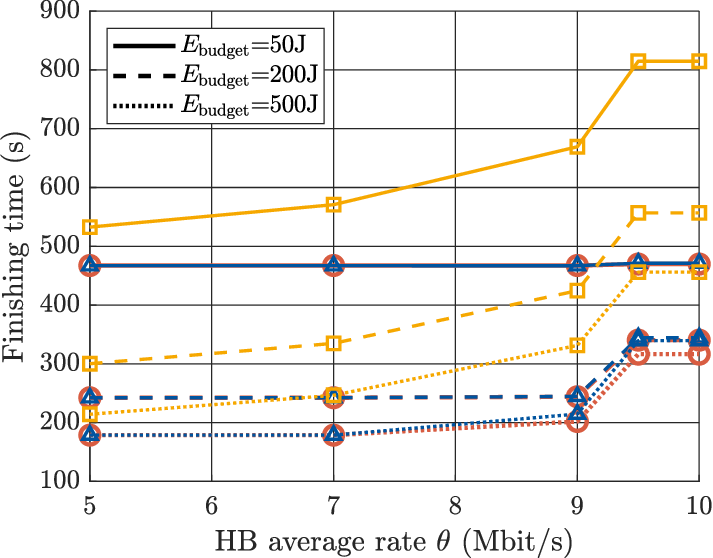}
				\vspace{-.25cm}
				\caption{$\theta$ influence. {Legend: Line style indicates $E_{\mathrm{budget}}$ (black lines); color/marker indicates methods.}}
				\label{fig: theta influence}
			\end{minipage}
			\vspace{-.6cm}
		\end{figure*}

		As specified in~\Cref{sec: heuristic Ordering}, the proposed method uses the ranking given by the rigid formulation. Its performance{ is compared against} the {optimal }ranking, obtained by iterating over all combinations {(5070 ranking combinations with~$S=7$ FL clients considered} here due to{ limited computation capacity)} w.r.t. different system parameters, {as }shown in the comparison \Cref{tab: ranking_confirmation}. {For evaluations w.r.t.~$E_{\mathrm{budget}}$ and the coexisting HB traffic requirement~$\theta$, the }UE's location is generated randomly within a disk{ and the data quantity's split ratio among clients is also randomly generated}. Each random realization characterizes a potentially different system nature{ in terms of network condition and data quantity distributions}.{ We further isolate the effect of data amount heterogeneity by assigning equal channel conditions and splitting data among UEs based on the Dirichlet distribution~\cite{hsu_dirichlet_2019} with parameter~$\alpha$ to characterize the data amount heterogeneity. A smaller value of~$\alpha$ indicates greater heterogeneity within the system{, so the data quantity is highly imbalanced among UEs}.} We evaluate the{ optimal} latency achieved {given }the rigid-based ranking with{ the best performing one among} all possible ranking combinations. 
		{For each parameter setting, 10 random realizations were conducted.
		The percentage of random realizations with relative gaps (ratio of the difference of the achieved latency with the best performing latency) below a threshold is evaluated. We observe that for all evaluated system parameters, all evaluated realizations have their relative gap below $1\%$, nearly all realizations below $0.5\%$, and a big majority were even below~$0.1\%$.}
		This confirms heuristically the{ robustness of the} rigid-based ranking's performance under{ different network conditions and data amount distribution}.
		\begin{table}[t]
	\caption{{Percentage of random realizations with relative gap of heuristic ranking and optimal ranking below a given threshold w.r.t. different system parameters with~$S=7$}}
	\Ccolor
	\begin{tabular}{c|ccc}
		\toprule
		\multirow{2}{*}{Threshold} & \multicolumn{3}{c}{$E_{\textrm{budget}}$ (\SI{}{\joule})} \\
		& $50$ & $100$ & $300$ \\
		\hline
		$\leq 0.1 \%$ & $90 \%$ & $90 \%$ &$60 \%$\\ 
		$\leq 0.5 \%$ & $\bm{100} \%$ & $\bm{100} \%$ &$\bm{100} \%$\\ 
		$\leq 1.0 \%$ & $\bm{100} \%$ & $\bm{100} \%$ &$\bm{100} \%$\\ 
		%				\bottomrule
	\end{tabular}
	\begin{tabular}{c|ccccc}
		\toprule
		\multirow{2}{*}{Threshold} & \multicolumn{5}{c}{$\theta$ (\SI[per-mode = symbol]{}{\mega\bit\per\second})} \\
		&$0.1$ & $0.5$ & $1$ & $5$ & $10$  \\
		\hline
		$\leq 0.1 \%$ & $\bm{100} \%$ & $\bm{100} \%$ &$\bm{100} \%$ &$90 \%$ &$50 \%$\\ 
		$\leq 0.5 \%$ & $\bm{100} \%$ & $\bm{100} \%$ &$\bm{100} \%$ &$90 \%$ &$\bm{100} \%$\\ 
		$\leq 1.0 \%$ & $\bm{100} \%$ & $\bm{100} \%$ &$\bm{100} \%$ &$\bm{100} \%$ &$\bm{100} \%$\\ 
	\end{tabular}
	\begin{tabular}{c|ccccc}
		\toprule
		\multirow{2}{*}{Threshold} & \multicolumn{5}{c}{$\alpha$} \\
		& $0.1$ & $0.5$ & $1$ & $10$ & $100$ \\
		\hline
		$\leq 0.1 \%$ & $\bm{100} \%$ & $\bm{100} \%$ &$\bm{100} \%$ &$20 \%$ &$0 \%$\\ 
		$\leq 0.5 \%$ & $\bm{100} \%$ & $\bm{100} \%$ &$\bm{100} \%$ &$\bm{100} \%$ &$\bm{100} \%$\\ 
		$\leq 1.0 \%$ & $\bm{100} \%$ & $\bm{100} \%$ &$\bm{100} \%$ &$\bm{100} \%$ &$\bm{100} \%$\\ 
		\bottomrule 
	\end{tabular}
	\Cblack
	\label{tab: ranking_confirmation}
\end{table}
		
		\vspace{-.3cm}
		\subsection{Performance Evaluation}
		\vspace{-.08cm}
		The effect of different system parameters on the achieved latency will be shown in this section. The energy budget/finishing time Pareto front is shown in~\Cref{fig: pareto}. It can be observed that the proposed method and single-server have non-negligible gain compared to \emph{rigid} formulation for all energy values. The gain is especially large in the moderate energy constraint region 50-\SI{150}{\joule}, for instance at \SI{50}{\joule} the gain is about \SI{40}{\percent}. 
        The MMR and MSR both consume significant energy. MSR coincides with a single server in a high-energy consumption region.

		The effect with HB-traffic requirement has been shown in~\Cref{fig: theta influence}. Rigid-based formulation is more sensitive to the increase of the HB-traffic increase, while the proposed method and single-server remains stable for a certain region of $\theta$ increase and later increase. In the case of \SI{50}{\joule}, the proposed method remains uniform for different values of~$\theta$, resulting in a high gap between rigid and JCSRA for high HB traffic requirement.

        The effect w.r.t. other parameters such as the number of total RBs available and the model parameter size is shown in \Cref{fig: params eval}. The superiority of JCSRA (proposed method and single-server) with rigid-based formulation is consistent. Time-dependent optimization (JCSRA) is less sensitive to the increasingly stringent system  constraint~$K$, and increasing transmission tasks to complete~$D$. 
		\begin{figure}
			\centering
			\includegraphics[width=0.45\linewidth, trim=0 20 0 0]{final_figures_eps/j6_common_legend.eps}
			\vfill
			\subfloat[{$K$ influence}\label{fig: Ecstr K}]{\includegraphics[width=0.48\linewidth, trim=0 0 0 0]{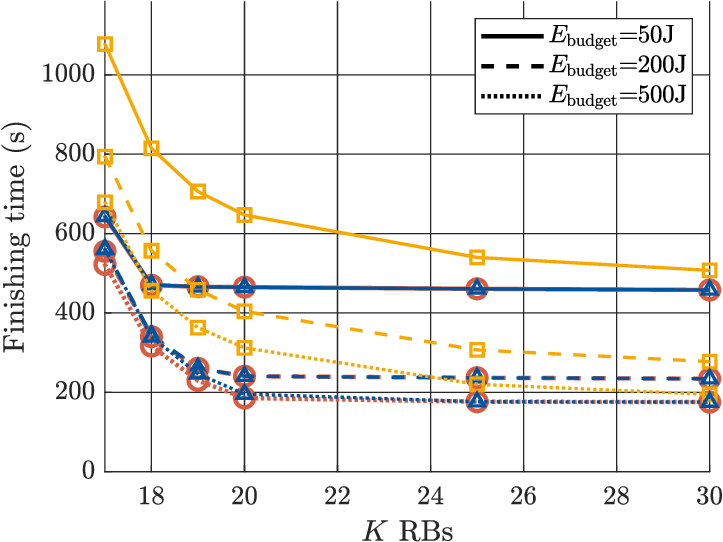}
			}
			\hfill
			\subfloat[ $D$ influence\label{fig: Ecstr theta}]{\includegraphics[width=0.48\linewidth, trim=0 0 0 0]{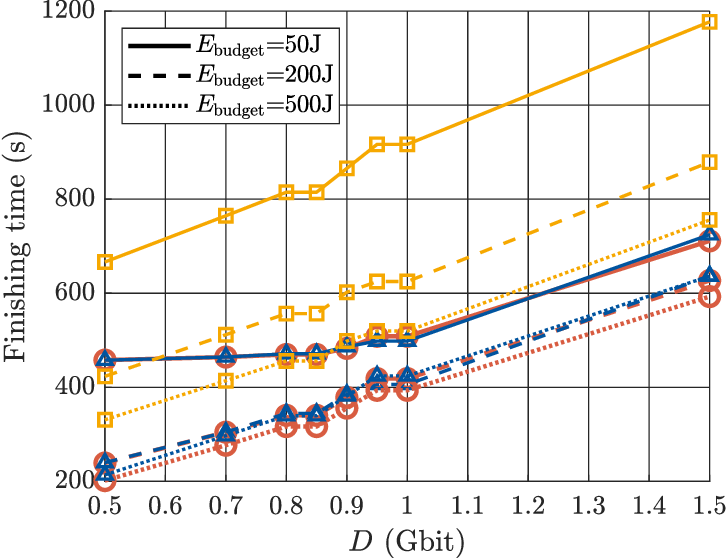}
			}
			\caption{Parameter influence when fixed energy constraint is imposed.  {Legend: Line style indicates $E_{\mathrm{budget}}$ (black lines); color/marker indicates methods.}}
			\label{fig: params eval}
			\vspace{-.45cm}
		\end{figure}

    \vspace{-.3cm}
        \subsection{Proposed and Single-Server Gap Evaluation}
        \label{sec: simulation gap}
        \vspace{-.08cm}
       In all previous results, both the proposed and single-server approaches, categorized as JCSRA methods, consistently outperform the rigid-based formulation.
        However, the two methods have very close performance in general. This section aims to identify scenarios where a multi-server formulation is unnecessary, as the solutions closely align with those of the single-server approach, and to highlight cases where employing a multi-server setup is advantageous.
        
        In this section, we use Dirichlet distribution~\cite{hsu_dirichlet_2019} with parameter~$\alpha>0$ to characterize the system heterogeneity, specifically here regarding the amount of data samples possessed by each FL UE. 
         To alleviate other potential effects, we assume that all FL UEs are equidistant from the BS.
       The following observations can be made regarding the performance gaps between the proposed multi-server and single-server approaches:
        \begin{itemize}[leftmargin=12pt]
            \item The gaps are larger when the system is more homogeneous and energy sufficient (Fig.~\ref{fig: alpha influence}). This is because the advantage of the multi-server approach over the single-server method arises from \ac{RB}{ allocation} during uplink communication. 
            {
           	We observe in addition that all methods have their finishing time decreased when the system becomes more homogeneous, i.e., when~$\alpha$ increases, which is expected since it is known that heterogeneity increases the total latency. Furthermore, we observe that even when the system is homogeneous, rigid-based formulation is performing worse than JCSRA schemes. This is due to the coexistence design with HB traffic, which can also only be rigidly allocated, whereas DL inherently requires fewer RBs for downlink transmission and local training than for uplink; therefore, it exhibits quite poor performance.}
            \item The quality of the communication channel generally affects the size of the gap (Fig.~\ref{fig: cellR influence}). {The more challenging the channel quality is, the more the proposed schemes have advantages. In addition, when the system is heterogeneous $\alpha=0.1$, the gap between rigid allocation and JCSRA schemes increases with cell radius.} 
            \item More available energy leads to increased performance gaps  (Fig.~\ref{fig: Ecstr influence}). When energy availability is limited, it is more beneficial to allow other UEs to wait longer to save energy, resulting in only one UE occupying the RB{, hence multi-server JCSRA would approach the performance of single-server}.
        \end{itemize}

		\begin{figure*}
			\centering
	\includegraphics[width=0.25\linewidth, trim=0 -5 0 0]{final_figures_eps/j6_common_legend.eps}
			\vfill
			\begin{minipage}[t]{0.31\textwidth}
				\centering
				\includegraphics[width=\linewidth]{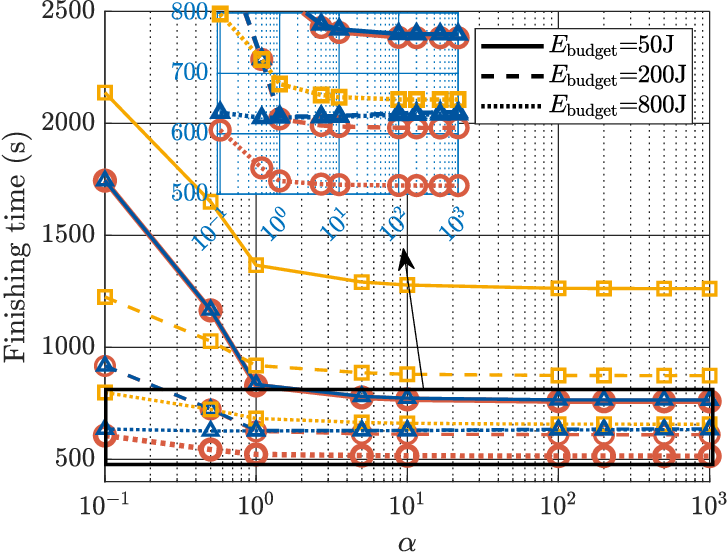}
								\vspace{-.7cm}
				\caption{Heterogeneity factor~$\alpha$ impact with FL UEs having distance of \SI{1000}{\meter} to the BS.   {Legend: Line style indicates $E_{\mathrm{budget}}$ (black lines); color/marker indicates methods.}}
				\label{fig: alpha influence}
			\end{minipage}
			\hfill
			\begin{minipage}[t]{0.31\textwidth}
				\centering
				\includegraphics[width=\linewidth]{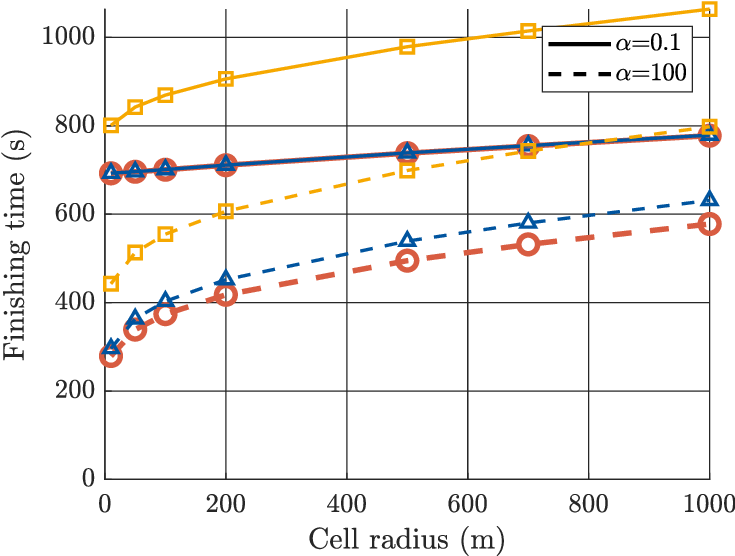}
								\vspace{-.7cm}
				\caption{Cell radius impact with $E_{\mathrm{budget}}= \SI{30}{\joule}$.   {Legend: Line style indicates $\alpha$ (black lines); color/marker indicates methods.}}
				\label{fig: cellR influence}
			\end{minipage}
            \begin{minipage}[t]{0.31\textwidth}
				\centering
				\includegraphics[width=\linewidth]{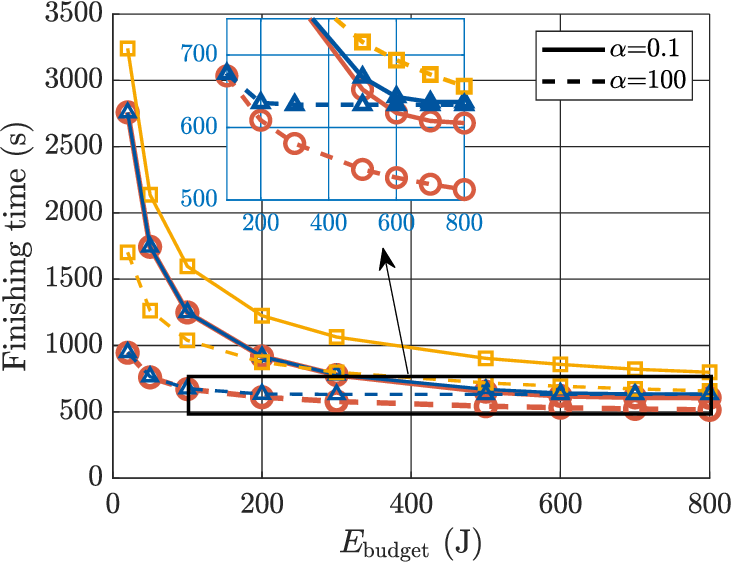}
								\vspace{-.7cm}
				\caption{Energy Constraints impact with FL UEs having distance of \SI{1000}{\meter} to the BS. {Legend: Line style indicates $\alpha$ (black lines); color/marker indicates methods.}}
				\label{fig: Ecstr influence}
			\end{minipage}
						\vspace{-.37cm}
		\end{figure*}

		\vspace{-.3cm}		
		\subsection{{Simulation under Small-Scale Fading Environment}}
			\vspace{-.1cm}
		This work assumes that \ac{CR} lasts within a large-scale coherence block time, so the previously shown results were computed with statistical-average channel states representing a time-average performance. In practice, the channels are subject to fast fading, and the RB is the smallest schedulable amount in 5G NR, so that only an integer amount can be scheduled; also, which RB should be allocated to which UE should be determined. In this subsection, we show an example of how to integrate the solutions of JCSRA into an existing simple real-time algorithm. We will show that the gap between the real-time achieved and the time-average solution is very close.
		
		We adopt the algorithm proposed in~\cite[Section IV]{Leith_FlexiblePF_2013}. The algorithm only requires as input a target rate for each UE and some hyperparameters. It consists of keeping track of the current time-average rate and updating a weight for each UE based on the gap between the time-average rate and the target rate. Then the RB is allocated to UEs depending on the UE weights and the instantaneous channel gain. 
		
		To integrate JCSRA results to the algorithm, we choose as target rate the rate of each UE at the current session given time-averaged parameters, for instance in uplink, the rate is calculated as $r_{s}^{(\mathrm{ul})}$ based on optimal $K_{s,\ell}^{(\mathrm{ul})}$ and $p_{s,\ell}^{(\mathrm{ul})}$ given by~\Cref{algo: iterative giving ordering}. In uplink, $p_{s,\ell}^{(\mathrm{ul})}$ is distributed equally to allocated RBs.  As for the sessions, in addition to \Cref{def: session}, we define that a session can move to the next session, only if the predicted amount of data to be transmitted (only for FL tasks) is attained for all involved UEs at this session.
		
		The simulation environment increments in the time step of slot size of~$\Delta$, the amount of data accumulated with time, and at the end of the simulations (all FL UEs finished their transmission tasks), all constraints are confirmed to be verified. We can see the gap between the real-time algorithm implementation and the time-average performance as shown in \Cref{fig: pareto}. We observe that the relative gap is consistently negligible below \SI{1}{\percent}, which confirms that the predicted time-average performance can be achieved with a small gap.
		
		\begin{figure}[t]
			\centering
			\includegraphics[width=0.8\linewidth]{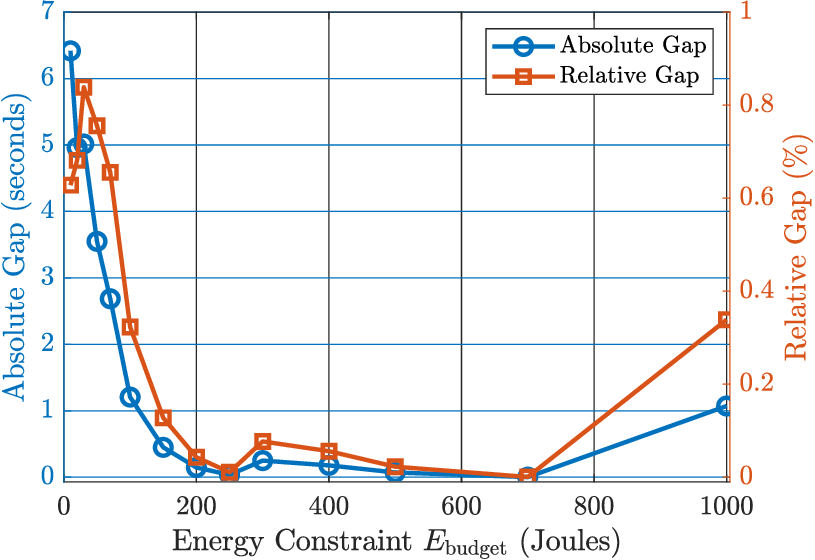}
			\caption{{Gap between real time algorithm and average performance given by JCSRA with $\Delta=\SI{100}{\milli\second}$.}}
			\label{fig: RT gap}
		\end{figure}

		\section{Conclusion}
		\vspace{-.12cm}
		In this work, we have investigated an efficient seamless integration of{ synchronous} DL services within next-generation wireless networks, particularly focusing on the coexistence of DL and HB traffic. The proposed {time-slot-wise} resource allocation framework, reformulated as a session-based problem, consists of a multi-server JCSRA problem. 
        We establish scheduling properties and feasibility conditions for the JCSRA problem and propose an iterative algorithm to tackle the non-convex and non-separable constrained optimization problem. 
		Simulation results validate the proposed method, and highlight the factors influencing the performance gap between the proposed multi-server JCSRA, single-server JCSRA, and rigid allocation. This confirms that the \emph{rigid} resource allocation within a CR is indeed inefficient and inaccurate in terms of the achievable   CR latency with tight and heterogeneous constraints under reasonable allocation. This work also shows that considering single-server JCSRA can be sufficient under constrained and heterogeneous systems, while multi-server JCSRA should be considered for better efficiency in homogeneous and resource-sufficient systems.
        Given the difficulty in quantifying the degree of heterogeneity or constraints within a given system, it remains crucial to consider multi-server JCSRA.  
        Overall, this work emphasizes the importance of considering multi-server JCSRA or other time-dependent optimization within each CR for designing efficiently and estimating accurately the consumed latency with energy constraint when enabling DL in future wireless networks. {Although the current optimization approach would be infeasible for asynchronous DL, this work demonstrates the potential efficiency gain in any practical DL system constrained by limited bandwidth and energy budget. Future work will address the design of energy-budget-aware efficient asynchronous DL frameworks based on the JCSRA principle.}{ We will also investigate in the future extending JCSRA to non-stationary channel conditions.}
		
		\vspace{-.4cm}
		\appendices
		\section{Proof~\Cref{theorem: feasibility (rigid)})}
		\label{appendix: proof feasibility rigid}
		\vspace{-.15cm}
			\begin{itemize}
				\item Increasing~$\tau_{s,cp}$ arbitrarily large makes feasible the constraint~\eqref{cons : dl/ul separation (rigid)} and the first term of~\eqref{cons: energy (rigid)}  arbitrarily small.
				\item  For any $K>0$ and any~$s$ , the function~$p\in\R_{+}\backslash\{0\}\mapsto \frac{pD}{r_{s,\mathrm{ul}}(K,p)}$ is monotonically increasing (derivative greater than zero by noticing~$\log(1+x)\geq x/(1+x)$ for any~$x>0$) and has limit of $D/(B\gamma_s)$ when~$p\to 0$. Therefore it satisfies that for any~$p\geq 0$ and $K\geq 0$,
				\begin{equation}
					(\forall p\geq 0)(\forall K\geq 0)\quad	\frac{pD}{r_{s,\mathrm{ul}}(K,p)} > \frac{D}{B\gamma_s}.
					\label{eq: bound of the transmit energy function wrt p}
				\end{equation}
			\end{itemize}
			If the feasibility condition is satisfied, then~$K'\defeq K - a\theta >0${ (it can be derived using similar principle as~\Cref{theorem: HB reformulation})} and a feasible point can be found. First take any strictly positive $\{K_{s,\mathrm{ul}}\}_s$ that satisfies the RB{ allocation} constraint, e.g., $K_{s,\mathrm{ul}}=K' / S$ for any~$s$. Then  denoting~$\varepsilon \defeq E_{s,\mathrm{budget}} - \frac{D}{B
				\gamma_s} > 0$. One can easily find~$\tau_{s,cp}$ s.t. the first term of the sum is smaller than~$\varepsilon/2$ and by monotony and the limit of the function w.r.t. $p$, where the second term is smaller than~$\varepsilon/2$. The conditions provided are therefore sufficient conditions for feasibility.
			
			If one of the conditions is not satisfied, the corresponding constraint cannot be feasible. Therefore they are also necessary. Statement proven.
		
			\vspace{-.6cm}
		\section{Proof~\Cref{theorem: feasibility (session_pb)}}
		\label{appendix: proof feasibility session}
		   		\vspace{-.15cm}
		   		
		   	\paragraph{Sufficient conditions}
			According to the following two lemmas, any feasible point of~$(\setP_{\mathrm{rig}})$ can be transformed into a feasible point of~$(\setP_{\sigma})$. Therefore, if the condition holds, $(\setP_{\mathrm{rig}})$ is feasible, so is $(\setP_{\sigma})$. The conditions are sufficient for feasibility.
			
			\begin{lemma}
				For any given feasible point~$(K_{\mathrm{\mathrm{dl}}}, K_{s,\mathrm{ul}}, p_{s,\mathrm{ul}}, \tau_{s,cp})_s \in \setX_{\mathrm{rig}}$ of~$(\setP_{\mathrm{rig}})$, a feasible point~$(\tilde{K}_{\mathrm{\mathrm{dl}}}, \tilde{K}_{s,\mathrm{ul}}, \tilde{p}_{s,\mathrm{ul}}, \tilde{\tau}_{s,cp})_s$ of~$(\setP_{\mathrm{rig}})$ with additional constraint of uplink access ordering~$\sigma$ can be found.
			\end{lemma}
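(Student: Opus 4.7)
The plan is to exploit a monotonicity in $(\setP_{rig})$: increasing any $\tau_{s,cp}$ while keeping the remaining variables fixed preserves feasibility. Once this is established, the prescribed ordering $\sigma$ of uplink-access times can be enforced by lengthening each UE's local computation just enough to stagger the moments at which they become ready to transmit.

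First, I would verify the monotonicity claim constraint by constraint. The box constraint $\tau_{s,cp}\geq \tau_{s,\min}$ is trivially preserved upward. The separation constraint~\eqref{cons : dl/ul separation (rigid)} only asks $\min_{s'}\{D/r_{s',dl}+\tau_{s',cp}\}\geq \max_{s'}\{D/r_{s',dl}\}$: raising any $\tau_{s,cp}$ can only increase the \ac{LHS}. The energy constraint~\eqref{cons: energy (rigid)} contains the computation-energy term $\kappa\alpha^3\Theta_s^3/\tau_{s,cp}^2$, which is strictly decreasing in $\tau_{s,cp}$, while the communication-energy term $p_{s,ul}D/r_{s,ul}$ does not involve $\tau_{s,cp}$ at all; so the inequality continues to hold. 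The RB-sharing constraint~\eqref{cons: uplink RB sharing (rigid)} is independent of $\tau_{s,cp}$. Hence replacing $\tau_{s,cp}$ by any $\tilde{\tau}_{s,cp}\geq \tau_{s,cp}$ keeps the point feasible for $(\setP_{rig})$.

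Second, I would construct $\tilde{\tau}_{\sigma(i),cp}$ inductively. Set $\tilde{\tau}_{\sigma(1),cp}:=\tau_{\sigma(1),cp}$ and, for $i=2,\ldots,S$,
\[
\tilde{\tau}_{\sigma(i),cp}:=\max\bigl\{\tau_{\sigma(i),cp},\ \tau_{\sigma(i-1),dl}+\tilde{\tau}_{\sigma(i-1),cp}-\tau_{\sigma(i),dl}\bigr\}.
\]
By construction, $\tau_{\sigma(i-1),dl}+\tilde{\tau}_{\sigma(i-1),cp}\leq \tau_{\sigma(i),dl}+\tilde{\tau}_{\sigma(i),cp}$ for all $i\geq 2$, so the UEs finish their local training (and hence become ready for uplink) in the prescribed order $\sigma$. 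Keeping $(\tilde{K}_{dl},\tilde{K}_{s,ul},\tilde{p}_{s,ul})=(K_{dl},K_{s,ul},p_{s,ul})$ and using $\tilde{\tau}_{s,cp}\geq\tau_{s,cp}$ with Step~1, the modified tuple is a feasible point of $(\setP_{rig})$ that additionally respects $\sigma$.

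The main obstacle is checking that the min-max structure of~\eqref{cons : dl/ul separation (rigid)} remains compatible with the upward perturbation, which is mild because $\tau_{s,cp}$ appears only on the \ac{LHS} and only with a positive sign. Note that no upper bound is imposed on the overall latency $T$ in $(\setP_{rig})$, so the construction does not need to control how large $\tilde{\tau}_{\sigma(i),cp}$ grows; this is the reason the feasibility transfer is essentially costless at the level of the rigid problem, even though $T$ itself may increase.
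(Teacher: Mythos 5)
Your proposal is correct and follows essentially the same route as the paper: keep the communication variables unchanged, inductively inflate the computation times via a max with the previous UE's ready time to enforce the ordering $\sigma$, and observe that raising $\tau_{s,cp}$ can only help the separation and energy constraints while leaving the RB-sharing constraint untouched. Your recursion referencing only index $i-1$ is equivalent to the paper's max over all $s'<i$, since the constructed ready times are nondecreasing by induction.
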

			\begin{proof}
			
			Let~$(K_{\mathrm{dl}}, K_{s,\mathrm{ul}}, p_{s,\mathrm{ul}}, \tau_{s,cp})_s \in \setX_{\mathrm{rig}}$ be a feasible point of~$(\setP_{\mathrm{rig}})$. We will construct a point that satisfies~$\sigma$ ordering. To convert to a feasible point~$(\tilde{K}_{\mathrm{dl}}, \tilde{K}_{s,\mathrm{ul}}, \tilde{p}_{s,\mathrm{ul}}, \tilde{\tau}_{s,cp})_s$ given any ordering~$\sigma$, the following additional constraints have to be met without deteriorating other constraints:
\begin{equation}
		(\forall s'>s),\ \frac{D}{\tilde{r}_{\sigma(s), \mathrm{dl}}} + \tilde{\tau}_{\sigma(s), cp}
		\leq 
		\frac{D}{\tilde{r}_{\sigma(s'), \mathrm{dl}}} + \tilde{\tau}_{\sigma(s'), cp}.
		\label{eq: ordering constraint (rigid)}
\end{equation}
			This can be translated into for any~$s\in\setS$, 
	\begin{equation}
		\frac{D}{\tilde{r}_{\sigma(s), \mathrm{dl}}} + \tilde{\tau}_{\sigma(s), cp} \geq \max_{1\leq s'<s}\{	\frac{D}{\tilde{r}_{\sigma(s'), \mathrm{dl}}} + \tilde{\tau}_{\sigma(s'), cp}\}.
	\end{equation}
	We set~$\tilde{K}_{\mathrm{dl}}=K_{\mathrm{dl}}$, $\tilde{K}_{s,\mathrm{ul}}=K_{s,\mathrm{ul}}$, and $\tilde{p}_{s,\mathrm{ul}} = p_{s,\mathrm{ul}}$. As for $\tilde{\tau}_{s,cp}$, take~$\tilde{\tau}_{\sigma(1),cp}=\tau_{\sigma(1),cp}$ and then iteratively for~$s=2,\ldots, S$, such that
	\begin{multline}
\frac{D}{\tilde{r}_{\sigma(s), \mathrm{dl}}} + \tilde{\tau}_{\sigma(s), cp} = \max\Big\{\frac{D}{r_{\sigma(s), \mathrm{dl}}} + \tau_{\sigma(s), cp}, 	\\
\max_{1\leq s'<s}\{	\frac{D}{\tilde{r}_{\sigma(s'), \mathrm{dl}}} + \tilde{\tau}_{\sigma(s'), cp}\}
\Big\},
	\end{multline}
	which means
	\begin{equation}
	\tilde{\tau}_{\sigma(s), cp}\!\! =\! \max\Big\{\! \tau_{\sigma(s), cp}, \!	\max_{1\leq s'<s}\!\{\!	\frac{D}{\tilde{r}_{\sigma(s'), \mathrm{dl}}} + \tilde{\tau}_{\sigma(s'), cp}\} - \frac{D}{r_{\sigma(s), \mathrm{dl}}}
	\! \Big\}.
	\end{equation}
	
	By construction, the ordering constraint is satisfied, also $\tilde{\tau}_{s, cp}\geq \tau_{s,cp}$ for any~$s\in\setS$. Now we verify for the originally existing constraints of~$(\setP_{\mathrm{rig}})$. The constraint~\eqref{cons: uplink RB sharing (rigid)} is clearly satisfied since~$\tilde{K}_{s,\mathrm{ul}} = K_{s,\mathrm{ul}}$. The constraint~\eqref{cons : dl/ul separation (rigid)} is satisfied, since for any~$s\in\setS$, it 
	\begin{equation}
	\frac{D}{\tilde{r}_{s,\mathrm{dl}}} + \tilde{\tau}_{s,cp}\geq 	\frac{D}{r_{s,\mathrm{dl}}} + \tau_{s,cp} \geq \max_{s'}\{\frac{D}{r_{s',\mathrm{dl}}}\} = \max_{s'}\{\frac{D}{\tilde{r}_{s',\mathrm{dl}}}\}.
	\end{equation}
	For the energy constraint~\eqref{cons: energy (rigid)}, we have
	\begin{equation}
	\Big[\kappa \frac{\zeta^3\Theta_{s}^3}{\tilde{\tau}_{s,cp}^2} + \frac{\tilde{p}_{s,\mathrm{ul}}D}{\tilde{r}_{s,\mathrm{ul}}} \Big]\leq	\Big[\kappa \frac{\zeta^3\Theta_{s}^3}{\tau_{s,cp}^2} + \frac{p_{s,\mathrm{ul}}D}{r_{s,\mathrm{ul}}} \Big] \leq E_{s,\mathrm{budget}}.
	\end{equation}
	All constraints are satisfied. 
\end{proof}

\begin{lemma}
	For given ordering permutation~$\sigma$, given a feasible point~$(\tilde{K}_{\mathrm{dl}}, \tilde{K}_{s,\mathrm{ul}}, \tilde{p}_{s,\mathrm{ul}}, \tilde{\tau}_{s,cp})_s\in\setX_{\mathrm{rig}}$ of~$(\setP_{\mathrm{rig}})$ that satisfies the ordering constraint~\eqref{eq: ordering constraint (rigid)}, a feasible point to the session-based problem~$(\setP_{\sigma})$ can be found.
\end{lemma}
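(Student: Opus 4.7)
The plan is to construct an explicit feasible point of~$(\setP_{\sigma})$ from the given rigid point by translating each rigid variable into session-based variables while preserving the total CR duration, the per-UE transmitted bits, and the per-UE energy. First, set $T=\max_s\{D/\tilde r_{s,dl}+\tilde\tau_{s,cp}+D/\tilde r_{s,ul}\}$. Order the downlink sessions by descending channel strength, take $K^{(dl)}_{\ell'}=\tilde K_{dl}$ for every $\ell'$, and choose $t^{(dl)}_{\ell'}$ so that $\sum_{i\le\ell'}t^{(dl)}_i=D/\tilde r_{\ell',dl}$, so that session $\ell'$ ends exactly when UE $\ell'$ finishes decoding. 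Then set $T_{idle}=(D/\tilde r_{\sigma(1),dl}+\tilde\tau_{\sigma(1),cp})-\max_s D/\tilde r_{s,dl}\ge 0$, the nonnegativity being guaranteed by~\eqref{cons : dl/ul separation (rigid)}. For $\ell<S$ let $t^{(ul)}_\ell=(D/\tilde r_{\sigma(\ell+1),dl}+\tilde\tau_{\sigma(\ell+1),cp})-(D/\tilde r_{\sigma(\ell),dl}+\tilde\tau_{\sigma(\ell),cp})\ge 0$ by the ordering hypothesis~\eqref{eq: ordering constraint (rigid)}, and $t^{(ul)}_S=T-(D/\tilde r_{\sigma(S),dl}+\tilde\tau_{\sigma(S),cp})\ge D/\tilde r_{\sigma(S),ul}\ge 0$ by definition of $T$.

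The nontrivial step is the per-session uplink RB and power assignment. UE $\sigma(s)$ only transmits for $D/\tilde r_{\sigma(s),ul}$ units of time in rigid, while $\sum_{\ell\ge s}t^{(ul)}_\ell$ can be strictly larger, so a direct copy of $\tilde K_{\sigma(s),ul},\tilde p_{\sigma(s),ul}$ across all active sessions would overspend the communication energy. I would exploit the homogeneity of the rate~\eqref{eq: uplink rate (time raw)}: scaling both $K$ and $p$ by a common $\alpha\in[0,1]$ scales the rate by exactly $\alpha$. I therefore set $K^{(ul)}_{\sigma(s),\ell}=\alpha_{s,\ell}\tilde K_{\sigma(s),ul}$ and $p^{(ul)}_{\sigma(s),\ell}=\alpha_{s,\ell}\tilde p_{\sigma(s),ul}$ with coefficients $\alpha_{s,\ell}\in[0,1]$ (equal to $1$ in the earliest active sessions and tapered in the last one) chosen so that $\sum_{\ell\ge s}\alpha_{s,\ell}t^{(ul)}_\ell=D/\tilde r_{\sigma(s),ul}$. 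This yields $\sum_{\ell\ge s}r^{(ul)}_{\sigma(s),\ell}t^{(ul)}_\ell=D$ for~\eqref{cons: ul completion (session_pb)} and a communication energy $\tilde p_{\sigma(s),ul}D/\tilde r_{\sigma(s),ul}$ identical to the rigid one. Leftover RBs in each uplink session, and all $K$ RBs during $T_{idle}$, are handed to the HB block variable.

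The main obstacle is verifying the HB constraint~\eqref{cons: eMBB each e (session_pb)}, equivalently its Proposition~\ref{theorem: HB reformulation} form~\eqref{cons: embb transformed (session_pb)}. Since $\alpha_{s,\ell}\le 1$ and the rigid sharing constraint~\eqref{cons: uplink RB sharing (rigid)} together with the HB budget imply $\sum_s\tilde K_{s,ul}\le K-a\theta$ (and likewise $\tilde K_{dl}\le K-a\theta$), every uplink session leaves HB at least $K-\sum_s\tilde K_{s,ul}\ge a\theta$ RBs, every downlink session at least $K-\tilde K_{dl}\ge a\theta$ RBs, and $T_{idle}$ gives HB the full $K\ge a\theta$ RBs. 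Integrating over the whole~$T$ then yields an HB RB-time of at least $a\theta T$, precisely the RHS of~\eqref{cons: embb transformed (session_pb)}. The remaining verifications are routine: \eqref{cons: computation time (session_pb)} collapses to $\tau^{(cp)}_{\sigma(s)}=\tilde\tau_{\sigma(s),cp}$ by construction, \eqref{cons: dl completion (session_pb)}--\eqref{cons: dl RB (session_pb)} and~\eqref{cons: ul RB (session_pb)} hold by direct substitution, and the computation-energy part of~\eqref{cons: energy_cstr (session_pb)} is unchanged, so all constraints of~$(\setP_{\sigma})$ are satisfied.
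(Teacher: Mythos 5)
Your construction follows the same architecture as the paper's proof: downlink session times defined so that $\sum_{i\le\ell'}t^{(dl)}_i=D/\tilde r_{\ell',dl}$, the idle time taken as the slack in constraint~\eqref{cons : dl/ul separation (rigid)}, uplink session times as consecutive differences of the ready times $D/\tilde r_{\sigma(\ell),dl}+\tilde\tau_{\sigma(\ell),cp}$ with the last one closing at $T$, and the rigid RB/power values carried over session-wise. Where you genuinely depart from the paper is the uplink energy accounting. The paper simply sets $K_{s,\ell}^{(ul)}=\tilde K_{\sigma(s),ul}$ and $p_{s,\ell}^{(ul)}=\tilde p_{\sigma(s),ul}$ for all $\ell\ge s$ and asserts that all constraints can be verified; but under that literal copy the communication energy becomes $\tilde p_{\sigma(s),ul}\sum_{\ell\ge s}t^{(ul)}_\ell=\tilde p_{\sigma(s),ul}\bigl(T-\tilde\tau_{\sigma(s),dl}-\tilde\tau_{\sigma(s),cp}\bigr)$, which strictly exceeds the rigid energy $\tilde p_{\sigma(s),ul}D/\tilde r_{\sigma(s),ul}$ for every non-bottleneck UE, so~\eqref{cons: energy_cstr (session_pb)} can fail whenever the rigid energy budget is tight. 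Your $\alpha_{s,\ell}$-scaling, which exploits the degree-one joint homogeneity of $r^{(ul)}$ in $(K,p)$ to deliver exactly $D$ bits at exactly the rigid communication energy while only shrinking RB and power usage (so~\eqref{cons: ul RB (session_pb)} and~\eqref{cons: power (session_pb)} are preserved and the HB share per session can only grow), closes this gap cleanly. The rest of your verifications (the telescoping identity giving $\tau^{(cp)}_{\sigma(s)}=\tilde\tau_{\sigma(s),cp}$, the HB bound via $\sum_s\tilde K_{s,ul}\le K'$ and $\tilde K_{dl}\le K'$ with $K-K'=a\theta$, and full-$K$ HB occupancy during $T_{idle}$) match the paper's intent. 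In short: same route, but your version is the more careful one and would be the preferable write-up of this lemma.
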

\begin{proof}
	Let~$\tilde{\tau}_{s, \mathrm{dl}} = \frac{D}{\tilde{r}_{s,\mathrm{dl}}}$ be the downlink finishing time of UE~$s\in\setS$.  Since the constraint~\eqref{eq: ordering constraint (rigid)} is satisfied, it is satisfied that:
	\begin{equation}
	\tilde{\tau}_{\sigma(1), \mathrm{dl}} + \tilde{\tau}_{\sigma(1), cp} \!\leq\!	\tilde{\tau}_{\sigma(2), \mathrm{dl}} + \tilde{\tau}_{\sigma(2), cp}\!\leq\! \cdots\!\leq\! \tilde{\tau}_{\sigma(S), \mathrm{dl}} + \tilde{\tau}_{\sigma(S), cp}.
	\end{equation}
	The end finishing time is calculated as
	\begin{equation}
	T \defeq	\max_{s\in\setS}\ \Big\{ \tilde{\tau}_{s, \mathrm{dl}} + \tilde{\tau}_{s, cp} + \frac{D}{\tilde{r}_{s,\mathrm{ul}}}.
		\Big\}
	\end{equation}
	Define as follows the downlink session time and idle time
	\begin{equation}
		\begin{cases}
			t_{1}^{(\mathrm{dl})}\defeq \tilde{\tau}_{1,\mathrm{dl}},\\
			(\forall s=2,\ldots,S),\ t_s^{(\mathrm{dl})} \defeq \tilde{\tau}_{s,\mathrm{dl}} - \tilde{\tau}_{s-1,\mathrm{dl}},\\
			T_{\mathrm{idle}} \defeq \min\limits_{s'}\Big\{\frac{D}{
				\tilde{r}_{s',\mathrm{dl}}}+ \tilde{\tau}_{s',cp}\Big\} - \max\limits_{s'}\Big\{\frac{D}{\tilde{r}_{s',\mathrm{dl}}}\Big\}.
		\end{cases}
	\end{equation}
	and for uplink session times
	\begin{equation}
		\begin{cases}
			(\forall s=1,\ldots,S-1),\ t_s^{(\mathrm{ul})} \defeq \tilde{\tau}_{\sigma(s+1), \mathrm{dl}} + \tilde{\tau}_{\sigma(s+1), cp} \\
			\quad\quad\quad\quad\quad - (\tilde{\tau}_{\sigma(s), \mathrm{dl}} + \tilde{\tau}_{\sigma(s), cp}), \\
			t_S^{(\mathrm{ul})} \defeq T -( \tilde{\tau}_{\sigma(S), \mathrm{dl}} + \tilde{\tau}_{\sigma(S)}).
		\end{cases}
	\end{equation}
	We define other variables equal to the corresponding rigid variables:
	\begin{equation}
		\begin{cases}
		(\forall \ell'\in\setS)\ 	K_{\ell'}^{(\mathrm{dl})} \defeq \tilde{K}_{\mathrm{dl}},\\
		(\forall s\leq\ell\in\setS)\ 	K_{s,\ell}^{(\mathrm{ul})} \defeq \tilde{K}_{\sigma(s),\mathrm{ul}},\ p_{s,\ell}^{(\mathrm{ul})} \defeq \tilde{p}_{\sigma(s),\mathrm{ul}}, \\
		(\forall \ell,\ell'\in\setS)\  K_{\mathrm{HB},\ell'}^{(\mathrm{dl})}\defeq K_{\mathrm{HB},\ell}^{(\mathrm{ul})} \defeq K'.\\
		
		\end{cases}
	\end{equation}
	One can verify that all constraints of~$(\setP_{\sigma}$) are met with this construction.
\end{proof}
The statement holds by combining both lemmas.

		\paragraph{Necessary conditions}			
			To prove the conditions necessary, prove each of them necessary. Assuming first condition does not hold, i.e., $K\leq a\theta$, due to the completion constraints, it has to hold that there exists some $K_{\ell'}^{(\mathrm{dl})}>0$ and $K^{(\mathrm{ul})}_{\sigma(s),\ell} >0$. Therefore, $K_{\mathrm{HB},\ell'}^{(\mathrm{dl})}, K_{\mathrm{HB},\ell}^{(\mathrm{ul})} < K$ for some~$\ell, \ell'\in\setS$. On the other hand, the \ac{LHS} of the constraint~\eqref{cons: eMBB each e (session_pb)} satisfies: $\mathrm{LHS of \eqref{cons: eMBB each e (session_pb)}}<KT \leq a\theta T$, which leads to contradiction with the constraint.

			Now assume the second condition does not hold, i.e., $D\geq E_{s_0,\mathrm{budget}}B\gamma_{s_0}$ for some~$s_0$. Due to the coupling of $t_{\ell}^{(\mathrm{ul})}$ for all~$s\leq\ell$ in~\eqref{cons: ul completion (session_pb)} and \eqref{cons: energy_cstr (session_pb)}, the proof method used in~\Cref{theorem: feasibility (rigid)} cannot be applied. 
            
            Let~$\varepsilon>0$. The goal is to prove that the following two conditions cannot be satisfied simultaneously for $(K_{\sigma(s),\ell}^{(\mathrm{ul})}, p_{\sigma(s),\ell}^{(\mathrm{ul})}, t_{\ell}^{(\mathrm{ul})})$:
			\begin{equation}
				\begin{cases}
					(\forall s\in\setS)\quad	\sum\limits_{\ell\geq s}r^{(\mathrm{ul})}_{\sigma(s)}(K_{\sigma(s),\ell}^{(\mathrm{ul})}, p_{\sigma(s),\ell}^{(\mathrm{ul})})t_{\ell}^{(\mathrm{ul})} \geq D, \\
					(\forall s\in\setS)\quad 	\sum\limits_{\ell\geq s }p_{\sigma(s),\ell}^{(\mathrm{ul})}t_{\ell}^{(\mathrm{ul})}  \leq E_{\sigma(s),\mathrm{budget}} - \frac{\varepsilon}{2},
				\end{cases}
			\end{equation}
			where $T_{\mathrm{idle}}$ is assumed large enough so that the computational energy term of~\eqref{cons: energy_cstr (session_pb)} is smaller than~$\varepsilon/2$ for all~$s\in\setS$.
			
			A feasible condition on~$(K_{\sigma(s),\ell}^{(\mathrm{ul})}, p_{\sigma(s),\ell}^{(\mathrm{ul})}, t_{\ell}^{(\mathrm{ul})})$ is equivalent to feasible conditions on~$t_{\ell}^{(\mathrm{ul})}$ for any given feasible~$(K_{\sigma(s),\ell}^{(\mathrm{ul})}, p_{\sigma(s),\ell}^{(\mathrm{ul})})$.
			Consider the \ac{LP} constraints w.r.t. $t_{\ell}^{(\mathrm{ul})}$ for given $r_{s,t}$ and $p_{s,t}$. We are going to prove that the LP is not feasible due to the assumed condition.
			By Farka's lemma, the problem is not feasible if we can find $a_s\geq 0$ and $b_s\geq 0$ for all $s$ s.t.
			\begin{equation}
				\begin{cases}
					(\forall \ell\in\setS)\quad	\sum\limits_{s\leq \ell}b_s p_{s,\ell} \geq \sum\limits_{s\leq\ell}  a_s r_{s,\ell}, \\
					\sum\limits_{s\in\setS}b_s (E_{s,\mathrm{budget}}-\frac{\varepsilon}{2}) < D\sum\limits_{s\in\setS} a_s.
				\end{cases}
                \label{eq: Farka's lemma}
			\end{equation}     
			Take $a_{s_0} = 1 / (B\gamma_{s_0})$ and $a_s=0$ for $s\neq s_0$, and $b_{s_0}=1$ and $b_{s}=0$ for $s\neq s_0$. 
			The condition to be satisfied for infeasibility can be written as
			\begin{equation}
				\begin{cases}
					(\forall\ell \geq s_0)\quad   p_{s_0,\ell} \geq  \frac{r_{s_0,\ell}}{B\gamma_{s_0}} ,\\
					E_{s_0,\mathrm{budget}}-\frac{\varepsilon}{2} < \frac{D}{B\gamma_{s_0}}.
				\end{cases}
			\end{equation} 
			The first inequality always holds due to~\eqref{eq: bound of the transmit energy function wrt p}. While the second inequality holds by the assumed infeasibility condition.
			
			In summary,  the infeasibility condition of Farka's lemma~\eqref{eq: Farka's lemma} is satisfied. The considered LP is not feasible for any given feasible $(K_{\sigma(s),\ell}^{(\mathrm{ul})}, p_{\sigma(s),\ell}^{(\mathrm{ul})})$ for any~$\varepsilon>0$, therefore the problem~$(\setP_{\sigma})$ is infeasible.
			The condition is necessary for feasibility. Statement proven.

		\vspace{-.4cm}
		\section{Proof~\Cref{theorem: non-preemptive_non-idle}}
		\label{appendix: proof non preemptive non idle}
		\vspace{-.2cm}

			\noindent \textbf{Non-Preemptive}: A scheduling is non-preemptive if once a UE is scheduled for communication, it does not stop until completion, i.e., for $s\leq\ell$,
			\begin{multline}
			t_{\ell}>0\wedge \Big(K_{s,\ell}=0\vee p_{s,\ell}=0\Big) \\ 
			\implies \forall \overline{\ell}> \ell, K_{s,\overline{\ell}}=0 \vee p_{s,\overline{\ell}}=0\vee t_{\overline{\ell}}=0.
			\end{multline}
			
			Consider the convex sub-problem with given  feasible non-uplink-related variables $t_{\ell'}^{(\mathrm{dl})}, t_{\ell}^{(\mathrm{ul})}, T_{\mathrm{idle}}, K_{\ell'}^{(\mathrm{dl})}, K_{\mathrm{HB},\ell'}^{(\mathrm{dl})}$:			
			\begin{subequations}
				\label{pb: subproblem given times (TfixPb)}
				\begin{align}
					\!\max_{\{K_{\sigma(s),\ell}^{(\mathrm{ul})}, p_{\sigma(s),\ell}^{(\mathrm{ul})}, K_{\mathrm{HB},\ell}^{(\mathrm{ul})}\}}\hspace{-.13cm}  & \min_{s\in\setS} \Big\{ \sum_{\ell\geq s}r^{(\mathrm{ul})}_{\sigma(s)}(K_{\sigma(s),\ell}^{(\mathrm{ul})}, p_{\sigma(s),\ell}^{(\mathrm{ul})})t_{\ell}^{(\mathrm{ul})}\Big\}
					\label{obj: (TfixPb)}
					\\
					\mathrm{s.t.}\quad\quad \  & \sum_{\ell} K_{\mathrm{HB},\ell}^{(\mathrm{ul})}t_{\ell}^{(\mathrm{ul})} \geq \theta'
					\label{cons: embb (TfixPb)}
					\\ 
					&\forall s,\ \sum_{\ell\geq s }p_{\sigma(s),\ell}^{(\mathrm{ul})}t_{\ell}^{(\mathrm{ul})}  \leq E_{\sigma(s),\mathrm{budget}}'
					\label{cons: energy cstr (TfixPb)}
					\\
					& \forall \ell,\ \sum_{s\leq \ell}K_{\sigma(s),\ell}^{(\mathrm{ul})} + K_{\mathrm{HB},\ell}^{(\mathrm{ul})}\leq K \label{cons: ul RB (TfixPb)}
					\\
					& \forall s,\ \forall\ell,\   p_{\sigma(s), \ell}^{(\mathrm{ul})}\in[0,P_{\max}] 
					\label{cons: power (TfixPb)}
					\\
					&K_{s,\ell}^{(\mathrm{ul})}, K_{\mathrm{HB},\ell}^{(\mathrm{ul})}\geq 0.
					\label{cons: positivity (TfixPb)}
				\end{align}
			\end{subequations}
			For ease of notation, here replace $\sigma(s)$ by~$s$, the solution can be easily interchanged afterwards.
			
			Introduce~$u>0$ to establish its epigraph equivalent form. The Lagrangian of the resulting problem can be written as:
					\begin{multline}
					\bigL =\! -u \!+\! \sum_s\!\lambda_s(u \!-\! \sum_{\ell\geq s} r_{s,\ell}t_{\ell}) + \gamma (\theta' - \sum_{\ell}K_{\mathrm{HB},\ell}t_{\ell}
				)
					\\[-.1cm]
				+ \sum_s a_s (\sum_{\ell\geq s}p_{s,\ell} t_{\ell} - E_{s,\mathrm{budget}}') + \sum_{\ell} \mu_{\ell} (\sum_{s\leq\ell} K_{s,\ell}+ K_{\mathrm{HB},\ell}-K) 		\\[-.1cm]
				+\! \sum_{s\leq \ell}\!\Big(\!-\alpha_{s,\ell}K_{s,\ell}  -\underline{\theta}_{s,\ell}p_{s,\ell}+\overline{\theta}_{s,\ell}(p_{s,\ell} - P_{\max}) \!\Big)-\sum_{\ell}\beta_{\ell}K_{\mathrm{HB},\ell}.\\[-.5cm]
		\end{multline}
			The first-order stationary condition can be written as:
			\begin{numcases}{}
				\sum\limits_s \lambda_s = 1 \label{eq: sum lambda (proof)}\\[-.2cm]
				(\forall s,\ell \ s.t.\ s\leq \ell) \ 	\mu_{\ell} = \alpha_{s,\ell}+\lambda_st_{\ell}\frac{\partial r_{s,\ell}}{\partial K} \label{eq: der K (proof)}\\
				(\forall \ell)\ 	\mu_{\ell} = \beta_{\ell} + \gamma t_{\ell}  \label{eq: der K_HB (proof)} \\
				(\forall s,\ell \ s.t.\ s\leq \ell)\	\lambda_st_{\ell} \frac{\partial r_{s,\ell}}{\partial p} + \underline{\theta}_{s,\ell} = a_st_{\ell} + \overline{\theta}_{s,\ell}.  \label{eq: der p (proof)}
			\end{numcases}
			Since the optimization problem is convex and the Slater's condition holds (if the given variables are feasible, then there exists a feasible point of the subproblem where $u^*\geq D$; take $u= D/2$, then there exists necessarily a strictly feasible point), the \ac{KKT} conditions are sufficient and necessary conditions for optimality. From \ac{KKT}, we derive certain conditions on the optimal solutions.
			\vspace{-.15cm}
			\begin{lemma}
				If $K_{s_0,\ell_0} = 0$ for some $s_0,\ell_0 \in\setS$ and $t_{\ell_0}>0$, then necessarily, $\lambda_{s_0} = 0$.
                \vspace{-.15cm}
			\end{lemma}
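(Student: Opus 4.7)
The plan is to exploit the KKT first-order condition~\eqref{eq: der K (proof)} specialized to the index pair $(s_0,\ell_0)$, combined with a singular behavior of the uplink rate function at the boundary $K=0$. Since the subproblem~\eqref{pb: subproblem given times (TfixPb)} is convex and Slater's condition holds (as argued in the paragraph introducing the KKT equations), the KKT conditions are necessary at the optimum and all dual multipliers are finite nonnegative numbers.

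First, I compute directly
\begin{equation*}
\frac{\partial r_{s,\ell}}{\partial K}(K,p)=B\log_2\!\Bigl(1+\tfrac{p\gamma_s}{K}\Bigr)-\frac{B}{\ln 2}\cdot\frac{p\gamma_s/K}{1+p\gamma_s/K},
\end{equation*}
and observe that for fixed $p>0$, as $K\to 0^+$ the logarithmic term diverges to $+\infty$ while the rational term stays bounded (tends to $1$). Hence the right-derivative of $r_{s,\ell}$ in $K$ at the boundary $K=0$ is $+\infty$.

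Second, specializing~\eqref{eq: der K (proof)} at $(s_0,\ell_0)$ yields $\mu_{\ell_0}=\alpha_{s_0,\ell_0}+\lambda_{s_0}\,t_{\ell_0}\,\partial r_{s_0,\ell_0}/\partial K$. Here $\mu_{\ell_0}$ and $\alpha_{s_0,\ell_0}\geq 0$ are finite at the KKT point, and $t_{\ell_0}>0$ by hypothesis. Since the right-derivative evaluated at $K_{s_0,\ell_0}=0$ equals $+\infty$, finiteness of $\mu_{\ell_0}-\alpha_{s_0,\ell_0}$ forces the product $\lambda_{s_0}\,t_{\ell_0}\,(+\infty)$ to be finite, which can only happen when $\lambda_{s_0}=0$. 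This is the desired conclusion.

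The main obstacle is the corner case $p_{s_0,\ell_0}=0$, in which the derivative at $K=0$ is not singular and the above argument fails. In that case UE $s_0$ contributes zero rate at session $\ell_0$, so the pair $(K_{s_0,\ell_0},p_{s_0,\ell_0})=(0,0)$ is effectively inactive: either the cumulative-rate constraint for $s_0$ is satisfied strictly through other sessions $\ell\geq s_0$, forcing $\lambda_{s_0}=0$ by complementary slackness on the epigraph inequality $u-\sum_{\ell\geq s_0}r_{s_0,\ell}t_{\ell}\leq 0$, or an infinitesimal feasible perturbation of the allocation into the interior restores the singular-derivative regime. A brief case split based on~\eqref{eq: der p (proof)} closes this gap; the bulk of the argument rests on the divergence of $\partial r/\partial K$ established in the first step.
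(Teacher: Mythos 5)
Your argument is exactly the paper's: its proof of this lemma is the one-line observation that $\partial r_{s,\ell}/\partial K \to \infty$ as $K\to 0$, so that \eqref{eq: der K (proof)} with $t_{\ell_0}>0$ and finite multipliers forces $\lambda_{s_0}=0$. The corner case $p_{s_0,\ell_0}=0$ that you flag (where the derivative is not singular and the stationarity condition should really be phrased via supergradients of the perspective function at the origin) is silently skipped by the paper, so your sketchy case split there, while not fully closed, already goes beyond what the paper's own proof provides.
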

            \vspace{-.15cm}
			\begin{proof}
				$\frac{\partial r_{s,\ell}}{\partial K}$ tends to infinity when $K\to 0$, for~\eqref{eq: der K (proof)} to satisfy, the statement has to hold.
                \vspace{-.15cm}
			\end{proof}
            \vspace{-.15cm}
			\begin{lemma}
				If $\lambda_{s_0}=0$ for some $s_0\in\setS$, there has to exist $\ell_1\geq s_0$ such that $\mu_{\ell_1}=0$ and $t_{\ell_1}>0$.
				 In this case, $\forall s\leq \ell_1$, $\lambda_s=\alpha_{s,\ell_1}=0$ for any~$t_{\ell}>0$. Specifically, $\forall s\leq s_0$, $\lambda_s=0$.
				\label{lemma: lower lambda zero}
                \vspace{-.15cm}
			\end{lemma}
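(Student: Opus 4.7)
The plan is to exploit the uplink completion constraint together with the KKT stationarity identity~\eqref{eq: der K (proof)} in order to cascade the vanishing of Lagrange multipliers along the session index. The strict positivity of $\partial_K r_{s,\ell}$ at any point where the corresponding uplink power is strictly positive is the key algebraic lever, since it will allow me to isolate nonnegative summands in~\eqref{eq: der K (proof)} that must each individually vanish.

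I would first establish the existence of $\ell_1$. The uplink completion constraint~\eqref{cons: ul completion (session_pb)} for UE $\sigma(s_0)$ reads $\sum_{\ell \geq s_0} r_{s_0,\ell}\,t_\ell \geq D > 0$, so at least one session $\ell_1 \geq s_0$ must satisfy $r_{s_0,\ell_1}\,t_{\ell_1} > 0$. This forces $t_{\ell_1} > 0$, $K_{s_0,\ell_1} > 0$, and $p_{s_0,\ell_1} > 0$. By complementary slackness on the non-negativity constraint of $K_{s_0,\ell_1}$, the multiplier $\alpha_{s_0,\ell_1}$ vanishes; substituting this together with the hypothesis $\lambda_{s_0} = 0$ into~\eqref{eq: der K (proof)} evaluated at $(s_0,\ell_1)$ immediately yields $\mu_{\ell_1} = 0$, which proves the first assertion of the lemma.

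For the second assertion, I would apply~\eqref{eq: der K (proof)} once more, this time at $(s,\ell_1)$ for arbitrary $s \leq \ell_1$, obtaining $0 = \mu_{\ell_1} = \alpha_{s,\ell_1} + \lambda_s\,t_{\ell_1}\,\partial_K r_{s,\ell_1}$. Since both summands are nonnegative, each must vanish individually; combined with $t_{\ell_1} > 0$ and $\partial_K r_{s,\ell_1} > 0$ this yields $\alpha_{s,\ell_1} = 0$ and $\lambda_s = 0$ for every such $s$. The specific claim $\lambda_s = 0$ for all $s \leq s_0$ then follows from the chain $s \leq s_0 \leq \ell_1$.

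The main delicacy I foresee is securing the strict positivity of $\partial_K r_{s,\ell_1}$ for every $s \leq \ell_1$. A direct computation from~\eqref{eq: uplink rate (time raw)} shows that $\partial_K r_{s,\ell_1}$ is strictly positive iff $p_{s,\ell_1} > 0$, so the core argument goes through cleanly whenever all UEs indexed by $s \le \ell_1$ actually transmit at session $\ell_1$. For the degenerate sub-case $p_{s,\ell_1} = 0$ one has $K_{s,\ell_1} = 0$, so $\alpha_{s,\ell_1} = 0$ follows directly and $\lambda_s = 0$ is recovered by combining~\eqref{eq: der p (proof)} with complementary slackness on the power bounds. This is the step I would write most carefully in the final proof.
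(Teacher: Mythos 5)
Your main argument coincides with the paper's: existence of $\ell_1$ follows from the uplink completion constraint plus complementary slackness (you argue it directly, the paper argues by contradiction, but it is the same mechanism), and the cascade $\lambda_s=\alpha_{s,\ell_1}=0$ for $s\le\ell_1$ comes from reading~\eqref{eq: der K (proof)} at $(s,\ell_1)$ as a sum of nonnegative terms equal to $\mu_{\ell_1}=0$. That part is correct, and your observation that $\partial_K r_{s,\ell_1}$ is strictly positive only when $p_{s,\ell_1}>0$ (and $+\infty$ as $K\to 0$ with $p>0$, which still forces $\lambda_s=0$) is a genuine subtlety that the paper simply papers over by asserting $\partial r_{s,\ell}/\partial K>0$.

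The weak point is your patch for the degenerate sub-case $p_{s,\ell_1}=0$. First, $p_{s,\ell_1}=0$ does not imply $K_{s,\ell_1}=0$ as a KKT consequence (though this is harmless: with $\partial_K r_{s,\ell_1}=0$, equation~\eqref{eq: der K (proof)} gives $\alpha_{s,\ell_1}=\mu_{\ell_1}=0$ in either case). The real problem is recovering $\lambda_s=0$. From~\eqref{eq: der p (proof)} with $p_{s,\ell_1}=0<P_{\max}$ you get $\overline{\theta}_{s,\ell_1}=0$ and hence $\lambda_s t_{\ell_1}\,\partial_p r_{s,\ell_1} = a_s t_{\ell_1}-\underline{\theta}_{s,\ell_1}$; the lower-bound multiplier $\underline{\theta}_{s,\ell_1}$ is unconstrained by complementary slackness (the constraint $p\ge 0$ is active), and the energy multiplier $a_s$ may be strictly positive, so the right-hand side has no controlled sign and $\lambda_s=0$ does not follow. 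As written, this step would fail; you would need an additional argument (e.g., that a bottleneck UE with $\lambda_s>0$ and spare resources at a session with $\mu_{\ell_1}=0$ cannot be at a stationary point with $p_{s,\ell_1}=0$, or a perturbation/exchange argument at the primal optimum) to close this sub-case — or else restrict the claim, as the paper implicitly does, to sessions where the UE actually transmits.
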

            \vspace{-.5cm}
			\begin{proof}
				There has to exist, since otherwise for any~$\ell\geq s_0$, $\mu_{\ell} = \alpha_{s_0,\ell}>0$ or $t_{\ell}=0$, therefore $K_{s_0,\ell}=0$ or $t_{\ell}=0$ for all $\ell\geq s_0$ by complementary slackness, which cannot happen due to the completion constraint in the primal problem.
				
				Take such~$\ell_1$. By~\eqref{eq: der K (proof)} and $\frac{\partial r_{s,\ell}}{\partial K}>0$, we have the statement.
					\vspace{-.25cm}
			\end{proof}
			\vspace{-.15cm}
			\begin{lemma}
				There exists~$\overline{s}\in\setS$ s.t. $\forall s< \overline{s}$, $\lambda_s =0$, and $\forall s\geq \overline{s}$, $\lambda_s>0$. Therefore,
				\begin{equation}
					(\forall s<\overline{s}) (\forall\ell\geq\overline{s})\ K_{s,\ell}=0,
				\end{equation}
				and $(\forall s\geq\overline{s}) (\forall\ell\geq s)\ K_{s,\ell}, p_{s,\ell}>0$ if $t_{\ell}>0$.
				\label{lemma: property on strict positivity K}
				\vspace{-.1cm}
			\end{lemma}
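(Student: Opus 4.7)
The plan is to first establish the existence of the threshold index $\overline{s}$ via a downward-closure argument on the zero-set of the multipliers $\{\lambda_s\}$, and then to derive the two consequences from the stationarity conditions and complementary slackness.

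For the existence, I would let $Z\defeq\{s\in\setS:\lambda_s=0\}$ and invoke the preceding lemma in the following way: for every $s_0\in Z$, that lemma supplies some $\ell_1\geq s_0$ with $\mu_{\ell_1}=0$, $t_{\ell_1}>0$, and $\lambda_s=0$ for all $s\leq\ell_1$. In particular, every $s\leq s_0$ lies in $Z$, so $Z$ is downward closed. Combined with~\eqref{eq: sum lambda (proof)}, which prevents $Z=\setS$, the index $\overline{s}\defeq\min(\setS\setminus Z)$ is well-defined, and both monotone properties of the sequence $\{\lambda_s\}$ follow.

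For the first consequence, I fix $s<\overline{s}$ and $\ell\geq\overline{s}$. When $t_\ell=0$ the claim is vacuous (the variable $K_{s,\ell}$ enters no active constraint, so we may take it zero without loss of generality). When $t_\ell>0$, I would apply~\eqref{eq: der K (proof)} at the pair $(\overline{s},\ell)$: since $\lambda_{\overline{s}}>0$, $t_\ell>0$, and $\partial r_{\overline{s},\ell}/\partial K>0$ on the interior of the feasible region (a consequence of the concavity of the OFDMA rate established in Appendix~\ref{appendix: proof convex problem}), the equation forces $\mu_\ell>0$. Applying~\eqref{eq: der K (proof)} again at $(s,\ell)$ with $\lambda_s=0$ then yields $\alpha_{s,\ell}=\mu_\ell>0$, and complementary slackness gives $K_{s,\ell}=0$.

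The second consequence---strict positivity of both $K_{s,\ell}$ and $p_{s,\ell}$ whenever $s\geq\overline{s}$, $\ell\geq s$, $t_\ell>0$---is the step I expect to be the main obstacle. The difficulty is the degenerate corner $(K_{s,\ell},p_{s,\ell})=(0,0)$: the partial derivative $\partial r/\partial K$ vanishes when $p=0$, while it blows up when only $K\to 0$ with $p>0$, so the stationarity equation~\eqref{eq: der K (proof)} alone cannot close the argument. My plan is a proof by contradiction exploiting the fact that $\lambda_s>0$ makes the uplink completion constraint~\eqref{cons: ul completion (session_pb)} tight for UE $\sigma(s)$, so any zero contribution at an admissible session could be profitably redistributed. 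If $p_{s,\ell}=0$, I would perturb a small amount of power from some other session $\ell'\geq s$ with $p_{s,\ell'}>0$ (which exists since UE $\sigma(s)$ must complete the transmission and $\lambda_s>0$) into session $\ell$; the resulting first-order variation of the aggregated rate is strictly positive because of the unbounded marginal $\partial r/\partial p$ at $p=0$, contradicting optimality via~\eqref{eq: der p (proof)}. The case $K_{s,\ell}=0$ with $p_{s,\ell}>0$ is ruled out directly by~\eqref{eq: der K (proof)} because $\partial r/\partial K\to\infty$ conflicts with the finiteness of $\mu_\ell$. Closing the simultaneous-zero corner and certifying the availability of a donor session $\ell'$ is where the technical care is concentrated.
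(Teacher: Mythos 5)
Your construction of $\overline{s}$ (downward closure of the zero set of $\{\lambda_s\}$ via Lemma~\ref{lemma: lower lambda zero} together with \eqref{eq: sum lambda (proof)}) and your derivation of the first consequence (apply \eqref{eq: der K (proof)} at $(\overline{s},\ell)$ to force $\mu_\ell>0$, then at $(s,\ell)$ with $\lambda_s=0$ to get $\alpha_{s,\ell}=\mu_\ell>0$ and conclude $K_{s,\ell}=0$ by complementary slackness) coincide with the paper's argument.

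The gap is in the second consequence, and it is concrete. The mechanism you invoke is factually wrong for this rate function: with $r=KB\log_2(1+p\gamma/K)$ one has $\partial r/\partial p = B\gamma/(\ln 2\,(1+p\gamma/K))$, which is \emph{bounded} at $p=0$ (it equals $B\gamma/\ln 2$ when $K>0$ and tends to $0$ as $K\to 0$); the quantity that blows up is $\partial r/\partial K$ as $K\to 0^+$ with $p>0$. Your power-exchange perturbation can still be salvaged without the blow-up --- by strict concavity the marginal at $p=0$ strictly exceeds the donor's marginal at $p_{s,\ell'}>0$ --- but only once $K_{s,\ell}>0$ is already established, because at $K_{s,\ell}=0$ the marginal in $p$ vanishes and the exchange strictly loses rate. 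That is exactly the $(K,p)=(0,0)$ corner you acknowledge leaving open, so the argument does not close. The paper avoids the exchange argument altogether: its first (unlabelled) appendix lemma states that $K_{s_0,\ell_0}=0$ with $t_{\ell_0}>0$ forces $\lambda_{s_0}=0$, precisely by exploiting the divergence of $\partial r/\partial K$ in \eqref{eq: der K (proof)}; its contrapositive immediately yields $K_{s,\ell}>0$ for all $s\geq\overline{s}$ with $t_\ell>0$, and the positivity of $p_{s,\ell}$ then follows from \eqref{eq: der K (proof)}--\eqref{eq: der p (proof)} and complementary slackness. To repair your route, either import that lemma to dispose of the corner before perturbing the power, or carry out a joint perturbation in $(K,p)$ at the corner, whose feasibility you would additionally have to certify against the RB-sharing constraint when $\mu_\ell>0$.
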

			\begin{proof}
				From~\eqref{eq: sum lambda (proof)}, we know that not all~$\lambda_s=0$. By~\Cref{lemma: lower lambda zero}, the existence of such~$\overline{s}$ has to hold by taking the largest~$s$  such that $\lambda_s=0$.
				 As for the consequence on~$K_{s,\ell}$, applying~\eqref{eq: der K (proof)} on $\ell\geq\overline{s}$ and any $s\geq\overline{s}$ s.t. $t_{\ell}>0$ yields $\mu_{\ell}>0$. The two statements follow by complementary slackness and~\eqref{eq: der K (proof)}. 
	\vspace{-.15cm}
			\end{proof}
			From~\Cref{lemma: property on strict positivity K}, for~$s\geq\overline{s}$, for any~$\ell\geq \overline{s}$, $K_{s,\ell}>0$, therefore non-preemptive. For all~$s<\overline{s}$, and $\ell\in\{s,\ldots, \overline{s}\}$,  since~$\lambda_s=0$ and $\mu_{\ell}=\alpha_{s,\ell}$, so $\mu_{\ell}=0$ if $K_{s,\ell}>0$. Therefore, no constraint has to be active as long as they are feasible to the primal constraints. 
			We construct a non-preemptive and feasible $K_{s,\ell}$ and $p_{s,\ell}$ for $s<\overline{s}$. 
			We have proven the non-preemptiveness of an optimal solution.
			
			\vspace{-.15cm}
			\begin{lemma}
				If $\overline{s}$ from~\Cref{lemma: property on strict positivity K} satisfies $\overline{s}\geq 2$, then $\gamma=0$ and $K_{\mathrm{HB},\ell}=0$ for~$\ell\geq \overline{s}$.
				\label{lemma: s geq 2}
				\vspace{-.1cm}
			\end{lemma}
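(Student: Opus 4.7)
The plan is to exploit the KKT stationarity conditions~\eqref{eq: sum lambda (proof)}--\eqref{eq: der p (proof)} derived in the surrounding proof, combined with complementary slackness and the two preceding lemmas.

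First, since $\overline{s}\geq 2$, the definition of $\overline{s}$ forces $\lambda_1 = 0$. Invoking~\Cref{lemma: lower lambda zero} with $s_0 = 1$ supplies an index $\ell_1\geq 1$ such that $\mu_{\ell_1} = 0$ and $t_{\ell_1} > 0$. Substituting these two facts into~\eqref{eq: der K_HB (proof)} yields $0 = \beta_{\ell_1} + \gamma\, t_{\ell_1}$. Since all dual variables are non-negative and $t_{\ell_1}>0$, this equation can only hold when $\gamma = 0$, which dispatches the first half of the claim and, as a by-product, simplifies~\eqref{eq: der K_HB (proof)} to $\mu_\ell = \beta_\ell$ globally.

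For the second half, the plan is to show that $\mu_\ell > 0$ strictly whenever $\ell\geq\overline{s}$ and $t_\ell > 0$. For such an $\ell$, note that $\overline{s}\leq\ell$, so~\Cref{lemma: property on strict positivity K} applies with $s = \overline{s}$ and yields $K_{\overline{s},\ell} > 0$. Complementary slackness then gives $\alpha_{\overline{s},\ell} = 0$, and plugging into~\eqref{eq: der K (proof)} produces $\mu_\ell = \lambda_{\overline{s}}\, t_\ell\, \partial r_{\overline{s},\ell}/\partial K > 0$, where positivity is guaranteed by $\lambda_{\overline{s}} > 0$, $t_\ell > 0$, and strict monotonicity of the rate expression in the resource allocation. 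Combining with $\mu_\ell = \beta_\ell$, this forces $\beta_\ell > 0$, and complementary slackness on the non-negativity constraint of $K_{HB,\ell}$ then delivers $K_{HB,\ell} = 0$. Sessions with $t_\ell = 0$ contribute no actual RB usage and can be assigned $K_{HB,\ell} = 0$ without altering feasibility or the optimum.

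I do not anticipate any serious obstacle: the argument is a direct chase through the stationarity conditions that the surrounding proof has already set up. The only subtlety worth flagging is that the conclusion requires a \emph{strict} inequality $\mu_\ell > 0$ rather than mere non-negativity, so one must be careful to verify that each factor in the chain (the rate derivative $\partial r_{\overline{s},\ell}/\partial K$, the multiplier $\lambda_{\overline{s}}$, and the duration $t_\ell$) is strictly positive before invoking complementary slackness to zero out $K_{HB,\ell}$.
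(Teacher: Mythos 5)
Your proposal is correct and follows essentially the same route as the paper's own proof: both derive $\gamma=0$ from the existence of a session $\ell_1$ with $\mu_{\ell_1}=0$ and $t_{\ell_1}>0$ (guaranteed by $\lambda_1=0$ via the preceding lemmas and stationarity condition~\eqref{eq: der K_HB (proof)}), and then conclude $K_{HB,\ell}=0$ for $\ell\geq\overline{s}$ from $\beta_\ell=\mu_\ell>0$ and complementary slackness. Your version is simply a more explicit write-up, and the extra care you take with the $t_\ell=0$ case and the strict positivity of each factor in $\mu_\ell=\lambda_{\overline{s}}\,t_\ell\,\partial r_{\overline{s},\ell}/\partial K$ is a welcome tightening of the paper's terser argument.
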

			\begin{proof}
				Since~$\overline{s}\geq 2$, there exists $s$ s.t. $\lambda_s=0$. It has to hold that there exists~$\mu_{\ell}=0$. Therefore, $\gamma=0$.
				Then for $\ell$ s.t. $\mu_{\ell}>0$, $\beta_{\ell}>0$ holds. The statement holds by complementary slackness.
					\vspace{-.15cm}
			\end{proof}
			The~\Cref{lemma: s geq 2} indicates that $\overline{s}\geq 2$ happens only when all constraints w.r.t. $s,\ell< \overline{s}$ and the HB traffic constraint can already be satisfied with $K_{\mathrm{HB},\ell}=0$ for~$\ell\geq \overline{s}$.

			\noindent\textbf{Non-idle}:
			We prove for the uplink, the downlink can be proven similarly.
			Suppose there exists an optimal solution that has idle communication session time in the uplink phase, i.e., for some~$s_0,\ell_0\in\setS$, $K_{s_0,\ell_0}^{(\mathrm{ul})} =p_{s_0,\ell_0}^{(\mathrm{ul})} =0$ with $t_{\ell_0}^{(\mathrm{ul})}>0$. Consider a point with~$T_{\mathrm{idle}}' = T_{\mathrm{idle}}+t_{\ell_0}^{(\mathrm{ul})}$ and $t_{\ell_0}^{(ul)'} = 0$. Less energy is being consumed. A strictly better point can be easily found.

			\vspace{-.35cm}
		\section{Proof~\Cref{theorem: HB reformulation}}
		\label{appendix: proof HB reformulation}
		\vspace{-.1cm}
					During uplink or downlink, given $K_e$ number of RB for each user $e$ and a slack variable~$K_{\mathrm{HB}}$ the total amount of RBs that can be used by HB traffic, a subproblem w.r.t. $K_e$ can be isolated :
			\begin{subequations}
				\begin{align}
					\max_{\{K_e\}_{\forall e\in\setE}}\quad & \min_{e\in\setE} r_e = K_eB\log(1+\gamma_e)\\
					\mathrm{s.t.}\quad& \sum_{e\in\setE} K_e \leq K_{\mathrm{HB}};\ (\forall e\in\setE)\ K_e\geq 0.
                    \label{cons: resource sharing (HB pb)}
				\end{align}
			\end{subequations}
			The epigraph form is written as:
			\begin{subequations}
				\begin{align}
					\max_{K_e, \forall e\in\setE, u}\quad & u \\
					\mathrm{s.t.}\quad& 
					\eqref{cons: resource sharing (HB pb)},\ (\forall e\in\setE)\ K_eB\log(1+\gamma_e) \geq u.
				\end{align}
			\end{subequations}
			By writing down the Lagrangian and finding the Lagrange dual function, the dual problem has the following form:
			\begin{subequations}
				\begin{align}
					\min_{\mu}\quad & + \mu K_{\mathrm{HB}} \\
					\mathrm{s.t.}\quad& \mu \geq 0,\ \mu \sum_e\frac{1}{B\log(1+\gamma_e)}=1.
				\end{align}
			\end{subequations}
			The dual optimum is attained at $\mu^* = {\frac{1}{a}}\defeq 1/ \sum_e\frac{1}{B\log(1+\gamma_e)}$ equal to $K_{\mathrm{HB}}/ \sum_e\frac{1}{B\log(1+\gamma_e)}$. The epigraph equivalent problem is \ac{LP}, therefore by strong duality, the result can be derived.
			
		\vspace{-.4cm}
		\section{Proof~\Cref{theorem: convex TP}}
		\label{appendix: proof convex problem}
		\vspace{-.2cm}
			The objective is linear  (sum of variables).
			The constraints are either affine or can be proven convex by noticing the following classic convex properties:
			\begin{itemize}[leftmargin=11pt]
				\item $(p,K)\mapsto r^{(\mathrm{ul})}_{s,\ell}(p,K)$ is a joint concave function since it is the perspective function of~$x\mapsto\log(1+ax)$ for a certain constant~$a>0$.
				\item Square root of concave functions remains concave.
				\item Convex function composed with affine function is convex.
			\end{itemize}
	\section{{Proof~\Cref{theorem: algo convergence}}}
	\label{appendix: proof algo convergence}
	Such algorithm is referred to as successive convex approximation, sequantial convex programming or inner approximation algorithm. 
	
	It has been proven in~\cite[Lemma~2.2]{SCA_Beck_2010} that the algorithm produces non-increasing objective value sequence. Since the objective value is clearly lower bounded (by zero). The sequence converge (also supported by~\cite[Corollary~2.3]{SCA_Beck_2010}).
	
	The fact that it converges to a stationary point (or so-called KKT point) of the optimization problem has been shown in~\cite[Theorem~1]{SCA_MarksWright_1978}. 
	
	Now we have to show that all assumptions of the Theorem~1 holds.
	
	\begin{itemize}[leftmargin=11pt]
		\item 	The objective and (16), (18), (20), (24b), (24c) are convex. 
		\item The objective and all constraints are differentiable.
		\item The feasible region is clearly compact. All constraints are closed, and all variables except the time variables $t_{\ell'}^{(\mathrm{dl)}}$, $t_{\ell}^{(\mathrm{ul)}}$, and $T_{\mathrm{idle}}$ are bounded. These variables have their sum to be minimized. Take $T^{(0)}$ as the initial feasible point objective value. We can bound them as follows $t_{\ell'}^{(\mathrm{dl)}}, t_{\ell}^{(\mathrm{ul)}}, T_{(\mathrm{idle)}}\in[0, T^{(0)}]$ for any $\ell',\ell$.
		\item The approximated constraints and optimization problem are convex, proven in~\Cref{theorem: convex TP}.
		\item Three conditions on the approximation: majorization, tightness, and equal gradient at the equality point:
		\begin{enumerate}
			\item For the approximation~\eqref{eq: MM surrogate def}, the majorization and equality is clear. Let's prove that the gradients are equal at the equality point.
			For ease of notation, denote $a=p_{\sigma(s),\ell}^{(ul)}$ and $b=t_{\ell}^{(ul)}$ and $c=y_{s,\ell}^{(E)}=\hat{a}/\hat{b}$. We have   \begin{equation}
			\frac{\partial (ab)}{\partial a}(\hat{a},\hat{b}) = \hat{b},\quad \frac{\partial (ab)}{\partial b}(\hat{a},\hat{b}) = \hat{a}.
			\end{equation}
			and on the other side:
			\begin{equation}
		\hspace{-.5cm}	\frac{\partial}{\partial a}( \frac{a^2}{2c}  + \frac{b^2c}{2} )(\hat{a},\hat{b}) \!=\!  \frac{\hat{a}}{c} \!=\! \hat{b}, \, \frac{\partial}{\partial b}( \frac{a^2}{2c}  + \frac{b^2c}{2} )(\hat{a},\hat{b}) \!=\!  \hat{b}c \!=\! \hat{a}, 
			\end{equation}
			We have therefore equal gradient at the equality point.
			\item For the quadratic transform~\eqref{eq: quadratic transform}, the wished property has not been proven in~\cite{Shen_FP_quadraticTransform_2018}. We will prove it for our specific case, but noted that the property also holds for arbitrary fractional programming. We need to prove that  at $y=\sqrt{X(\hat{x})}\hat{t}$, for any feasible $x$, 
			\begin{equation}
			X(x)t \geq 2y\sqrt{X(x)} - \frac{y^2}{t}.
			\end{equation}
			If $t=0$ the right hand side is not defined ($-\infty$). Now assume $t>0$,
			\begin{equation}\begin{aligned}
				& X(x) \geq 2y\frac{\sqrt{X(x)}}{t} - \frac{y^2}{t^2},\\
				\Leftrightarrow\quad &  -\frac{y^2}{t^2}  + 2y\frac{\sqrt{X(x)}}{t} -  X(x) \leq 0.      
			\end{aligned}
			\end{equation}
			The determinant of the second-order polynomial is:
			\begin{equation}
			\Delta = \frac{4X(x)}{t^2} - 4 \frac{X(x)}{t^2}=0.
			\end{equation}
			The majorization inequality holds therefore for all $x,t,y>0$ and has equality only at $y=\sqrt{X(\hat{x})}\hat{t}$. Now prove that the gradients are equal at the equality point.
			 Let's prove that \begin{equation}
			\nabla (X(x)t)|_{(\hat{x},\hat{t})} = \nabla \hat{g}(\hat{x},\hat{t},y).
			\end{equation}
			We have
			\begin{equation}
			\nabla (X(x)t)|_{(\hat{x},\hat{t})} = \begin{pmatrix}
				X'(\hat{x})\hat{t}\\
				X(\hat{x})
			\end{pmatrix}.
			\end{equation}
			On the other side:
			\begin{equation}
			\hspace{-.4cm}\nabla \hat{g}(\hat{x},\hat{t},y) \!=\!  \begin{pmatrix}
				\frac{y X'(\hat{x})}{\sqrt{X(\hat{x})}}\\
				\frac{y^2}{\hat{t}^2}
			\end{pmatrix} \!=\! \begin{pmatrix}
				X'(\hat{x})\hat{t}\\
				X(\hat{x})
			\end{pmatrix} \!=\! \nabla (X(x)t)|_{(\hat{x},\hat{t})}.
			\end{equation}
			We have therefore equal gradient at $(\hat{x}, \hat{t})$.
		\end{enumerate}
		The three conditions hold for both transforms. The inequality consists of sum of terms involving these transforms, all the properties can therefore be inherited to the constraints.
		\item All subproblems verify the Slater's conditions. The outline of the proof is as follows: In fact, the solutions of each step is a feasible point to the next subproblem, and is guaranteed to be strictly feasible at the approximated constraints by \Cref{lemma: approximated_non_active}; Then, we will prove that we can find a strictly feasible point from it by \Cref{lemma: linear independent implies strict feasible}.
		
		For each of notation, consider any constraint to be approximated as function $g$ (with constraint $g\leq 0$) and the approximation function as $\hat{g}:x\mapsto\hat{g}(x;x^{(k)})$ with $x^{(k)}$ the solution of the subproblem at $(k-1)$-th iteration, since the auxiliary variables $y\in\setY$ depends on the solution of the previous step. The function satisfies the following three conditions:
		\begin{itemize}
			\item \textit{Majorization surrogate}: $g(x)\leq \hat{g}(x; x^{(k)})$ for all $x\in\setF^{(k)}$, with $\setF^{(k)}$ the feasible set of the subproblem~$k$.
			\item \textit{Tight at $x^{(k)}$}: $g(x^{(k)}) = \hat{g}(x^{(k)}; x^{(k)})$ 
			\item \textit{The gradients are equal at $x^{(k)}$}: $\nabla g (x^{(k)}) = \nabla \hat{g}(x^{(k)}; x^{(k)})$.
		\end{itemize}
		In addition, we can easily verify that strict inequality can be achieved at all points except the equality points, i.e.,
		\begin{equation}
\forall x\in\setF^{(k)}\backslash \{x^{(k)}\}, g(x) < \hat{g}(x;x^{(k)}),
		\end{equation}
		
		 \begin{lemma}
			If $x^{(0)}$ is strictly feasible and that any constraint~$g,\hat{g}(\cdot,x^{(k)})$,  in addition, verifies the strictly inequality:
			\[
			\forall i=\ell+1,\ldots, m,\ \forall x\in\setF^{(k)}\backslash \{x^{(k)}\}, g(x) < \hat{g}(x;x^{(k)}),
			\]
			then $x^{(k)}$ obtained by Algorithm~1 for any $k\in\N$ is a strictly feasible point at this approximated constraint $g$. 
			\label{lemma: approximated_non_active}
		\end{lemma}
		\begin{proof}
			Let's prove by induction. The property clearly holds for $k=0$.\\
			Consider $k\in\N$, assume $x^{(k)}$ is a strictly feasible point of $(\setP^{(k)})$, let's prove that $x^{(k+1)}$ is a strictly feasible point of $(\setP^{(k+1)})$ of constraints $i=\ell+1,\ldots,m$.
			\\
			$x^{(k+1)}$ is a feasible point of $(\setP^{(k)})$ by definition. We have therefore
			\[
			\forall i=\ell+1\ldots,m,\quad \hat{g}(x^{(k+1)}; x^{(k)}) \leq 0.
			\]
			By the strict inequality, we have
			\[
			g(x^{(k+1)})< \hat{g}(x^{(k+1)}; x^{(k)}) \leq 0.
			\]
			We further have $g(x^{(k+1)}) = \hat{g}(x^{(k+1)}; x^{(k+1)}) < 0$. Induction proven, hence the statement holds.
		\end{proof}
		The condition holds for the proposed constraints convex approximation. Now we state the following lemma that guarantee the existence of a strictly feasible point.
		
		  \begin{lemma}
			For any point $x^*\in\setF$, denote the set of active constraints as $A(x^*) = \{i=1,\ldots,m \mid g_i(x^*)=0\}$. If $D\defeq (\nabla g_i(x^*))_{i\in A(x^*)}$ has linearly independent columns then there exists a strictly feasible point $\overline{x}$ of $\setF$.
			\label{lemma: linear independent implies strict feasible}
		\end{lemma}
		\begin{proof}
			The Gordan's lemma states that exactly one of the following statements has to happen: \begin{itemize}
				\item There exists $d\in\R^n$, s.t., $D^{\top}d <0$,
				\item There exists $y\neq 0$ and $y\geq 0$ (all elements greater than 0), s.t. $Dy =0$.
			\end{itemize}
			Since $D$ has linearly independent columns, there exists no nontrivial zero vectors such that $Dy=0$. Therefore, there has to exist $d\in\R^n$, s.t., $D^{\top}d <0$.
			
			We have therefore by definition of $D$, there exists $d\in\R^n$, \[
			\forall i\in A(x^*),\ \nabla g_i(x^*)^{\top}d<0.
			\]
			
			For each active constraint $i\in A(x^*)$, $g_i(x^*)=0$, and by Taylor's formula:
			\[
			g_i(x^*+\varepsilon d) = \varepsilon \underbrace{\nabla g_i(x^*)^{\top}d}_{<0} + o (\varepsilon).
			\]
			There exists therefore small enough $\varepsilon >0$, such that for any $i\in A(x^*)$, $g_i(x^*+\varepsilon d) < 0$.
			
			For all inactive constraints, by the continuity of $g_i$, there exists an open neighborhood of $x^*$, $V(x^*)$ such that $g_i(x)<0$ for any $x\in V(x^*)$. For small enough $\varepsilon^*$, $x^*+\varepsilon^* d \in V(x^*)$. Take the $\overline{\epsilon} = \min\{\varepsilon, \varepsilon^*\}$. We have found a strictly feasible point $x^*+\overline{\varepsilon}d$.
		\end{proof}
	All non-approximated constraints are affine, and each variable is subject to a nonnegativity bound and in exactly one other affine constraint. Moreover, for any of these remaining constraints to be active, at least one of its participating variables must be strictly positive.
	Under these conditions, the gradients of all active constraints are linearly independent.
		
		 To conclude, by Lemma~\ref{lemma: approximated_non_active}, the result from the previous iteration $x^{(k)}$ is strict (non active) at all approximated constraints involved in sequential convex approximation.
		All other constraints have their gradients linearly independent, by Lemma~\ref{lemma: linear independent implies strict feasible}, there exists a strict feasible point of the problem, therefore, the Slater's condition is satisfied for all $(\setT\setP_{\sigma}^{(k)})$.
	\end{itemize}
	To conclude, all requirements are satisfied, the algorithm is guaranteed to converge to a stationary point of the session-based optimization problem~$(\setP_{\sigma})$.
		\bibliographystyle{IEEEtran}
		\bibliography{related_works_final.bib}

% Generated by IEEEtran.bst, version: 1.14 (2015/08/26)
\begin{thebibliography}{10}
\providecommand{\url}[1]{#1}
\csname url@samestyle\endcsname
\providecommand{\newblock}{\relax}
\providecommand{\bibinfo}[2]{#2}
\providecommand{\BIBentrySTDinterwordspacing}{\spaceskip=0pt\relax}
\providecommand{\BIBentryALTinterwordstretchfactor}{4}
\providecommand{\BIBentryALTinterwordspacing}{\spaceskip=\fontdimen2\font plus
\BIBentryALTinterwordstretchfactor\fontdimen3\font minus
  \fontdimen4\font\relax}
\providecommand{\BIBforeignlanguage}[2]{{%
\expandafter\ifx\csname l@#1\endcsname\relax
\typeout{** WARNING: IEEEtran.bst: No hyphenation pattern has been}%
\typeout{** loaded for the language `#1'. Using the pattern for}%
\typeout{** the default language instead.}%
\else
\language=\csname l@#1\endcsname
\fi
#2}}
\providecommand{\BIBdecl}{\relax}
\BIBdecl

\bibitem{Zheng_conf_ICC_2025}
P.~Zheng, N.~Keshtiarast, P.~K. Bishoyi, Y.~Zhu, Y.~Hu, M.~Petrova, and
  A.~Schmeink, ``Efficient integration of distributed learning services in
  next-generation wireless networks,'' in \emph{{IEEE} Int. Conf. Commun.
  (ICC)}, 2025, pp. 4577--4582.

\bibitem{mcmahan_FL_2017}
H.~B. McMahan, E.~Moore, D.~Ramage, S.~Hampson, and B.~A.~Y. Arcas,
  ``Communication-efficient learning of deep networks from decentralized
  data,'' in \emph{AISTATS}, Fort Lauderdale, Florida, USA, 2017.

\bibitem{Gupta_2018_split_learning}
O.~Gupta and R.~Raskar, ``Distributed learning of deep neural network over
  multiple agents,'' \emph{J. Netw. and Computer Applications}, vol. 116, pp.
  1--8, 2018.

\bibitem{Li2020_reviewFL_applications}
L.~Li, Y.~Fan, M.~Tse, and K.-Y. Lin, ``A review of applications in federated
  learning,'' \emph{Computers \& Industrial Engineering}, vol. 149, p. 106854,
  2020.

\bibitem{hard_2019_FLmobilekeyboard}
A.~Hard, K.~Rao, R.~Mathews, S.~Ramaswamy, F.~Beaufays, S.~Augenstein,
  H.~Eichner, C.~Kiddon, and D.~Ramage, ``Federated learning for mobile
  keyboard prediction,'' 2019, arXiv:1811.03604.

\bibitem{LiTian_FL_hetero_survey_2020}
T.~Li, A.~K. Sahu, A.~Talwalkar, and V.~Smith, ``Federated learning:
  Challenges, methods, and future directions,'' \emph{IEEE Signal Process.
  Mag.}, vol.~37, no.~3, pp. 50--60, 2020.

\bibitem{Zhu_delayedGradientAveraging_2021}
L.~Zhu, H.~Lin, Y.~Lu, Y.~Lin, and S.~Han, ``Delayed gradient averaging:
  Tolerate the communication latency for federated learning,'' in \emph{Adv.
  Neural Inf. Process. Syst. (NeurIPS)}, vol.~34, 2021, pp. 29\,995--30\,007.

\bibitem{Banerjee_FLCompression_2021}
F.~Haddadpour, M.~M. Kamani, A.~Mokhtari, and M.~Mahdavi, ``Federated learning
  with compression: Unified analysis and sharp guarantees,'' in \emph{Proc.
  24th Int. Conf. Artificial Intelligence and Statistics}, vol. 130, 13--15 Apr
  2021, pp. 2350--2358.

\bibitem{Sattler_RobustCommEfficient_FL_Compression_2020}
F.~Sattler, S.~Wiedemann, K.-R. Müller, and W.~Samek, ``Robust and
  communication-efficient federated learning from non-i.i.d. data,'' \emph{IEEE
  Trans. Neural Netw. and Learning Syst.}, vol.~31, no.~9, pp. 3400--3413,
  2020.

\bibitem{Chiappa_FedPAQ_Quantization_2020}
A.~Reisizadeh, A.~Mokhtari, H.~Hassani, A.~Jadbabaie, and R.~Pedarsani,
  ``{FedPAQ}: A communication-efficient federated learning method with periodic
  averaging and quantization,'' in \emph{Proc. Int. Conf. on Artificial
  Intelligence and Statistics}, vol. 108, 26--28 Aug 2020, pp. 2021--2031.

\bibitem{Shlezinger_VectorQuantizationFL_2021}
N.~Shlezinger, M.~Chen, Y.~C. Eldar, H.~V. Poor, and S.~Cui, ``{UVeQFed}:
  Universal vector quantization for federated learning,'' \emph{{IEEE} Trans.
  Signal Process.}, vol.~69, pp. 500--514, 2021.

\bibitem{Xu_AsyncFL_Survey_2023}
C.~Xu, Y.~Qu, Y.~Xiang, and L.~Gao, ``Asynchronous federated learning on
  heterogeneous devices: A survey,'' \emph{Computer Science Review}, vol.~50,
  p. 100595, 2023.

\bibitem{Chen_FLoverlapComputationCommunication_2023}
R.~Chen, Q.~Wan, P.~Prakash, L.~Zhang, X.~Yuan, Y.~Gong, X.~Fu, and M.~Pan,
  ``Workie-{Talkie}: Accelerating federated learning by overlapping computing
  and communications via contrastive regularization,'' in \emph{Proc. of
  IEEE/CVF Int. Conf. Computer Vision (ICCV)}, October 2023, pp.
  16\,999--17\,009.

\bibitem{Lim_FL_MEC_Survey_2020}
W.~Y.~B. Lim, N.~C. Luong, D.~T. Hoang, Y.~Jiao, Y.-C. Liang, Q.~Yang,
  D.~Niyato, and C.~Miao, ``Federated learning in mobile edge networks: A
  comprehensive survey,'' \emph{{IEEE} Commun. Surv. \& Tut.}, vol.~22, no.~3,
  pp. 2031--2063, 2020.

\bibitem{ChenMingzhe_ConvTimeOptiFL_2021}
M.~Chen, H.~V. Poor, W.~Saad, and S.~Cui, ``Convergence time optimization for
  federated learning over wireless networks,'' \emph{{IEEE} Trans. Wireless
  Commun.}, vol.~20, no.~4, pp. 2457--2471, 2021.

\bibitem{Zhu_LatencyMinWFL_2024}
J.~Zhu, Y.~Shi, M.~Fu, Y.~Zhou, Y.~Wu, and L.~Fu, ``Latency minimization for
  wireless federated learning with heterogeneous local model updates,''
  \emph{{IEEE} Internet Things J.}, vol.~11, no.~1, pp. 444--461, 2024.

\bibitem{Shi_DeviceSchedulingResourceAllocationWFL_2021}
W.~Shi, S.~Zhou, Z.~Niu, M.~Jiang, and L.~Geng, ``Joint device scheduling and
  resource allocation for latency constrained wireless federated learning,''
  \emph{{IEEE} Trans. Wireless Commun.}, vol.~20, no.~1, pp. 453--467, 2021.

\bibitem{Liu_TrainingTimeMinFEELBandwidth_2022}
P.~Liu, J.~Jiang, G.~Zhu, L.~Cheng, W.~Jiang, W.~Luo, Y.~Du, and Z.~Wang,
  ``Training time minimization in quantized federated edge learning under
  bandwidth constraint,'' in \emph{{IEEE } Wireless Commun. and Netw. Conf.
  (WCNC)}, 2022, pp. 530--535.

\bibitem{Luo_ClientSamplingFL_2022}
B.~Luo, W.~Xiao, S.~Wang, J.~Huang, and L.~Tassiulas, ``Tackling system and
  statistical heterogeneity for federated learning with adaptive client
  sampling,'' in \emph{IEEE INFOCOM}, 2022, pp. 1739--1748.

\bibitem{Xu_MultiFLServiceBandwidthAllocation_2022}
J.~Xu, H.~Wang, and L.~Chen, ``Bandwidth allocation for multiple federated
  learning services in wireless edge networks,'' \emph{{IEEE} Trans. Wireless
  Commun.}, vol.~21, no.~4, pp. 2534--2546, 2022.

\bibitem{Vu_MinExcecutionTimeFLCellFreeMMIMO_2022}
T.~T. Vu, D.~T. Ngo, H.~Q. Ngo, M.~N. Dao, N.~H. Tran, and R.~H. Middleton,
  ``Joint resource allocation to minimize execution time of federated learning
  in cell-free massive {MIMO},'' \emph{{IEEE} Internet Things J.}, vol.~9,
  no.~21, pp. 21\,736--21\,750, 2022.

\bibitem{Chen_RelayWFLOptimization}
L.~Chen, L.~Fan, X.~Lei, T.~Q. Duong, A.~Nallanathan, and G.~K. Karagiannidis,
  ``Relay-assisted federated edge learning: Performance analysis and system
  optimization,'' \emph{{IEEE} Trans. Commun.}, vol.~71, no.~6, pp. 3387--3401,
  2023.

\bibitem{Tran_WFL_Optimization_2019}
N.~H. Tran, W.~Bao, A.~Zomaya, M.~N.~H. Nguyen, and C.~S. Hong, ``Federated
  learning over wireless networks: Optimization model design and analysis,'' in
  \emph{IEEE INFOCOM}, 2019, pp. 1387--1395.

\bibitem{Feng_MinMaxOptiHWFL_2022}
J.~Feng, L.~Liu, Q.~Pei, and K.~Li, ``Min-{Max} cost optimization for efficient
  hierarchical federated learning in wireless edge networks,'' \emph{IEEE
  Trans. Parallel and Distributed Syst.}, vol.~33, no.~11, pp. 2687--2700,
  2022.

\bibitem{Nguyen_2022_LatencyOpti_blockchainFL}
D.~C. Nguyen, S.~Hosseinalipour, D.~J. Love, P.~N. Pathirana, and C.~G.
  Brinton, ``Latency optimization for blockchain-empowered federated learning
  in multi-server edge computing,'' \emph{{IEEE} J. Sel. Areas Commun.},
  vol.~40, no.~12, pp. 3373--3390, 2022.

\bibitem{Wu_2024_CostEfficientFLMultiCell}
T.~Wu, Y.~Qu, C.~Liu, H.~Dai, C.~Dong, and J.~Cao, ``Cost-efficient federated
  learning for edge intelligence in multi-cell networks,'' \emph{{IEEE/ACM}
  Trans. Netw.}, vol.~32, no.~5, pp. 4472--4487, 2024.

\bibitem{Poposka_2024_DelayMinFLWPT}
M.~Poposka, S.~Pejoski, V.~Rakovic, D.~Denkovski, H.~Gjoreski, and
  Z.~Hadzi-Velkov, ``Delay minimization of federated learning over wireless
  powered communication networks,'' \emph{{IEEE} Commun. Lett.}, vol.~28,
  no.~1, pp. 108--112, 2024.

\bibitem{Deng_LowLatencyWFL_DNNPartition_2023}
X.~Deng, J.~Li, C.~Ma, K.~Wei, L.~Shi, M.~Ding, and W.~Chen, ``Low-latency
  federated learning with {DNN} partition in distributed industrial {IoT}
  networks,'' \emph{{IEEE} J. Sel. Areas Commun.}, vol.~41, no.~3, pp.
  755--775, 2023.

\bibitem{Huang_WFLHybridLocalCentral_2023}
N.~Huang, M.~Dai, Y.~Wu, T.~Q.~S. Quek, and X.~Shen, ``Wireless federated
  learning with hybrid local and centralized training: A latency minimization
  design,'' \emph{{IEEE} J. Sel. Topics Sig. Process.}, vol.~17, no.~1, pp.
  248--263, 2023.

\bibitem{yang_FLenergy_2021}
Z.~Yang, M.~Chen, W.~Saad, C.~S. Hong, and M.~Shikh-Bahaei, ``Energy efficient
  federated learning over wireless communication networks,'' \emph{{IEEE}
  Trans. Wireless Commun.}, vol.~20, no.~3, pp. 1935--1949, 2021.

\bibitem{Xu_2021_LongTermFL}
J.~Xu and H.~Wang, ``Client selection and bandwidth allocation in wireless
  federated learning networks: A long-term perspective,'' \emph{{IEEE} Trans.
  Wireless Commun.}, vol.~20, no.~2, pp. 1188--1200, 2021.

\bibitem{Li_FLCommScheduling_2021}
D.~Li, Y.~Zhao, and X.~Gong, ``Quality-aware distributed computation and
  communication scheduling for fast convergent wireless federated learning,''
  in \emph{Int. Symp. Modeling and Optimization in Mobile, Ad hoc, and Wireless
  Netw. (WiOpt)}, 2021, pp. 1--8.

\bibitem{OzfaturaGundunz_OverlapFEEL_2021}
M.~E. Ozfatura, J.~Zhao, and D.~Gündüz, ``Fast federated edge learning with
  overlapped communication and computation and channel-aware fair client
  scheduling,'' in \emph{2021 IEEE 22nd Int. Workshop on Signal Process.
  Advances in Wireless Commun. (SPAWC)}, 2021, pp. 311--315.

\bibitem{Luo_CostEffectiveWFL_2021}
B.~Luo, X.~Li, S.~Wang, J.~Huang, and L.~Tassiulas, ``Cost-effective federated
  learning in mobile edge networks,'' \emph{{IEEE} J. Sel. Areas Commun.},
  vol.~39, no.~12, pp. 3606--3621, 2021.

\bibitem{Xu_2024_TDMA_WFL}
D.~Xu, ``Latency minimization for {TDMA}-based wireless federated learning
  networks,'' \emph{{IEEE} Trans. Veh. Technol.}, vol.~73, no.~9, pp.
  13\,974--13\,979, 2024.

\bibitem{Vu_2022_sessionBasedMIMOFL_designs}
T.~T. Vu, H.~Q. Ngo, M.~N. Dao, D.~T. Ngo, E.~G. Larsson, and T.~Le-Ngoc,
  ``Energy-efficient massive {MIMO} for federated learning: Transmission
  designs and resource allocations,'' \emph{IEEE Open J. Commun. Society},
  vol.~3, pp. 2329--2346, 2022.

\bibitem{3GPP_AIML_service}
``Study on {5G} system support for {AI/ML-based} services,'' 3GPP, Technical
  Specification Group Services and System Aspects, TR 23.700-80, Dec. 2022.

\bibitem{Ganjalizadeh_DeviceSelectionURLLCDistributedLearning_2022}
M.~Ganjalizadeh, H.~S. Ghadikolaei, D.~Gündüz, and M.~Petrova, ``{BSAC-CoEx}:
  Coexistence of {URLLC} and distributed learning services via device
  selection,'' 2025.

\bibitem{FL-nonFL_Farooq_2024}
M.~Farooq, T.~T. Vu, H.~Q. Ngo, and L.-N. Tran, ``Massive {MIMO} for serving
  federated learning and non-federated learning users,'' \emph{{IEEE} Trans.
  Wireless Commun.}, vol.~23, no.~1, pp. 247--262, 2024.

\bibitem{Li_FL_BandwidthSlicing_opticalNetworks_2020}
J.~Li, X.~Shen, L.~Chen, and J.~Chen, ``Bandwidth slicing to boost federated
  learning over passive optical networks,'' \emph{{IEEE} Commun. Lett.},
  vol.~24, no.~7, pp. 1492--1495, 2020.

\bibitem{Lin_CFLITCoexistingFL_2023}
Z.~Lin, H.~Liu, and Y.-J.~A. Zhang, ``{CFLIT}: Coexisting federated learning
  and information transfer,'' \emph{{IEEE} Trans. Wireless Commun.}, vol.~22,
  no.~11, pp. 8436--8453, 2023.

\bibitem{Castura_RatelessCode_2006}
J.~Castura and Y.~Mao, ``Rateless coding over fading channels,'' \emph{{IEEE}
  Commun. Lett.}, vol.~10, no.~1, pp. 46--48, 2006.

\bibitem{Chandrakasan_EnergyCMOS_1992}
A.~Chandrakasan, S.~Sheng, and R.~Brodersen, ``Low-power {CMOS} digital
  design,'' \emph{IEEE J. Solid-State Circuits}, vol.~27, no.~4, pp. 473--484,
  1992.

\bibitem{Zhai_DVFS_2004}
B.~Zhai, D.~Blaauw, D.~Sylvester, and K.~Flautner, ``Theoretical and practical
  limits of dynamic voltage scaling,'' in \emph{Proc. Design Automation Conf.},
  2004, pp. 868--873.

\bibitem{Tang_DVFS_GPU_2019}
Z.~Tang, Y.~Wang, Q.~Wang, and X.~Chu, ``The impact of {GPU} {DVFS} on the
  energy and performance of deep learning: an empirical study,'' in \emph{Proc.
  ACM Int. Conf. on Future Energy Syst.}, ser. e-Energy '19, New York, NY, USA,
  2019, p. 315–325.

\bibitem{Mao_MECEnergyModel_2016}
Y.~Mao, J.~Zhang, and K.~B. Letaief, ``Dynamic computation offloading for
  mobile-edge computing with energy harvesting devices,'' \emph{{IEEE} J. Sel.
  Areas Commun.}, vol.~34, no.~12, pp. 3590--3605, 2016.

\bibitem{PK_TGCN}
P.~K. Bishoyi and S.~Misra, ``Enabling green mobile-edge computing for
  {5G}-based healthcare applications,'' \emph{IEEE Trans. Green Commun. \&
  Netw.}, vol.~5, no.~3, pp. 1623--1631, 2021.

\bibitem{Zeng_energyCPUGPUFL_2021}
Q.~Zeng, Y.~Du, K.~Huang, and K.~K. Leung, ``Energy-efficient resource
  management for federated edge learning with {CPU-GPU} heterogeneous
  computing,'' \emph{{IEEE} Trans. Wireless Commun.}, vol.~20, no.~12, pp.
  7947--7962, 2021.

\bibitem{Li_SmartPC_2019}
L.~Li, H.~Xiong, Z.~Guo, J.~Wang, and C.-Z. Xu, ``{SmartPC}: Hierarchical pace
  control in real-time federated learning system,'' in \emph{IEEE Real-Time
  Syst. Symp. (RTSS)}, 2019, pp. 406--418.

\bibitem{Guo_BoFL_2022}
H.~Guo, H.~Gu, Z.~Yang, X.~Wang, E.~K. Lee, N.~Chandramoorthy, T.~Eilam,
  D.~Chen, and K.~Nahrstedt, ``{BoFL}: bayesian optimized local training pace
  control for energy efficient federated learning,'' in \emph{Proc. 23rd
  ACM/IFIP Int. Middleware Conf.}, New York, NY, USA, 2022, p. 188–201.

\bibitem{Shen_OFDM_allocation_2005}
Z.~Shen, J.~Andrews, and B.~Evans, ``Adaptive resource allocation in multiuser
  {OFDM} systems with proportional rate constraints,'' \emph{{IEEE} Trans.
  Wireless Commun.}, vol.~4, no.~6, pp. 2726--2737, 2005.

\bibitem{Zeng_CompletionTimeUAV_2018}
Y.~Zeng, X.~Xu, and R.~Zhang, ``Trajectory design for completion time
  minimization in {UAV}-enabled multicasting,'' \emph{{IEEE} Trans. Wireless
  Commun.}, vol.~17, no.~4, pp. 2233--2246, 2018.

\bibitem{Zhan_CompletionTimeUAV_2019}
C.~Zhan and Y.~Zeng, ``Completion time minimization for multi-{UAV}-enabled
  data collection,'' \emph{{IEEE} Trans. Wireless Commun.}, vol.~18, no.~10,
  pp. 4859--4872, 2019.

\bibitem{Yuan_CompletionTimeUAV_2023}
X.~Yuan, Y.~Hu, J.~Zhang, and A.~Schmeink, ``Joint user scheduling and {UAV}
  trajectory design on completion time minimization for {UAV}-aided data
  collection,'' \emph{{IEEE} Trans. Wireless Commun.}, vol.~22, no.~6, pp.
  3884--3898, 2023.

\bibitem{Im_flowtimeLknorm_2015}
S.~Im, J.~Kulkarni, and B.~Moseley, ``Temporal fairness of round robin:
  Competitive analysis for {Lk}-norms of flow time,'' in \emph{Proc. 27th ACM
  Symp. Parallelism in Alg. \& Architectures (SPAA)}, New York, NY, USA, 2015,
  p. 155–160.

\bibitem{Xu_JobSchedulingResourcePacking_2021}
H.~Xu, Y.~Liu, and W.~C. Lau, ``Optimal job scheduling with resource packing
  for heterogeneous servers,'' \emph{{IEEE/ACM} Trans. Netw.}, vol.~29, no.~4,
  pp. 1553--1566, 2021.

\bibitem{luangsomboon2023burstiness}
N.~Luangsomboon, F.~Fazel, J.~Liebeherr, A.~Sobhani, S.~Guan, and X.~Chu, ``On
  the burstiness of distributed machine learning traffic,'' 2023,
  arXiv:2401.00329.

\bibitem{Shen_FP_quadraticTransform_2018}
K.~Shen and W.~Yu, ``Fractional programming for communication systems—{Part}
  {I}: Power control and beamforming,'' \emph{{IEEE} Trans. Signal Process.},
  vol.~66, no.~10, pp. 2616--2630, 2018.

\bibitem{cifar}
A.~Krizhevsky, ``Learning multiple layers of features from tiny images,'' Tech.
  Rep., 2009.

\bibitem{Do_2022_DRLUAVFL_FLcomputEnergyRef}
Q.~V. Do, Q.-V. Pham, and W.-J. Hwang, ``Deep reinforcement learning for
  energy-efficient federated learning in {UAV}-enabled wireless powered
  networks,'' \emph{IEEE Commun. Letters}, vol.~26, no.~1, pp. 99--103, 2022.

\bibitem{hsu_dirichlet_2019}
T.-M.~H. Hsu, H.~Qi, and M.~Brown, ``Measuring the effects of non-identical
  data distribution for federated visual classification,'' 2019,
  arXiv:1909.06335.

\bibitem{Leith_FlexiblePF_2013}
A.~Leith, M.-S. Alouini, D.~I. Kim, X.~Shen, and Z.~Wu, ``Flexible
  proportional-rate scheduling for {OFDMA} system,'' \emph{{IEEE} Trans. Mobile
  Comput.}, vol.~12, no.~10, pp. 1907--1919, 2013.

\bibitem{SCA_Beck_2010}
A.~Beck, B.-T. Aharon, and T.~Luba, ``A sequential parametric convex
  approximation method with applications to nonconvex truss topology design
  problems,'' \emph{J. Global Optimization}, vol.~47, pp. 29--51, 2010.

\bibitem{SCA_MarksWright_1978}
B.~R. Marks and G.~P. Wright, ``A general inner approximation algorithm for
  nonconvex mathematical programs,'' \emph{Operations Research}, vol.~26,
  no.~4, pp. 681--683, 1978.

\end{thebibliography}

	\end{document}